%% file: main.tex
\documentclass[11pt]{article}
\input{macros}
\title{Matrix Samplers and Quantum-Proof Extractors: Their Correspondence and Almost-Optimal Constructions}
\author{
Kuan Cheng\thanks{CFCS, School of Computer Science, Peking University. ckkcdh@pku.edu.cn.}
\and
Ruiyang Wu\thanks{CFCS, School of Computer Science, Peking University. wuruiyang@stu.pku.edu.cn.}
}
\date{}
\hypersetup{pdftitle={Matrix Samplers and Quantum-Proof Extractors: Their Correspondence and Almost-Optimal Constructions},pdfauthor={Kuan Cheng and Ruiyang Wu},pdfsubject={Quantum-proof extraction, strong matrix sampling, and randomness complexity}}
\begin{document}
\maketitle

\input{abstract}
\input{intro}

\input{preli}
\input{extractor_sampler_correspondence}
\input{quantum_extractor}
\input{seed_generation}
\input{high_entropy}
\input{matrix_samplers}
%\input{limitations}
\input{ack}
\input{refer}
\appendix
% The complete pre-pruning source is preserved in retained/pre_pruning/.
% It is intentionally not part of this compilation.
% Supplementary obstruction and expander-walk proofs are in
% retained/appendices_20261025/ and are not compiled.
\input{appendices/smooth_min_entropy}
\input{appendices/sampling_extraction_tools}
\end{document}

%% file: macros.tex
\usepackage[margin=1in]{geometry}
\usepackage{amsmath,amssymb,amsthm,mathtools,booktabs,array,needspace}
\usepackage{xcolor}
\usepackage[pagebackref=true,colorlinks,urlcolor=blue,linkcolor=purple,citecolor=blue,bookmarks,bookmarksopen,bookmarksnumbered]{hyperref}
\usepackage[capitalise,nameinlink]{cleveref}
\usepackage{titlesec}
\titleformat{\subsubsection}[runin]
{\normalfont\normalsize\bfseries}
{\thesubsubsection}{1em}{}
\numberwithin{equation}{section}
\newtheorem{theorem}{Theorem}[section]
\newtheorem{lemma}[theorem]{Lemma}
\newtheorem{corollary}[theorem]{Corollary}
\newtheorem*{claim}{Claim}
\newtheorem*{correspondencerestatement}{Theorem~\ref{thm:correspondence}}
\theoremstyle{definition}
\newtheorem{definition}[theorem]{Definition}

\newcommand{\E}{\mathbb E}
\newcommand{\F}{\mathbb F}
\newcommand{\Tr}{\operatorname{Tr}}
\newcommand{\Ext}{\operatorname{Ext}}
\newcommand{\Samp}{\operatorname{Samp}}
\newcommand{\hmin}{H_{\min}}
\newcommand{\pg}{p_{\mathrm{guess}}}
\newcommand{\cp}{\operatorname{cp}}
\newcommand{\dist}{\Delta}
\newcommand{\poly}{\operatorname{poly}}
\newcommand{\supp}{\operatorname{supp}}
\newcommand{\norm}[1]{\left\lVert #1\right\rVert}
\newcommand{\ket}[1]{|#1\rangle}
\newcommand{\bra}[1]{\langle #1|}

%% file: abstract.tex
\begin{abstract}
We establish an equivalence between matrix samplers and quantum-proof randomness extractors. The classical sampler--extractor equivalence is established by a seminal work of Zuckerman (Random Structures \& Algorithms, 1997).

Using this equivalence, we construct explicit strong matrix samplers with optimal randomness overhead and near-optimal sample complexity via constructing quantum-proof extractors. %Our sampler estimates the average of a matrix-valued function $f:\{0,1\}^m\to\mathbb C^{D\times D}$ with $\norm{f(x)}\leq1$. 
Specifically for matrix dimension $D$, accuracy $\varepsilon$ and failure probability $\delta\leq\varepsilon$, our matrix sampler uses
\[
 m+O_\alpha(\log(D/\delta))\text{ random bits and }
 O_\alpha\!\left(\varepsilon^{-2-\alpha}\log^{1+\alpha}(D/\delta)\right)\text{ samples},
\]
for any constant $\alpha>0$. 
%Compared with the matrix sampler of Xun and Zuckerman (CCC 2025), we remove the extra $\log\log D$ factor in the dimension-dependent randomness.
Compared with the matrix sampler of Xun and Zuckerman (CCC 2025), our construction has improvements in two aspects. Under the same optimal random seed length, we can replace the $\poly D$ dependence in the number of samples by $O(\log^{1-\gamma} D)$.
Under the same scale of sample size, we can remove their $\log \log D$ factor in the length of the random seed. 
Our underlying strong quantum-proof extractor has output length $n-O_\alpha(a+\log(1/\eta))$ and seed length at most $(1+\alpha)\log a+(2+\alpha)\log(1/\eta)+O_\alpha(1)$ for $n$-bit sources with conditional min-entropy $n-a$ and error $\eta$. 

We also obtain strong quantum-proof extractors for every constant entropy rate, for any error as small as $2^{-\Omega(n)}$, having optimal seed length $O(\log(n/\eta))$ and output length $\Omega(n)$, improving previous constructions.
\end{abstract}

%% file: intro.tex
\section{Introduction}
Sampling estimates the average of a function from its values at a small number of points. Independent samples give sharp concentration bounds, but require fresh random bits for every evaluation. A randomness-efficient sampler seeks the same accuracy and confidence using samples generated from a much shorter random string. A challenge is to reduce the randomness without substantially increasing the number of samples.

%Sampling for scalar-valued functions are closely connected to randomness extraction ~\cite{Zuc97}. This connection underlies samplers with nearly optimal parameters~\cite{RVW00}. Matrix sampling and extraction in the presence of quantum side information are two natural extensions of the classical theory. They motivate structural questions about the corresponding notions of sampling and extraction, as well as quantitative questions about how efficiently these objects can be constructed.

\paragraph{From scalar to matrix sampling.}
For $0<\varepsilon,\delta\leq1/2$, consider a function $f:\{0,1\}^m\to[0,1]$ whose mean we wish to estimate to within additive error $\varepsilon$, by a randomized procedure with failure probability at most $\delta$. The Chernoff bound shows that $O(\varepsilon^{-2}\log(1/\delta))$ independent uniform samples suffice. Generating these samples independently, however, costs $m$ random bits per sample. An averaging sampler can replace independent samples with correlated samples whose empirical average satisfies the same guarantee~\cite{BellareR94,Goldreich}.

\Needspace{9\baselineskip}
\begin{definition}[Averaging samplers~\cite{BellareR94}]\label{def:averaging-sampler}
Let $\Samp:\{0,1\}^r\times[t]\to\{0,1\}^m$, where $[t]=\{1,\ldots,t\}$. Let $R$ be uniform on $\{0,1\}^r$, and let $Z$ be uniform on $\{0,1\}^m$. The map $\Samp$ is a $(\delta,\varepsilon)$ averaging sampler if, for every $f:\{0,1\}^m\to[0,1]$,
\[
 \Pr_R\!\left[\left|\frac1t\sum_{i=1}^t f(\Samp(R,i))-\E_Z f(Z)\right|>\varepsilon\right]\leq\delta.
\]
\end{definition}
The main parameters here are the randomness complexity $r$ and the sample complexity $t$. Classical approaches include limited independence, expander walks, and iterated sampling~\cite{CG89,AKS87,IZ89,Gill98,Healy08,BellareR94}. Goldreich's survey~\cite{Goldreich} gives an account of these methods and distinguishes averaging samplers from general samplers, such as median-of-averages constructions~\cite{BGG93}, that may use a different estimator; see also Canetti, Even, and Goldreich~\cite{CEG95} for lower bounds. Zuckerman~\cite{Zuc97} and Reingold, Vadhan, and Wigderson~\cite{RVW00} used the connection to extractors to approach optimal randomness complexity. More recently, for $\delta<\varepsilon$, Xun and Zuckerman~\cite{XZ} achieved $m+O(\log(1/\delta))$ random bits with sample count $O_\alpha((\varepsilon^{-2}\log(1/\delta))^{1+\alpha})$ for every fixed $\alpha>0$.

There is a matrix analogue of the sampling task for estimating scalar functions. Specifically the matrix analogue asks us to estimate the mean of a function $f:\{0,1\}^m\to\mathbb C^{D\times D}$, where $\norm{f(z)}\leq1$, and to measure the error in operator norm, $\norm A=\max_{\|v\|_2=1}\|Av\|_2$. Thus the approximation must be accurate in every direction simultaneously. Matrix concentration inequalities, applied through Hermitian dilation for general complex matrices, show that $O(\varepsilon^{-2}\log(2D/\delta))$ independent samples suffice, so the dimension enters only logarithmically~\cite{AW,Tropp12}. Such bounds are basic tools in spectral graph theory, numerical linear algebra, and quantum information. Their derandomization motivates the study of matrix (averaging) samplers~\cite{WX08,GLSS,XZ}.

\Needspace{14\baselineskip}
\begin{definition}[Matrix (Averaging) samplers~\cite{XZ}]\label{def:sampler}\label{def:strong-sampler}
Let $\Samp:\{0,1\}^r\times[t]\to\{0,1\}^m$. Let $R$ be uniform on $\{0,1\}^r$, and let $Z$ be uniform on $\{0,1\}^m$.

The map $\Samp$ is a $D$-dimensional $(\delta,\varepsilon)$ matrix sampler if, for every $f:\{0,1\}^m\to\mathbb C^{D\times D}$ satisfying $\norm{f(z)}\leq1$ for every $z$,
\[
 \Pr_R\!\left[\left\lVert\frac1t\sum_{i=1}^t f(\Samp(R,i))-\E_Z f(Z)\right\rVert>\varepsilon\right]\leq\delta.
\]
The map $\Samp$ is a \emph{strong} $D$-dimensional $(\delta,\varepsilon)$ matrix sampler if, for every sequence $f_1,\ldots,f_t:\{0,1\}^m\to\mathbb C^{D\times D}$ satisfying $\norm{f_i(z)}\leq1$ for every $i,z$,
\[
 \Pr_R\!\left[\left\lVert\frac1t\sum_{i=1}^t\bigl(f_i(\Samp(R,i))-\E_Z f_i(Z)\bigr)\right\rVert>\varepsilon\right]\leq\delta.
\]
We sometimes use the term \emph{ordinary matrix sampler} to distinguish a matrix sampler from a strong matrix sampler.
\end{definition}
We call $\Samp$ explicit if the map $R\mapsto(\Samp(R,1),\ldots,\Samp(R,t))$ is computable in time polynomial in $r+mt$. The strong version allows the test to depend on the sample position. The challenge is to satisfy this requirement while keeping both the randomness complexity and the sample complexity small.

The matrix problem has developed along two complementary lines. Garg, Lee, Song, and Srivastava~\cite{GLSS} established a matrix Chernoff bound for expander walks, resolving the conjecture arising from the work of Wigderson and Xiao~\cite{WX08}. Subsequent extensions cover finite Markov chains~\cite{QWLPT20} and time-dependent matrix observables on general state spaces~\cite{NSW}. Song's recent work~\cite{Song26} sharpens the Hoeffding and Bernstein constants, including in the expander-walk setting. For a stationary walk on a fixed-degree expander with constant absolute spectral gap, these bounds give sample count $t=O(\varepsilon^{-2}\log(2D/\delta))$, matching the independent-sampling bound. Generating the standard walk uses $m+\Theta(t)$ random bits, so this construction retains an $\varepsilon^{-2}$ factor in its randomness overhead.

The other line seeks to reduce this overhead directly. Applying a scalar sampler entry by entry gives good randomness complexity, but introduces a polynomial dependence on $D$ in the sample count~\cite{XZ}. In the regime $\delta\leq\varepsilon$, Xun and Zuckerman~\cite{XZ} substantially improved the tradeoff through sampler composition: their constructions offer either nearly optimal sample complexity with an additional dimension-dependent randomness cost, or randomness overhead $O(\log(D/\delta))$ with a polynomial dependence on $D$ in the sample count. Their constructions also satisfy the strong sampling requirement~\cite[Remark~9]{XZ}. 
%Beyond these parameter tradeoffs, the classic extractor--sampler correspondence \cite{Zuc97} suggests a structural question about matrix sampling:
%\begin{quote}
%\emph{What type of randomness extractor corresponds to a matrix sampler?}
%\end{quote}

\paragraph{Randomness extraction and quantum side information.}
As shown by Zuckerman \cite{Zuc97}, there is a correspondence between averaging samplers and randomness extractors. 
A seeded extractor converts a weak random source into nearly uniform bits with the help of a short independent random seed. We measure the randomness of a source $X$ by its min-entropy $H_\infty(X)=-\log_2\max_x\Pr[X=x]$. Write $U_j$ for the uniform distribution on $j$ bits and $\dist(P,Q)=\frac12\sum_z|P(z)-Q(z)|$ for total variation distance.

\begin{definition}[Seeded extractors~\cite{NZ96}]\label{def:classical-extractor}
A function $\Ext:\{0,1\}^n\times\{0,1\}^d\to\{0,1\}^m$ is a $(k,\eta)$ extractor if, for every distribution $X$ on $\{0,1\}^n$ with $H_\infty(X)\geq k$ and an independent uniform seed $Y\sim U_d$,
\[
 \dist(\Ext(X,Y),U_m)\leq\eta.
\]
The map $\Ext$ is \emph{strong} if, for every $X$ with $H_\infty(X)\geq k$,
\[
 \dist((\Ext(X,Y),Y),U_{m+d})\leq\eta.
\]
\end{definition}

The classical goal is to maximize the output while minimizing the seed length, subject to the lower bounds of Radhakrishnan and Ta-Shma~\cite{RT00}. Universal hashing achieves entropy loss $O(\log(1/\eta))$, but standard hash families can require a seed linear in the input length~\cite{BBR88,ILL89,IZ89}. Trevisan~\cite{Tre01} connected extraction to pseudorandom generators and error-correcting codes. Lu, Reingold, Vadhan, and Wigderson~\cite{LRVW03} obtained parameters optimal up to constant factors for constant error. Guruswami, Umans, and Vadhan~\cite{GUV09} gave a condenser-based construction with seed length $O(\log(n/\eta))$ and output length at least $(1-\gamma)k$ for every fixed $\gamma>0$.

Further developments include hashing and block-source methods~\cite{CG88,NZ96,WZ99,SZ99,GW97}, error reduction and improving output length~\cite{RRV99,RRV02}, mergers and condensers \cite{NT99,RSW00,TUZ07,TU06,DW11,DKSS13,TU12,KT22}, and extraction in low-depth circuit classes~\cite{GVW15,CL18,CW24}. The surveys of Shaltiel~\cite{Shaltiel02} and Vadhan~\cite{Vadhan12} provide background on classical extractors and their connections to pseudorandomness.

For sampling, the relevant connection is particularly simple to describe: for a fixed source string $x$, enumerate all seeds $y$ and use the outputs $\Ext(x,y)$ as the sample points. A uniformly chosen source string becomes the sampler's random input, and a $d$-bit extractor seed gives $2^d$ samples. Zuckerman's correspondence~\cite{Zuc97} makes this relationship precise in both directions. Additional connections to list-decodable codes and entropy sampling appear in~\cite{TZ04,NZ96,Vad04}.
As matrix sampler is a natural extension of samplers, the classic extractor--sampler correspondence \cite{Zuc97} suggests a structural question about matrix sampling:
\begin{quote}
\emph{What type of randomness extractor corresponds to a matrix sampler?}
\end{quote}

On the other hand, quantum side information changes the extraction problem. In randomness extraction, the extracted string must be nearly uniform even to an adversary who has partial information about the source~\cite{BBR88,BBCM95}. If this information is stored in a quantum system, security must hold against measurements of that system~\cite{Ren05,LHL}. Although the extractor itself is a classical function, classical security does not in general imply quantum security~\cite{GKKRW08}.

We describe the source $X$ and the adversary's register $E$ by a classical--quantum (cq) state $\rho_{XE}$. There are two relevant models. The bounded-storage model places a lower bound on $H_\infty(X)$ and an upper bound on $\dim E$. The more general conditional min-entropy model directly bounds the uncertainty remaining about $X$ given $E$: $\hmin(X\mid E)=-\log_2\pg(X\mid E)$, where $\pg(X\mid E)$ is the optimal probability of guessing $X$ by measuring $E$~\cite{KRS}. We measure error in trace distance, $\dist(\rho,\sigma)=\frac12\norm{\rho-\sigma}_1$, and write $\tau_j=2^{-j}I_{2^j}$ for the uniform state on $j$ bits.

\begin{definition}[Quantum-proof extractors~\cite{DPVR}]\label{def:extractor}
A function $\Ext:\{0,1\}^n\times\{0,1\}^d\to\{0,1\}^m$ is a \emph{weak quantum-proof $(k,\eta)$ extractor} if, for every cq state $\rho_{XE}$ with $X\in\{0,1\}^n$ and $\hmin(X\mid E)\geq k$, and an independent uniform seed $Y\sim U_d$,
\begin{equation}\label{eq:weak-extraction}
 \dist(\rho_{\Ext(X,Y)E},\tau_m\otimes\rho_E)\leq\eta.
\end{equation}
The map $\Ext$ is \emph{strong} if, for every cq state $\rho_{XE}$ with $\hmin(X\mid E)\geq k$,
\begin{equation}\label{eq:strong-extraction}
 \dist(\rho_{\Ext(X,Y)YE},\tau_m\otimes\tau_d\otimes\rho_E)\leq\eta.
\end{equation}
We call $\Ext$ a \emph{weak $b$-qubit-proof $(k,\eta)$ extractor} if~\eqref{eq:weak-extraction} holds for every cq state $\rho_{XE}$ with $H_\infty(X)\geq k$ and $\dim E\leq2^b$. The \emph{strong} version requires~\eqref{eq:strong-extraction} instead~\cite[Definition~2.3]{DeVidick10}.
\end{definition}

Several classical constructions are known to remain secure in this setting. K{\"o}nig and Terhal~\cite{KT08} established quantum security for strong one-bit extractors, with a loss in parameters. Ta-Shma~\cite{TaShma11} and De and Vidick~\cite{DeVidick10} developed short-seed constructions against bounded quantum storage based on Trevisan's approach. De, Portmann, Vidick, and Renner~\cite{DPVR} extended this approach to conditional min-entropy. At inverse-polynomial error, their constructions give entropy loss $O(\log n)$ with a polylogarithmic seed, or a logarithmic seed with polynomially smaller output for polynomial min-entropy. Ben-Aroya and Ta-Shma~\cite{BATS} combined lossless condensing with high-entropy quantum extraction to obtain short-seed constructions in the bounded-storage model, and also studied conditional entropy rates above one half. Berta, Fawzi, and Scholz~\cite{BFS} provided an operator-space perspective, formulating classical and quantum-proof extraction through bounded and completely bounded norm conditions.

In light of the classical correspondence \cite{Zuc97} between averaging samplers and extractors, it is also natural to ask for a sampling interpretation of quantum-proof extractors. Such an interpretation would need to reflect the adversary's access to side information and, for strong extractors, to the public seed. This motivates the structural question:
\begin{quote}
\emph{Wat type of sampler corresponds to a quantum-proof extractor?}
\end{quote}

\paragraph{Improving constructions.}
Alongside these structural questions, the parameter tradeoffs described above motivate the search for better explicit constructions. For matrix samplers, the challenge is to reduce the randomness overhead while keeping the sample complexity close to that of independent sampling, with mild dependence on the matrix dimension. For quantum-proof extractors, the challenge is to shorten the seed while retaining a large output over a wide range of entropy and error parameters. These are simultaneous requirements: improving one resource at the expense of another need not improve the overall tradeoff. This leads to a third question:
\begin{quote}
\emph{How to improve explicit constructions of matrix samplers and quantum-proof extractors?}
\end{quote}

\input{our_results}

\input{technical_overview}

%% file: our_results.tex
\Needspace{12\baselineskip}
\subsection{Our Results}\label{sec:results}
Our first result establishes a correspondence between quantum-proof extractors and matrix samplers.

\begin{theorem}[Quantum extractor--matrix sampler correspondence]\label{thm:correspondence}\label{thm:witness}\label{prop:converse}\label{prop:strong-bridge}
\input{statements/correspondence}
\end{theorem}

Every strong (resp.\ weak) quantum-proof $(k,\eta)$ extractor is also a strong (resp.\ weak) $b$-qubit-proof $(k+b,\eta)$ extractor, by~\eqref{eq:chain}. Therefore, a strong (resp.\ weak) quantum-proof $(n-\log(2D/\delta),\varepsilon/2)$ extractor $F:\{0,1\}^n\times\{0,1\}^d\to\{0,1\}^m$ satisfies the hypothesis of Theorem~\ref{thm:correspondence}, part~(1). %Identifying $\{0,1\}^d$ with $[2^d]$, the same map is a strong (resp.\ ordinary) $D$-dimensional $(\delta,\varepsilon)$ matrix sampler $\Samp:\{0,1\}^n\times[2^d]\to\{0,1\}^m$.

\Needspace{8\baselineskip}
\paragraph{Strong matrix samplers} For matrix samplers, we attain the following explicit construction which has optimal random seed length and simultaneously has the sample complexity being at most the $(1+\alpha)$-th power of the optimal bound, for any constant $\alpha$. %(For the lower bound see Lemma~\ref{prop:strong-randomness-lower}).
\begin{theorem}[Strong matrix sampling]\label{thm:extractor-sampling}
For every constant $\alpha>0$, for every $m,D\geq1$ and $0<\delta\leq\varepsilon\leq1/2$, there is an explicit strong $D$-dimensional $(\delta,\varepsilon)$ matrix sampler $\Samp:\{0,1\}^r\times[t]\to\{0,1\}^m$ such that,
\begin{equation}\label{eq:main-sampler}
 r=m+O_\alpha(\Lambda),\qquad
 t=O_\alpha\!\left(\frac{\Lambda^{1+\alpha}\varepsilon^{-2-\alpha}}{\log(\Lambda/\varepsilon)}\right),
\end{equation}
where $\Lambda=\log_2(2D/\delta)$.
For every $x\in\{0,1\}^r$ and $i\in[t]$, $\Samp(x,i)$ is computable from $x$ and $i$ in time polynomial in $m+\Lambda+\log(1/\varepsilon)$.
\end{theorem}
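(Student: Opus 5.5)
We will derive Theorem~\ref{thm:extractor-sampling} from the correspondence of Theorem~\ref{thm:correspondence}, part~(1), by building an explicit strong quantum-proof extractor $\Ext:\{0,1\}^n\times\{0,1\}^d\to\{0,1\}^m$ for high conditional min-entropy. Its parameters are source length $n=r$, entropy deficiency $a=\log(2D/\delta)=\Lambda$, and error $\eta=\varepsilon/2$. As explained after Theorem~\ref{thm:correspondence}, a strong quantum-proof $(n-\Lambda,\varepsilon/2)$ extractor is a strong $\log D$-qubit-proof extractor with the entropy required in part~(1). The map $\Samp(x,i)=\Ext(x,i)$, with $[t]$ identified with $\{0,1\}^d$, is then a strong $D$-dimensional $(\delta,\varepsilon)$ matrix sampler with $t=2^d$.

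So the task reduces to a strong quantum-proof extractor with output length $m=n-O_\alpha(a+\log(1/\eta))$ and seed length $d\le(1+\alpha)\log a+(2+\alpha)\log(1/\eta)+O_\alpha(1)$. Setting $n=m+O_\alpha(\Lambda)$ gives $r=m+O_\alpha(\Lambda)$, since $\log(1/\varepsilon)\le\log(1/\delta)\le\Lambda$. The seed length gives $t=2^d=O_\alpha(\Lambda^{1+\alpha}\varepsilon^{-2-\alpha})$. To recover the extra $1/\log(\Lambda/\varepsilon)$ factor, I would set the internal constants slightly differently, for instance with seed $(1+\alpha')\log a+(2+\alpha')\log(1/\eta)$ for some $\alpha'<\alpha$. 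Then $2^d=\Lambda^{1+\alpha'}\varepsilon^{-2-\alpha'}$, which is smaller than the target by the factor $(\Lambda/\varepsilon)^{\alpha-\alpha'}$, and this dominates $\log(\Lambda/\varepsilon)$. Alternatively, one can pad $t$ to any value in the range by repeating blocks; averaging over blocks preserves the strong sampler property, since each block is itself a sampler with its own test functions.

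For the extractor, I would follow the high-entropy template (the sections on seed generation and high entropy). First use a short seed of $O(\log a+\log(1/\eta))$ bits. With a quantum-proof sampler (or two-universal hash) on a small portion, it extracts a few nearly uniform bits, which serve as the seed of a second-stage extractor applied to the rest. The second stage is a strong quantum-proof extractor for block-like sources obtained by splitting $X$ into a short prefix and a long remainder. The remainder still has conditional min-entropy deficiency about $a$ given $E$ and the prefix, by the chain rule~\eqref{eq:chain}. Since classical-seed composition is quantum-proof when each stage is strong, the errors add and the output is roughly the whole remainder. To reach near-optimal output $n-O(a)$, I would extract from a $\poly(a/\eta)$-bit prefix with a quantum-proof extractor having seed $(2+\alpha)\log(1/\eta)+(1+\alpha)\log a$. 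The output of that extractor is then used to seed a two-universal hash or XOR with a small-bias matrix on the rest. The seed lengths of these hashing steps are long but are supplied by extracted bits rather than by the public seed.

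The main obstacle is achieving the seed constant $(2+\alpha)\log(1/\eta)$ while remaining quantum-proof, which rules out generic Trevisan-type constructions with large constants. I would first try the leftover hash lemma against quantum side information on an $O(a+\log 1/\eta)$-bit condensed prefix. Here the seed is only affordable after an initial lossless-condensing step, which uses a quantum-proof analogue of GUV at rate $1+\alpha$. I expect verifying that condenser against conditional (smooth) min-entropy, using the appendix on smooth min-entropy, to be the delicate step.
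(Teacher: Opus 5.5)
Your reduction is correct. Take the extractor of Theorem~\ref{thm:deficiency} at deficiency $\lceil\Lambda\rceil$ and error $\varepsilon/2$, on $n=m+O_\alpha(\Lambda)$ bits so that its output can be truncated to $m$ bits. Corollary~\ref{cor:conditional-forward} then gives $r=m+O_\alpha(\Lambda)$, using $\log(1/\varepsilon)\leq\log(1/\delta)\leq\Lambda$, and $t=O_\alpha(\Lambda^{1+\alpha}\varepsilon^{-2-\alpha})$. Running the construction with $\alpha/2$ does absorb the $1/\log(\Lambda/\varepsilon)$ factor, since $(\Lambda/\varepsilon)^{\alpha/2}=\Omega_\alpha(\log(\Lambda/\varepsilon))$ for $\Lambda/\varepsilon\geq2$. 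Two small points: part~(1) of Theorem~\ref{thm:correspondence} needs $\log(2D)$-qubit-proofness, not $\log D$; and the block-repetition remark is unnecessary, because only an upper bound on $t$ is required.

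The paper instead applies Theorem~\ref{thm:moment-tradeoff} directly, obtaining $t=O(\varepsilon^{-2}\ell\,2^{\Lambda/\ell})$ and $r=m+O(\Lambda+\ell\log(\Lambda/\varepsilon))$. It takes $\ell=\min\{a,\lceil a/(\alpha h)\rceil\}$ with $h=\log a+\log(1/\varepsilon)$. The resulting factor $\ell=O_\alpha(\Lambda/\log(\Lambda/\varepsilon))$ in $t$ is where its logarithmic saving comes from. Citing either theorem completes the argument.

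What fails is your own plan for building that extractor, which is where all the difficulty lies. The target seed $(1+\alpha)\log a+(2+\alpha)\log(1/\eta)$ does not depend on $n$, so the problem is entirely about the leading constants.
\begin{itemize}
\item Your first stage extracts from a $\poly(a/\eta)$-bit prefix, which still has deficiency about $a$, using an extractor with exactly this seed length. That is the original problem again. A two-universal hash cannot play that role because, as you note, its seed is long.
\item The proposed way out is a lossless condenser in the style of Guruswami, Umans, and Vadhan that is secure against conditional quantum min-entropy. It is not available in the paper, and you do not prove it.
\item Even classically, at rate $1+\alpha$ its seed is $(1+1/\alpha)\log(nk/\varepsilon)+O(1)$. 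The coefficient on the error term is far above $2+\alpha$ for small $\alpha$, before any seed for the later stages is paid.
\end{itemize}

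The missing idea is the seed selection in Theorem~\ref{thm:moment-tradeoff}.
\begin{itemize}
\item Reserve $s=2\ell w$ source bits $z$ as the coefficients of a polynomial of degree $2\ell-1$, and form the $2\ell$-wise independent candidates $G(z,1),\ldots,G(z,t)$.
\item Let the public seed $j\in[t]$ pick the candidate that seeds the base extractor of Lemma~\ref{lem:deficiency-base}. That extractor may have seed $O(\log a+\log(1/\eta))$ with large constants, because its seed comes from the source rather than from public randomness.
\item For every test $\{M_{j,y}\}$, the operators $A_{x,j}(z)$ are mean-zero, $2\ell$-wise independent, and of norm at most one. Hence $\E_zB_x(z)^{2\ell}\preceq(3\ell/t)^\ell I$.
\item Jensen's inequality and the domination $\rho_E^{xz}\preceq2^{a-n-s}\omega_E$ then bound the cost of using $G(Z,J)$ in place of a uniform seed by $2^{a/(2\ell)}\sqrt{3\ell/t}$.
\end{itemize}
This gives public seed $\log t\leq2\log(1/\eta)+\log\ell+a/\ell+O(1)$ at output cost $O(a+\log(1/\eta)+\ell[\log a+\log(1/\eta)+\log\ell])$. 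That trade-off is what drives both Theorem~\ref{thm:deficiency} and the sampler.
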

Compared with Xun and Zuckerman~\cite[Theorem~3 and Remark~9]{XZ}, Theorem~\ref{thm:extractor-sampling} removes the extra $\log \log D$ factor from the dimension-dependent term in the random seed length. Comparing with their second construction~\cite[Theorem~4]{XZ}, our construction improves the sample complexity dependence on $D$ from a polynomial in $D$ to $O(\log (D/\delta))^{1+\alpha}$. See Table~\ref{tab:matrix-samplers} for detailed parameter comparison with previous works.

\input{tables/matrix_samplers}

\paragraph{The Quantum-proof extractor for constructing our matrix sampler}
Our matrix sampler is attained via constructing the following quantum-proof extractor and deploying our correspondence between quantum-proof extractors and matrix samplers.
\begin{theorem}\label{thm:deficiency}
%Fix a constant $\alpha>0$. For every integer $n\geq1$, entropy threshold $0\leq k\leq n$, and error $0<\eta\leq1/2$, there is an explicit strong quantum-proof $(k,\eta)$ extractor $\Ext:\{0,1\}^n\times\{0,1\}^d\to\{0,1\}^m$ such that, writing $a=n-k$,
For every constant $\alpha>0$, for every integer $n\geq1$, entropy $0\leq k\leq n$ and every error $0<\eta\leq1/2$, there is an explicit strong quantum-proof $(k,\eta)$ extractor $\Ext:\{0,1\}^n\times\{0,1\}^d\to\{0,1\}^m$ such that, writing $a=n-k$,
\begin{align}
 d&\leq(1+\alpha)\log a+(2+\alpha)\log(1/\eta)+O_\alpha(1),\label{eq:deficiency-seed}\\
 m&\geq n-O_\alpha(a+\log(1/\eta)).\label{eq:deficiency-output}
\end{align}
The map $\Ext$ is computable in polynomial time.
\end{theorem}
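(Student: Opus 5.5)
We build a high-entropy strong quantum-proof extractor in two stages. First, a seed-efficient ``block-source'' step reduces the seed from the trivial $O(n)$ of two-universal hashing to depend only on $a$ and $\eta$. Second, an output-preserving extraction step keeps almost all of the $n$ input bits.

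The baseline is the quantum leftover hash lemma of Renner (two-universal hashing is a strong quantum-proof extractor with entropy loss $2\log(1/\eta)$). On its own it gives output $k-2\log(1/\eta)$, which is what we want, but with seed length $\Theta(n)$. To shorten the seed we use the standard high-entropy trick. Split $X=(X_1,X_2)$ with $X_1$ of length $n_1=O_\alpha(a+\log(1/\eta))$ and $X_2$ the remaining $n-n_1$ bits. By the chain rule \eqref{eq:chain}, conditioned on $E$ and $X_2$, the short block $X_1$ retains min-entropy at least $n_1-a$. Conversely, $X_2$ given $E$ retains min-entropy at least $(n-n_1)-a$.

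The plan is then as follows.

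\textbf{Step 1.} Apply to $X_1$ (with side information $X_2E$) a strong quantum-proof extractor with short seed. Its input length is $n_1=O(a+\log(1/\eta))$, so even a seed polynomial in $\log n_1$ or linear in $\log n_1$ is fine. To get the precise seed bound $(1+\alpha)\log a+(2+\alpha)\log(1/\eta)$, we use almost-universal hashing built from $\epsilon$-biased or polynomial-evaluation families. Such families are quantum-proof by the quantum leftover hash lemma for $\delta$-almost two-universal families (Tomamichel--Schaffner--Smith--Renner). The seed length is then about $\log n_1+2\log(1/\eta)$ plus slack, and the output $Y'$ has length $\ell=O_\alpha(a+\log(1/\eta))$, comfortably more than needed.

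\textbf{Step 2.} Use $Y'$ as the seed of a second extractor applied to $X_2$. This second extractor is a strong quantum-proof extractor whose seed may be as long as $\ell$ and whose entropy loss is $O(a+\log(1/\eta))$. A concrete candidate is a Toeplitz-type or $\epsilon$-almost universal hash with seed $O(a+\log(1/\eta))$. This is possible because, for hashing $n-n_1$ bits down to $n-n_1-a-2\log(1/\eta)$ bits, the family only needs to compress by $O(a+\log(1/\eta))$ bits. Multiplication over $\F_{2^{n-n_1}}$ followed by truncation of the last $\ell$ bits gives such a universal family with seed of length $\ell$. Strongness of step 1 ensures that $Y'$ is close to uniform and independent of $(X_2,E)$ together with the outer seed. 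We then apply Step 2 and output $Y'$'s image together with the outer seed, which is public.

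The main obstacle is composition under quantum side information. The block-source composition needs $X_2$ to retain entropy given $E$ and the first-stage output, while $Y'$ must be uniform given $X_2E$. With cq states we handle this through the chain rule for $\hmin$ (losing at most $\ell$ bits) and the triangle inequality on trace distance. Smooth min-entropy (appendix) may be needed to keep the losses additive. A second delicate point is getting the exact constants $(1+\alpha)$ and $(2+\alpha)$ in the seed. For this, the first-stage hash must be $\eta^{\alpha}$-almost universal with seed $(1+\alpha)\log n_1+(2+\alpha)\log(1/\eta)$. Since $\log n_1\le\log a+\log\log(1/\eta)+O(1)$, the $\log\log$ term is absorbed into $\alpha\log(1/\eta)$. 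The total error is at most $2\eta$ after rescaling $\eta$ by a constant, and explicitness follows since each stage is finite-field arithmetic.
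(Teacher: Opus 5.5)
Your composition skeleton is sound and matches Lemma~\ref{lem:deficiency-base}: a short block $X_1$, the chain rule giving $\hmin(X_1\mid X_2E)\geq n_1-a$ and $\hmin(X_2\mid E)\geq n-n_1-a$, then replacing the generated seed by a uniform one and applying the triangle inequality. The gap is in the two components. You instantiate both with (almost) two-universal hashing, and such a family can never have a seed much shorter than its output. Suppose $\{h_w:\mathcal X\to\{0,1\}^{m'}\}_{w\in\mathcal W}$ is $\delta$-almost two-universal with $|\mathcal W|<1/\delta$. Then no pair $x\neq x'$ collides under any $w$, so every $h_w$ is injective and $|\mathcal X|\leq2^{m'}$. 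The quantum leftover hash lemma for such families gives error $\eta$ only when $2^{m'}\delta-1=O(\eta^2)$. Hence, whenever $|\mathcal X|>2^{m'}$, the seed has at least $m'-1$ bits.

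In Step~2, $m'=n-n_1-O(a+\log(1/\eta))$, so the seed must be linear in $n$. Concretely, truncating $wx$ over $\F_{2^{n-n_1}}$ uses the $(n-n_1)$-bit multiplier $w$ as its seed, not $\ell$ bits. In Step~1 the same bound shows that a hash with seed about $\log n_1+2\log(1/\eta)$ outputs at most about that many bits. That is far short of the $\Theta(a+\log(1/\eta))$ bits Step~2 consumes when $a\gg\log(1/\eta)$. The paper fills Step~2 with a non-hashing high-entropy extractor, the Goldreich--Wigderson expander-walk extractor, which has seed and loss $O(a+\log(1/\eta))$. It is only classically analyzed, so it is made quantum-proof by the Berta--Fawzi--Scholz lifting (Theorem~\ref{thm:high-entropy-lifting}). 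The lifting's $2^{u-k_0}$ error blow-up is absorbed by running the classical extractor at error $\eta2^{-O(a)}$. Step~1 is filled by the constant-rate extractor of Theorem~\ref{thm:constant-rate}, which is most of the paper's work and gives seed only $O(\log a+\log(1/\eta))$ with unspecified constants.

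Even granting such a base extractor, your proposal has no mechanism that produces the leading constants $1+\alpha$ and $2+\alpha$. Positing an ``$\eta^\alpha$-almost universal'' family with that seed is not a construction, and by the bound above it could not have the needed output anyway. The missing idea is the seed-selection step of Theorem~\ref{thm:moment-tradeoff}:
\begin{itemize}
\item Reserve $s=2\ell w$ more source bits $z$ as the coefficients of a polynomial of degree at most $2\ell-1$ over $\F_{2^w}$.
\item Evaluate it at $t$ fixed points to obtain $2\ell$-wise independent candidate seeds $G(z,1),\ldots,G(z,t)$ for the base extractor $\Ext_0$.
\item Let the public seed $j\in[t]$ choose one of them.
\end{itemize}
Replacing $G(Z,j)$ by a uniform seed costs at most $2^{a/(2\ell)}\sqrt{3\ell/t}$. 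This follows from three facts: the domination $\rho_E^{xz}\preceq2^{a-N}\omega_E$ given by~\eqref{eq:entropy-domination}, where $N$ is the total source length; Jensen's inequality; and the operator moment bound $\E_z B_x(z)^{2\ell}\preceq(3\ell/t)^\ell I$ for the averaged mean-zero test operators. Hence $d=\log t\leq2\log(1/\eta)+\log\ell+a/\ell+O(1)$, which does not depend on the base seed length. The extra output cost is $O(\ell[\log a+\log(1/\eta)+\log\ell])$. Taking $\ell\approx a/(\alpha(\log a+\log(1/\eta)))$ gives~\eqref{eq:deficiency-seed} with loss $O_\alpha(a+\log(1/\eta))$.
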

Notice that this theorem is used for constructing matrix samplers and the range of $k$ is only meaningful when $\frac{k}{n}$ is at least a constant close to 1, since otherwise the output length will $0$.

%The strong averaging sampler of Xun and Zuckerman, combined with the classical correspondence~\cite[Theorem~1 and Lemma~33]{XZ}, gives the seed bound~\eqref{eq:deficiency-seed} and output length $n-O_\alpha(a+\log(1/\eta))$ for classical sources.\footnote{Take sampler accuracy $\eta/2$ and failure probability $(\eta/2)2^{-a}$. The additive randomness bound gives the output length. The sample bound gives~\eqref{eq:deficiency-seed} after choosing a sufficiently small exponent in their construction.} Theorem~\ref{thm:deficiency} achieves these parameters against quantum side information.

\paragraph{Quantum-proof extractor for any constant entropy rate}
We also obtain optimal up to constant factor explicit quantum-proof extractors for any constant entropy rate and exponentially small error, improving previous constructions.
%At any fixed positive entropy rate, we obtain linear output with seed length logarithmic in the source length and inverse error.

\begin{theorem}\label{thm:constant-rate}\label{cor:highentropy}
For any constant $0<\mu<1$, there is a constant $c_\mu>0$ such that, for all sufficiently large $n$, every $\mu n\leq k\leq n$, and every $0<\eta\leq1/2$ satisfying $\log(n/\eta)\leq c_\mu n$, there is an explicit strong quantum-proof $(k,\eta)$ extractor $\Ext:\{0,1\}^n\times\{0,1\}^d\to\{0,1\}^m$ such that
\begin{equation}\label{eq:highentropy-main}
 d=O_\mu(\log(n/\eta)),\qquad m=\Omega_\mu(n).
\end{equation}
The map $\Ext$ is computable in polynomial time.
\end{theorem}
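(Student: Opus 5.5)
We build the extractor of Theorem~\ref{thm:constant-rate} by the condense-then-extract paradigm used classically by Guruswami, Umans and Vadhan, with every ingredient replaced by a quantum-proof one.

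\emph{Step 1 (condensing).} Apply a strong quantum-proof condenser with seed length $O_\mu(\log(n/\eta))$. It maps the $n$-bit source, which has conditional min-entropy $k\geq\mu n$ given $E$, to a string $X'$ of length $n'=O_\mu(k)$. We need $X'$ to have smooth conditional min-entropy at least $(1-\beta)n'$ given $(E,Y_1)$, for a small constant $\beta$ chosen later, with smoothing error $\eta/3$.

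The natural candidate is the Parvaresh--Vardy/GUV lossless condenser. Its analysis is list-decoding based, so it yields a quantum-proof guarantee through the guessing-probability characterization $\hmin(X\mid E)=-\log\pg(X\mid E)$. The argument runs as follows. Suppose a measurement on $E$ and $Y_1$ guesses the condensed output too well. Combining that measurement with the list-recovery decoder gives a guess for $X$ itself, with success probability degraded only by the list size. Hence the condensed output has high smooth min-entropy, with error controlled by the condenser error. The smooth min-entropy facts in the appendix handle the passage to smoothed states.

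If the lossless condenser does not go through, the fallback is a lossy condenser at constant rate. Sampling a constant fraction of blocks with a randomness-efficient averaging sampler, in the spirit of Nisan--Zuckerman, uses an $O(\log(n/\eta))$-bit seed. The quantum version of the block-sampling lemma via the chain rule~\eqref{eq:chain} preserves an entropy rate that we can then raise.

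\emph{Step 2 (high-entropy extraction).} Apply the high-entropy quantum-proof extractor of Theorem~\ref{thm:deficiency} to $X'$, with deficiency $a=\beta n'$ and error $\eta/3$. Its seed is $d_2\leq(1+\alpha)\log(\beta n')+(2+\alpha)\log(3/\eta)+O(1)=O(\log(n/\eta))$. Its output length is $n'-O(\beta n'+\log(1/\eta))$.

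For this to be $\Omega(n')$ we need two things. First, $\beta$ must be smaller than a constant depending on the $O_\alpha(\cdot)$ constant in~\eqref{eq:deficiency-output}. Second, we need $\log(1/\eta)\leq c_\mu n$; this is exactly the hypothesis, with $c_\mu$ small enough.

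Since Theorem~\ref{thm:deficiency} is stated for conditional min-entropy, we apply it to the smoothed state and pay the smoothing error in trace distance. The seed $Y_1$ is treated as part of the side information, which is legitimate because it is classical and public.

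\emph{Step 3 (composition).} Output $\Ext_2(\mathrm{Cond}(X,Y_1),Y_2)$ with seed $(Y_1,Y_2)$. Strongness follows from the triangle inequality. The output is close to uniform given $(Y_2,Y_1,E)$, and $Y_1$ is uniform and independent of $E$. The total error is at most $\eta$, the total seed is $O_\mu(\log(n/\eta))$, and the output length is $\Omega_\mu(n)$.

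The main obstacle is Step 1. We must certify that a short-seed condenser achieving a rate arbitrarily close to $1$ stays secure against quantum side information with conditional min-entropy, not merely bounded storage. I would first attempt the list-recovery/guessing reduction for the GUV condenser, since quantum guessing-probability arguments compose well with list decoding.

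If the reduction loses too much, there are two alternatives. One is to use Theorem~\ref{thm:deficiency}'s own machinery to extract a small constant fraction of the bits as a seed. The other is to iterate constant-rate condensing steps, each raising the rate by a constant. A constant number of steps, depending only on $\mu$, then keeps the seed at $O_\mu(\log(n/\eta))$.
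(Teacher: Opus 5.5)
Your plan has a genuine gap in each of its two main steps.

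\textbf{Step~2 is circular.} Theorem~\ref{thm:deficiency} is derived from Theorem~\ref{thm:moment-tradeoff}, whose base extractor is Lemma~\ref{lem:deficiency-base}. That lemma generates the seed of the Goldreich--Wigderson extractor by applying Theorem~\ref{thm:constant-rate} itself, at rate $3/4$, to a short block of the source. Without it, the only high-entropy quantum-proof extractor available is Goldreich--Wigderson lifted by Theorem~\ref{thm:high-entropy-lifting}. At deficiency $\beta n'$ the lifting multiplies the error by about $2^{\beta n'}$. The classical error must then be below $\eta2^{-\beta n'}$, so the seed becomes linear in $n'$.

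\textbf{Step~1 is the deeper problem.} It is exactly the obstacle the paper is designed to avoid: no short-seed condenser to rate $1-\beta$ is known to be secure against conditional min-entropy, and your list-recovery reduction does not transfer. The Parvaresh--Vardy decoder needs the adversary's candidate outputs for many seeds $y$ simultaneously, but a quantum adversary holds a single copy of $E$. Its measurements for different $y$ need not commute, and measuring for one seed disturbs the state for the others. So no single measurement produces the list-recovery instance. In the bounded-storage model the difficulty disappears, because one can condense classically and then pay $\log\dim E$ via~\eqref{eq:chain}. This fits the fact that the condenser-based constructions of Ben-Aroya and Ta-Shma are limited to bounded storage or to conditional rate above $1/2$. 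Your fallbacks do not close the gap. Entropy sampling (Lemma~\ref{lem:public-entropy-sampling}) preserves the rate of the sampled block but never raises it. Iterating ``constant-rate condensing steps'' presupposes the missing quantum-proof condenser.

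\textbf{What the paper does instead.} It avoids condensing altogether and uses only hashing-based components. Their security follows from the quantum leftover hash lemma and the leakage and chain rules~\eqref{eq:chain}--\eqref{eq:classical-pair-chain}, which work at every constant rate.
\begin{enumerate}
\item Linear output comes from the Raz--Reingold--Vadhan multilinear extractor with a weak design (Theorem~\ref{thm:design} and Lemma~\ref{lem:long-output}). Once the seed coordinates outside $S_i$ are fixed, the blocks preceding block $i$ are determined by the seed and at most $\lambda\varrho M\leq\mu n/4$ bits of classical advice, which the leakage bound absorbs. This costs a seed of $O(h\log^2(n/h))$ bits, where $h=\log(n/\eta)$.
\item In the final composition $F(B,G(C,Y))$, that seed is generated by $G$ (Theorem~\ref{thm:short-seed-generator}) from the complement $C$ of an entropy-sampled block $B$. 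The map $G$ applies short-seed hashing in reverse order to sampled sub-blocks of geometrically growing length, so only $O_\mu(h)$ external seed bits remain.
\end{enumerate}
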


Notice that the error can be $\eta\geq2^{-c'_\mu n}$ for a small constant $c'_\mu>0$. 
For the output length, one can use a standard technique to attain an output length $(1-\theta) k $ for any constant $\theta$ (Corollary~\ref{cor:entropy-fraction}), with a constant factor blow-up for the seed length. 
The work of Ben-Aroya and Ta-Shma~\cite[Theorem~1.3]{BATS} achieves same the seed and output lengths as ours, up to constant factors. They require the conditional min-entropy $k \ge (\tfrac12+\beta)n$ for any constant $\beta > 0$, while our construction can work for all constant entropy rate.
Our work can also handle a larger range of error parameters than \cite{BATS}.
For detailed parameter comparison with previous works, see Table~\ref{tab:extractors}.

\input{tables/extractors}

%% file: statements/correspondence.tex
Let $F:\{0,1\}^n\times[t]\to\{0,1\}^m$. Fix $D\geq1$ and $0<\delta,\varepsilon<1$. Then:
\begin{enumerate}
\item If $F$ is a strong (resp.\ weak) $\log(2D)$-qubit-proof $(n-\log(1/\delta),\varepsilon/2)$ extractor, then $F$ is a strong (resp.\ ordinary) $D$-dimensional $(\delta,\varepsilon)$ matrix sampler.
\item Suppose $F$ is a strong (resp.\ ordinary) $D$-dimensional $(\delta,\varepsilon)$ matrix sampler. For every $0\leq a\leq n$, the map $F$ is a strong (resp.\ weak) $(\log D)$-qubit-proof $(n-a,\varepsilon/2+2^a\delta)$ extractor.
\end{enumerate}

%% file: tables/matrix_samplers.tex
\begin{table}[!htb]
\centering
\small
\setlength{\tabcolsep}{2pt}
\renewcommand{\arraystretch}{1.3}
\makebox[\linewidth][c]{%
\begin{tabular}{@{}llll@{}}
\toprule
Reference & Method & Random bits $r$ & Sample complexity $t$ \\
\midrule
Lemma~\ref{prop:strong-randomness-lower} & Lower bound (strong) & $m+\log(D/\delta)-1$ & $\Omega(\varepsilon^{-2}\log(D/\delta))$ \\
\cite[Prop.~4]{XZ} & Existence & $m+2\log(D/\delta)+\log\log(D/\varepsilon)+O(1)$ & $O(\varepsilon^{-2}\log(D/\delta))$ \\
\midrule
\cite{AW,Tropp12} & Full independence & $O(m\varepsilon^{-2}\log(D/\delta))$ & $O(\varepsilon^{-2}\log(D/\delta))$ \\
%\cite{XZ}$^{\dagger}$ & Union bound over entries & $m+O(\log(D/\delta))$ & %$O((D^2\varepsilon^{-2}\log(D/\delta))^{1+\gamma})$ \\
\cite{GLSS,NSW} & Expander walks & $m+O(\varepsilon^{-2}\log(D/\delta))$ & $O(\varepsilon^{-2}\log(D/\delta))$ \\
\cite[Thm.~3]{XZ} & Iterated composition & $m+O(\log(1/\delta)+\log(D/\varepsilon)\log\log D)$ & $O((\varepsilon^{-2}\log(D/\delta))^{1+\gamma})$ \\
\cite[Thm.~4]{XZ} & Sampler composition & $m+O(\log(D/\delta))$ & $O(D^\gamma\varepsilon^{-2-\gamma}\log^{1+\gamma}(1/\delta))$ \\
\midrule
Theorem~\ref{thm:extractor-sampling} & Quantum-proof extraction & $m+O(\log(D/\delta))$ & $O\!\left(\dfrac{\varepsilon^{-2-\alpha}\log^{1+\alpha}(D/\delta)}{\log(\varepsilon^{-1}\log(D/\delta))}\right)$ \\
\addlinespace[2pt]
Corollary~\ref{cor:quadratic-accuracy} & Quantum-proof extraction & $m+O(\log(D/\delta)\log(\varepsilon^{-1}\log(D/\delta)))$ & $O(\varepsilon^{-2}\log(D/\delta))$ \\
\bottomrule
\end{tabular}%
}
\caption{Matrix samplers on $\{0,1\}^m$, for $0<\delta\leq\varepsilon\leq1/2$. Hidden constants may depend on fixed $\alpha,\gamma>0$, and logarithmic factors are at least one. %Both lower bounds apply to strong samplers. %Both constructions of ours are strong. 
%$^{\dagger}$The entrywise bound combines Theorem~1 and Lemma~17 of~\cite{XZ}.
}
\label{tab:matrix-samplers}
\end{table}

%% file: tables/extractors.tex
\begin{table}[!htb]
\centering
\normalsize
\setlength{\tabcolsep}{2pt}
\renewcommand{\arraystretch}{1.2}
\begin{tabular*}{\linewidth}{@{\extracolsep{\fill}}lllll@{}}
\toprule
Reference & Entropy & Seed length $d$ & Output length $m$ & Error $\eta$ \\
\midrule
\multicolumn{5}{@{}l}{\textit{Bounded quantum storage: $H_\infty(X)\geq k$, $\dim E\leq2^b$}}\\[2pt]
\cite{TaShma11} & $k=n$, $b=n^\beta$ & $O(\log n)$ & $\Omega(n^{(1-\beta)/15-\gamma}/\log n)$ & $n^{-\gamma}$ \\
\cite[Thm.~1.1]{DeVidick10} & $k\geq n^\nu$ & $O(\log n)$ & $\Omega((k-b)/k^\gamma)$ & $n^{-c}$ \\
\cite[Thm.~1.2]{DeVidick10} & $k=\alpha n$ & $O(\log^4 n)$ & $\Omega(k-b)$ & $n^{-c}$ \\
\cite[Thm.~1.2]{BATS} & Any $k$, $b=\beta k$ & $O_\beta(\log(n/\eta))$ & $\Omega_\beta(k)$ & $\geq2^{-k^\beta}$ \\
\midrule
\multicolumn{5}{@{}l}{\textit{Quantum-proof extraction: $\hmin(X\mid E)\geq k$}}\\[2pt]
\cite{KT08,DPVR} & $k=\Omega(\log(1/\eta))$ & $O(\log(n/\eta))$ & $1$ & Any $\eta$ \\
\cite{DPVR} & Any $k$ & $O(\log^2(n/\eta)\log m)$ & $\geq k-4\log(1/\eta)-O(1)$ & Any $\eta$ \\
\cite[Cor.~5.6]{DPVR} & $k=n^\nu$ & $O(\log n)$ & $\Omega(n^{\nu-\gamma})$ & $n^{-c}$ \\
\cite{BATS} & $k=(\tfrac12+\beta)n$ & $O_\beta(\log(n/\eta))$ & $\Omega_\beta(n)$ & $\geq2^{-n^{\frac12-\beta}}$ \\

Theorem~\ref{thm:constant-rate} & $k=\mu n$ & $O_\mu(\log(n/\eta))$ & $\Omega_\mu(n)$ & $\geq2^{-c'_\mu n}$ \\
Theorem~\ref{thm:deficiency} & $k=n-a$ & $O_\alpha(\log(a/\eta))$ & $\geq n-O_\alpha(a+\log(1/\eta))$ & Any $\eta$ \\
\bottomrule
\end{tabular*}
\caption{Strong extractors secure against quantum side information, for sources on $\{0,1\}^n$ and $0<\eta\leq1/2$. All exponent parameters are fixed constants.  In~\cite{TaShma11}, $0<\beta<1$ and $0<\gamma<(1-\beta)/15$. In~\cite{DeVidick10}, $\gamma,c,\nu>0$ and $0<\alpha<1$. For~\cite[Corollary~5.6]{DPVR}, $0<\gamma<\nu\leq1$ and $c>0$. In~\cite{BATS}, $0<\beta<1/2$. For our results, $0<\mu<1$, $\alpha>0$, and $1\leq a\leq n$; $c'_\mu>0$ is sufficiently small and $n$ sufficiently large. 
%The one-bit row combines~\cite{KT08} with~\cite[Proposition~5.2]{DPVR}.
%$^\dagger$This rows report the parameters stated in~\cite[Theorems~1.2--1.3]{BATS}. Their small-error ranges use Theorem~4.3 of that paper, attributed to ~\cite{DPVR}; however,~\cite[Corollary~5.6]{DPVR} assumes inverse-polynomial error and does not itself justify these extensions.
}
\label{tab:extractors}
\end{table}

%% file: technical_overview.tex
\subsection{Technical Overview}\label{sec:overview}
\paragraph{The quantum extractor--matrix sampler correspondence.}
Let $F:\{0,1\}^n\times\{0,1\}^d\to\{0,1\}^m$ be a strong quantum-proof $(n-\Lambda,\varepsilon/2)$ extractor, where $\Lambda=\log(2D/\delta)$. Fix matrix-valued tests $f_y:\{0,1\}^m\to\mathbb C^{D\times D}$ with $\norm{f_y(z)}\leq1$, and write the sampling error as
\[
 A_x=\E_y f_y(F(x,y))-\E_{y,z}f_y(z),
\]
where $y\in\{0,1\}^d$ and $z\in\{0,1\}^m$ are independent and uniform. Suppose the bad set $\mathcal B=\{x:\norm{A_x}>\varepsilon\}$ has size greater than $\delta2^n$. We will use these bad inputs to construct a source and quantum side information that violate the extraction guarantee.

Write $\mathcal H(A)=\bigl(\begin{smallmatrix}0&A\\A^*&0\end{smallmatrix}\bigr)$ for the Hermitian dilation of $A$. For each $x\in\mathcal B$, choose a unit eigenvector $v_x$ of $\mathcal H(A_x)$ with eigenvalue $\norm{A_x}$. Choose $X$ uniformly from $\mathcal B$. Conditioned on $X=x$, put the register $E$ in the state $\ket{v_x}\bra{v_x}$. This defines a cq state $\rho_{XE}$ with $\dim E=2D$, and hence
\[
 \hmin(X\mid E)_\rho\geq\log|\mathcal B|-\log(2D)>n-\Lambda,
\]
so the source meets the extractor's entropy requirement.

For each seed $i\in[t]$ and output $z\in\{0,1\}^m$, define
$M_{i,z}=(I_{2D}+\mathcal H(f_i(z)))/2$.
Since $0\preceq M_{i,z}\preceq I_{2D}$, the operator
\(
 M=\sum_{z\in\{0,1\}^m}\sum_{i=1}^t
 \ket{z}\bra{z}\otimes\ket{i}\bra{i}\otimes M_{i,z}
\)
defines a POVM $\{M,I-M\}$.
We accept if the outcome corresponds to $M$.
We show that the acceptance probabilities of measuring
$\rho_{F(X,I)IE}$ and $\tau_m\otimes\rho_{IE}$ differ by
\(
 \varepsilon/2,
\)
contradicting the extractor guarantee.

Conversely, suppose $F$ is a strong $D$-dimensional $(\delta,\varepsilon)$ matrix sampler. Fix a cq state $\rho_{XE}$ with $H_\infty(X)_\rho\geq n-a$ and $\dim E\leq D$. A binary measurement of the output, public seed, and $E$ is specified by operators $0\preceq M_{y,z}\preceq I$ on $E$. Define the matrix tests $f_y(z)=2M_{y,z}-I$, which satisfy $\norm{f_y(z)}\leq1$. We show that the sampler guarantee for these tests bounds the distinguishing advantage by $\varepsilon/2$ on all but at most $\delta2^n$ inputs. Since each input has probability at most $2^{a-n}$, these exceptional inputs have total probability at most $2^a\delta$ under $X$. The advantage is at most one on the exceptional inputs, so its average is at most $\varepsilon/2+2^a\delta$. This proves the reverse implication of Theorem~\ref{thm:correspondence}.

\paragraph{A quantum-proof extractor with linear output.}
Fix a constant $0<\mu\leq1$ and put $h=\log(n/\eta)$. Assume $h\leq c_\mu n$ for a sufficiently small constant $c_\mu>0$. We adapt the multilinear construction of Raz, Reingold, and Vadhan~\cite{RRV02} to obtain an explicit strong quantum-proof $(\mu n,\eta)$ extractor with seed length $d=O_\mu(h\log^2(n/h))$ and output length $m=\Omega_\mu(n)$ (Lemma~\ref{lem:long-output}).

We first construct a strong quantum-proof extractor that produces a single block of $\Theta(h)$ bits. Choose $\lambda=\Theta(h)$, $\ell=\Theta(\log(n/h))$, and $m_0=\Theta(\lambda)$ with $n\leq\lambda2^\ell$ and $m_0\leq\lambda$. For a sample $x\sim X$, pad $x$ with zeroes and use its bits as the coefficients of a multilinear polynomial $p_x$ over $\F_{2^\lambda}$ in $\ell$ variables. Put $F(x,u)=p_x(u)$ and fix a two-universal hash family $\{h_w:\F_{2^\lambda}\to\{0,1\}^{m_0}\}_{w\in\F_{2^\lambda}}$. With seeds $U\in\F_{2^\lambda}^{\ell}$ and $W\in\F_{2^\lambda}$, output $h_W(F(X,U))$.

The collision bound of~\cite[Lemma~5]{LHL} and the quantum leftover hash lemma~\cite[Theorem~5.5.1]{Ren05} give Lemma~\ref{thm:collision}. Applying the lemma to $F$,$h_W$, we obtain output length $m_0$, seed length $\lambda(\ell+1)=O(h\log(n/h))$, and error at most
\begin{equation}\label{eq:overview-single-block}
 \frac12\sqrt{2^{m_0}\bigl(2^{-k}+\ell2^{-\lambda}\bigr)}
\end{equation}
for every source $\rho_{XE}$ with $\hmin(X\mid E)_\rho \geq k$.

To obtain linear output, we use a weak design to generate the evaluation points for $M$ output blocks from a shared seed. Choose an $(\ell,\varrho)$ weak design $S_1,\ldots,S_M\subseteq[v]$, so that $|S_i|=\ell$ and $\sum_{j<i}2^{|S_i\cap S_j|}\leq\varrho M$ for every $i$. We take $\varrho=2$ and $\ell=\Theta(\log(n/h))$, and use an explicit design with $v=O(\ell^2)$ coordinates~\cite[Lemma~15]{RRV02}. Choose $M=\Theta_\mu(n/\lambda)$ small enough that $\lambda\varrho M\leq\mu n/4$. With a uniform random seed $Y\in\F_{2^\lambda}^v$, we construct the output blocks $Z_i=h_W(F(X,Y|_{S_i}))$.

Fix an output block $Z_i$. We show that $Z_i$ is close to uniform even given the preceding blocks $Z_1,\ldots,Z_{i-1}$, the seeds, and $E$. Fix $Y|_{[v]\setminus S_i}$. For each $j<i$, the restricted polynomial $p_X(Y|_{S_j})$ depends only on the coordinates indexed by $S_i\cap S_j$. Reveal the coefficients of all restricted polynomials with $j<i$ as classical advice $A_i$. The tuple $(A_i,Y,W)$ determines $Z_1,\ldots,Z_{i-1}$.

Each restricted polynomial has at most $2^{|S_i\cap S_j|}$ coefficients of $\lambda$ bits each. If $\hmin(X\mid E)_\rho\geq k$, the weak-design and classical leakage bounds give
\[
 |A_i|\leq\lambda\sum_{j<i}2^{|S_i\cap S_j|}\leq\lambda\varrho M,
 \qquad \hmin(X\mid A_iE)_\rho\geq k-\lambda\varrho M.
\]
The advice $A_i$ depends only on $X$ and the fixed coordinates outside $S_i$. Thus $Y|_{S_i}$ and $W$ remain uniform and independent of $XA_iE$. We can apply the single-block extractor to $X$ with seeds $Y|_{S_i},W$ and side information $A_iE$. At $k=\mu n$, the remaining entropy is at least $3\mu n/4$. With our choices of $\lambda$ and $m_0$, \eqref{eq:overview-single-block} shows that $Z_i$ is $\eta/M$-close to uniform given $A_i,Y,W,E$. Since $A_i,Y,W$ determine the preceding output blocks, $Z_i$ is also $\eta/M$-close to uniform given $Z_1,\ldots,Z_{i-1},Y,W,E$.

We can therefore replace the output by uniform blocks, block by block in the order $Z_M,\ldots,Z_1$, at a cost of at most $\eta/M$ per block. The total error is at most $\eta$. The output length is $M m_0=\Omega_\mu(n)$, and the seed length is $\lambda(v+1)=O(h\log^2(n/h))$.

\paragraph{Reducing the seed length.}
We next construct an extractor $G$ that uses $O_\mu(h)$ seed bits and produces $O_\mu(h\log^2(n/h))$ bits to seed the multilinear extractor. Fix a source $\rho_{XE}$ with $\hmin(X\mid E)_\rho\geq\mu n$. Using quantum entropy sampling (Lemma~\ref{lem:public-entropy-sampling}), a common public seed $R$ selects blocks $B_0,\ldots,B_{T-1}$ of geometrically increasing lengths. The samplers reuse prefixes of $R$, so their total seed length is $O_\mu(h)$. The sampled blocks may overlap. We take the smallest block to have $\Theta_\mu(\sqrt{nh})$ bits to meet the entropy-sampling requirements.

To grow the seed, we use the short-seed extractor of Lemma~\ref{lem:short-hash}. To obtain $m$ output bits with error $\eta/(2T)$, this extractor requires $m+O(\log(T/\eta))$ bits of conditional min-entropy and at most $2m+O(h)$ seed bits. Start with an independent uniform $O_\mu(h)$-bit seed $Z_0$. We repeatedly extend $Z_j$ to a longer seed $Z_{j+1}$, processing the blocks from largest to smallest. For $j=0,\ldots,T-1$, let $H_j$ be this extractor with seed length $|Z_j|$ and output length $\lfloor|Z_j|/4\rfloor$. Apply $H_j$ to $B_{T-1-j}$ and append its output to the seed, writing $\circ$ for concatenation:
\begin{equation}\label{eq:overview-seed-recurrence}
 Z_{j+1}=Z_j\circ H_j(B_{T-1-j},Z_j).
\end{equation}
Each application of~\eqref{eq:overview-seed-recurrence} increases the seed length by a constant factor, so $T=O(\log\log(n/h))$ applications suffice. The external seed is $(R,Z_0)$. Write $Z_T=Z_0\circ W$. The extractor $G$ outputs $W$, whose length is $O_\mu(h\log^2(n/h))$.

The reverse order ensures that each block retains entropy given all blocks still to be processed. Write $B_{<i}=(B_0,\ldots,B_{i-1})$. Entropy sampling gives each $B_i$ a constant smooth min-entropy rate given $RE$ (smooth min-entropy, which we denote by $\hmin^\tau$, is defined in~\eqref{eq:smooth-min-entropy}). Revealing $B_{<i}$ costs at most $\sum_{q<i}|B_q|$ bits of entropy. By choosing the block lengths to grow sufficiently quickly, we obtain
\begin{equation}\label{eq:overview-block-entropy}
 \hmin^\tau(B_i\mid B_{<i}RE)_\rho \geq c_\mu|B_i|,
\end{equation}
where $c_\mu>0$ is a constant and $\tau=\eta/(8T)$. The smallest block has $\Omega_\mu(\sqrt{nh})$ bits of smooth conditional min-entropy, enough for every extraction step when $n/h$ is sufficiently large.

We show by induction that, after $j$ applications of~\eqref{eq:overview-seed-recurrence}, $Z_j$ is $j\eta/T$-close to uniform even given $RE$ and the unused blocks $B_{<T-j}$. The claim holds for the initial seed $Z_0$. For $j<T$, put $i=T-1-j$. The induction hypothesis allows us to replace $Z_j$ by a uniform seed $V_j$ independent of $B_iB_{<i}RE$, at a cost of at most $j\eta/T$ in trace distance. Applying the update in~\eqref{eq:overview-seed-recurrence} cannot increase the distance. By~\eqref{eq:overview-block-entropy}, the block $B_i$ has enough smooth min-entropy to apply $H_j$ with side information $B_{<i}RE$. Lemma~\ref{lem:smooth-extraction} then shows that $V_j\circ H_j(B_i,V_j)$ is $\eta/T$-close to uniform given $B_{<i}RE$. The triangle inequality gives error at most $(j+1)\eta/T$ for $Z_{j+1}$, completing the induction.

Taking $j=T$, we obtain that $Z_T$ is $\eta$-close to uniform given $RE$. $Z_T=Z_0\circ W$, so $W$ is $\eta$-close to uniform even given $R,Z_0,E$.

Finally, combine $G$ with the linear-output extractor $F_{\mathrm{lin}}$. A public seed $A$ selects a set $S(A)\subseteq[n]$ containing a small constant fraction of the source coordinates. Put $B=X_{S(A)}$ and $C=X_{[n]\setminus S(A)}$. With an additional seed $Y$, output
\[
 F_{\mathrm{lin}}\bigl(B,G(C,Y)\bigr).
\]
Entropy sampling gives $B$ a constant smooth min-entropy rate given $AE$. Equation~\eqref{eq:classical-pair-chain} gives $C$ a constant conditional min-entropy rate even given $BAE$. Thus $G(C,Y)$ is nearly uniform given $BAYE$ and can serve as the seed for $F_{\mathrm{lin}}$. Applying $F_{\mathrm{lin}}$ to $B$ gives $\Omega_\mu(n)$ output bits with error at most $\eta$ and only $O_\mu(h)$ external seed bits (Theorem~\ref{thm:constant-rate}).

\paragraph{Extraction with small output loss.}
We combine our constant-rate extractor with a quantum-proof version of the Goldreich--Wigderson extractor. For sources of conditional min-entropy $n-a$, the combination extracts all but $O(a+\log(1/\eta))$ bits. The Goldreich--Wigderson extractor~\cite[Theorem~4.2 and Lemma~4.3]{GW97}, together with the lifting theorem of Berta, Fawzi, and Scholz~\cite{BFS}, gives a quantum-proof extractor $H$ with seed length and output loss $O(a+\log(1/\eta))$.

Split the source into a block $B$ of $C_0(a+\log(1/\eta))$ bits and the remaining block $Z$, where $C_0$ is a sufficiently large constant. Apply our constant-rate extractor $F$ to $B$ with public seed $Y$, and output
\[
 H\bigl(Z,F(B,Y)\bigr).
\]
Equation~\eqref{eq:classical-pair-chain} gives $\hmin(B\mid ZE)_\rho\geq|B|-a$ and $\hmin(Z\mid E)_\rho\geq|Z|-a$. Thus $B$ has a constant entropy rate even given $ZE$, and $F(B,Y)$ is nearly uniform given $ZYE$. The output $H(Z,F(B,Y))$ is therefore $\eta$-close to uniform given $YE$. The seed $Y$ has length $O(\log a+\log(1/\eta))$. The total output loss, including the reserved block $B$, remains $O(a+\log(1/\eta))$ (Lemma~\ref{lem:deficiency-base}).

\paragraph{Refining the extractor's seed length.}
To shorten the public seed further, split the source into an $n$-bit block $X$ and an $s$-bit block $Z$. We apply the extractor $\Ext_0:\{0,1\}^n\times\{0,1\}^{d_0}\to\{0,1\}^m$ from Lemma~\ref{lem:deficiency-base} to $X$, with $d_0=O(\log a+\log(1/\eta))$ and error $\eta_0=\eta/2$. Following Xun and Zuckerman~\cite[Section~1.6.1]{XZ}, we use $Z$ to generate a list of candidate seeds for $\Ext_0$. The public seed selects one candidate.

Fix integers $\ell\geq1$ and $t\geq\ell$. Interpret $z$ as the coefficients of a degree-$(2\ell-1)$ polynomial $p_z$ over a sufficiently large binary field. Evaluate $p_z$ at distinct fixed points $u_1,\ldots,u_t$, and project each evaluation onto $d_0$ binary coordinates. Denote the resulting candidate seeds by $G(z,1),\ldots,G(z,t)$, where $G:\{0,1\}^s\times[t]\to\{0,1\}^{d_0}$. The coefficient block has length $s=O(\ell(d_0+\log t))$. For uniform coefficients, $G(U_s,1),\ldots,G(U_s,t)$ are $2\ell$-wise independent and uniform. With a public seed $J$ uniform on $[t]$ and independent of $XZE$, output
\[
 V=\Ext((X,Z);J)=\Ext_0(X;G(Z,J)).
\]

The candidate seed $G(Z,J)$ may be correlated with $X$ and $E$. We compare the actual output $\Ext_0(X;G(Z,J))$ with $\Ext_0(X;U)$, where $U$ is a uniform $d_0$-bit seed independent of $XZEJ$. Let $\rho_{VJE}$ and $\sigma_{VJE}$ denote the corresponding states, using $V$ for the $m$-bit output register in both. Assume $\hmin(XZ\mid E)\geq n+s-a$. Then $\hmin(X\mid E)\geq n-a$, so the guarantee for $\Ext_0$ gives
\begin{equation}\label{eq:overview-reference-output}
 \dist\bigl(\sigma_{VJE},\tau_m\otimes\rho_{JE}\bigr)\leq\eta_0.
\end{equation}
A $2\ell$th-moment argument bounds the effect of replacing $U$ by $G(Z,J)$ (see the proof of Theorem~\ref{thm:moment-tradeoff}):
\begin{equation}\label{eq:overview-seed-replacement}
 \dist\bigl(\rho_{VJE},\sigma_{VJE}\bigr)
 \leq 2^{a/(2\ell)}\sqrt{\frac{3\ell}{t}}.
\end{equation}

For $t=\Theta(\eta^{-2}\ell2^{a/\ell})$, equations~\eqref{eq:overview-reference-output} and~\eqref{eq:overview-seed-replacement} give $\dist(\rho_{VJE},\tau_m\otimes\rho_{JE})\leq\eta$. The public seed length is
\[
 d=\log t\leq2\log(1/\eta)+\log\ell+\frac a\ell+O(1).
\]
Increasing $\ell$ reduces the term $a/\ell$ but enlarges the coefficient block, whose $s$ bits contribute to the output loss. Balancing these costs gives $d\leq(1+\alpha)\log a+(2+\alpha)\log(1/\eta)+O_\alpha(1)$ while keeping the total output loss $O_\alpha(a+\log(1/\eta))$, as stated in Theorem~\ref{thm:deficiency}.

\paragraph{Sampler parameters from the extractor.}
Apply the correspondence~\eqref{eq:conditional-forward} with deficiency $a=\lceil\log(4D/\delta)\rceil$ and extraction error $\eta=\varepsilon/4$. The extractor's output loss becomes the sampler's randomness overhead, and its $d$-bit seed gives $2^d$ samples. Choosing $\ell$ to balance the seed length against the output loss yields the sample complexity in Theorem~\ref{thm:extractor-sampling}, with randomness $m+O_\alpha(\log(D/\delta))$.

%% file: preli.tex
\section{Preliminaries}\label{sec:prelim}
All logarithms are base two. Logarithmic factors in asymptotic bounds are understood to be at least one. For entropy deficiency $a\geq0$, we write $\log a=\log_2\max\{2,a\}$. Sampler parameter comparisons use $0<\delta\leq\varepsilon\leq1/2$. For a finite set $\mathcal A$, $U_{\mathcal A}$ and $\tau_{\mathcal A}$ denote the uniform distribution and density operator on $\mathcal A$. For classical distributions, $\dist$ denotes total variation distance.

We use $n,d,m$ for extractor source, seed, and output lengths. The output loss is $n-m$, and the entropy deficiency is $n-k$. 
%Both extractor models in Definition~\ref{def:extractor} allow a uniform seed on $[t]$. 
%The uniform state $\tau_{[t]}$ then replaces $\tau_d$ in~\eqref{eq:strong-extraction}. 
Sampler randomness complexity and sample complexity are $r$ and $t$, respectively. The matrix dimension is $D$. Sampler failure probability and accuracy are $\delta$ and $\varepsilon$. Extraction error is $\eta$. The bounded-storage parameter is $b$, so $\dim E\leq2^b$.

\subsection{Hermitian dilation}
The Hermitian dilation of $A\in\mathbb C^{D\times D}$ is
\[
 \mathcal H(A)=\begin{pmatrix}0&A\\A^*&0\end{pmatrix}.
\]
The identity
\[
 \mathcal H(A)^2=\begin{pmatrix}AA^*&0\\0&A^*A\end{pmatrix}
\]
gives $\norm{\mathcal H(A)}=\norm A$.
%Conjugation by $\operatorname{diag}(I,-I)$ changes $\mathcal H(A)$ to $-\mathcal H(A)$. 
Notice that the conjugation of $\mathcal H(A)$ by $\operatorname{diag}(I,-I)$, i.e. $\operatorname{diag}^{-1}(I,-I) \mathcal H(A) \operatorname{diag}(I,-I)$, is equal to $\mathcal H(A)$.
So the eigenvalues of $\mathcal H(A)$ occur in opposite pairs. The largest eigenvalue of $\mathcal H(A)$ is $\norm A$. The map $\mathcal H$ is linear.

\subsection{Quantum side information and extraction}
A quantum state on a finite-dimensional register is represented by a positive semidefinite operator $\rho$. It is normalized if $\Tr\rho=1$ and subnormalized if $\Tr\rho\leq1$. States are normalized unless explicitly described as subnormalized. The trace norm is $\norm A_1=\Tr\sqrt{A^*A}$.

A classical--quantum (cq) state $\rho_{XE}$ has a classical register $X$ (which is a random variable) and a quantum register $E$. We write it using subnormalized blocks $\rho^x_E, x\in \supp(X)$:
\[
 \rho_{XE}=\sum_x\ket{x}\bra{x}\otimes\rho_E^x,\qquad
 \rho_E=\sum_x\rho_E^x.
\]

The min-entropy of a classical random variable $X$ is
$H_\infty(X):=-\log\max_x\Pr[X=x]$.
For a cq state $\rho_{XE}$, we define
$H_\infty(X)_\rho:=-\log\max_x\Tr\rho_E^x$.

A positive operator-valued measurement (POVM) is a family of positive semidefinite matrices $(M_x)_{x\in \supp(X)}$ with $\sum_xM_x=I$. Using the outcome of $(M_x)_x$ as a guess for $X$ gives success probability $\sum_x\Tr(M_x\rho_E^x)$. 
We define the optimal guessing probability (guessing $X$ correctly based on a measurement on $E$) to be
\begin{equation}\label{eq:pgood}
    \pg(X\mid E)_\rho:=\max_{(M_x)_x}\sum_x\Tr(M_x\rho_E^x),
\end{equation}
where the maximum is over POVMs on $E$.
The conditional min-entropy is defined by
$\hmin(X\mid E)_\rho=-\log\pg(X\mid E)_\rho$.
Equivalently, conditional min-entropy has the semidefinite characterization~\cite[Theorem~1]{KRS}
\begin{equation}\label{eq:entropy-domination}
 2^{-\hmin(X\mid E)}=\min\{\Tr W_E: W_E\succeq\rho_E^x\text{ for every }x\}.
\end{equation}
Thus $\hmin(X\mid E)\geq k$ gives a normalized state $\omega_E$ with $\rho_E^x\preceq2^{-k}\omega_E$ for all $x$. 
%The reference state $\omega_E$ need not be the actual marginal $\rho_E$. 
Equation~\eqref{eq:entropy-domination} also defines the conditional min-entropy for subnormalized cq states $(X, E)$.

A matrix $M$ with $0\preceq M\preceq I$ defines a binary measurement $\{M,I-M\}$. We accept on outcome $M$, with probability $\Tr(M\rho)$ on state $\rho$. For normalized states, we define the trace distance of two normalized states to be
\begin{equation}\label{eq:trace-distance-test}
 \dist(\rho,\sigma):=\max_{0\preceq M\preceq I}|\Tr M(\rho-\sigma)|.
\end{equation}

A quantum channel is a completely positive, trace-preserving map. Trace distance is contractive under quantum channels: $\dist(\Phi(\rho),\Phi(\sigma))\leq\dist(\rho,\sigma)$ for every quantum channel $\Phi$.

For a state $\rho_{XVE}$ with classical registers $X,V$ and a quantum register $E$, where $V$ contains at most $a$ bits, we have the leakage inequality
\begin{equation}\label{eq:chain}
 \hmin(X\mid VE)_\rho\geq\hmin(X\mid E)_\rho-a,
 \qquad \hmin(X\mid E)_\rho\geq H_\infty(X)_\rho-\log\dim E,
\end{equation}
where 2nd inequality of \eqref{eq:chain} follows from~\eqref{eq:entropy-domination} by taking $(\max_x\Pr[X=x])I_E$ as a feasible operator.
For a state $\rho_{X_1X_2E}$ a pair of classical registers $(X_1, X_2)$ and a quantum register $E$, we also have the chain inequality
\begin{equation}\label{eq:classical-pair-chain}
 \hmin(X_1\mid E)_\rho\geq\hmin(X_1X_2\mid E)_\rho-|X_2|,
 \qquad \hmin(X_2\mid X_1E)_\rho\geq\hmin(X_1X_2\mid E)_\rho-|X_1|.
\end{equation}
The above leakage inequality and chain inequalities follow from Renner's min-entropy inequalities~\cite[Lemmas~3.1.9--3.1.10]{Ren05}.

\paragraph{Smooth min-entropy.}
We will define the smooth min-entropy of $\rho_{XE}$. 
%is the largest conditional min-entropy attainable by perturbing the joint state within a prescribed distance. We allow the obtained nearby state to be subnormalized.
%We measure the allowed perturbation using purified distance. 
For subnormalized states $\rho,\sigma$, the generalized fidelity and the purified distance are defined by
\begin{align*}
 \overline F(\rho,\sigma)
 &=\|\sqrt\rho\sqrt\sigma\|_1
   +\sqrt{(1-\Tr\rho)(1-\Tr\sigma)},\\
 P(\rho,\sigma)&=\sqrt{1-\overline F(\rho,\sigma)^2}.
\end{align*}

Purified distance is contractive under quantum channels~\cite[Lemma~7]{TCR}, i.e. for subnormalized states $\rho,\sigma$ and a quantum channel $\Phi$, one has $P(\Phi(\rho),\Phi(\sigma))\leq P(\rho,\sigma)$. 
%Purified distance also has an extension property~\cite[Corollary~9]{TCR} stated as the following equality \eqref{eq:purified-extension}. Let $\rho_{AB}$ and $\sigma_A$ be subnormalized states. There is a subnormalized extension $\sigma_{AB}$ of $\sigma_A$ satisfying
%\begin{equation}\label{eq:purified-extension}
% \Tr_B\sigma_{AB}=\sigma_A,\qquad
% P(\sigma_{AB},\rho_{AB})=P(\sigma_A,\rho_A),
% \qquad \rho_A=\Tr_B\rho_{AB}.
%\end{equation}

For subnormalized states, the trace-norm and purified-distance bounds of~\cite[Lemma~6]{TCR} imply
\begin{equation}\label{eq:purified-trace}
 \frac12\|\tilde\rho-\rho\|_1
 \leq P(\tilde\rho,\rho)
 \leq\sqrt{2\|\tilde\rho-\rho\|_1}.
\end{equation}

For a cq state $\rho_{XE}$, define the smooth min-entropy to be
\begin{equation}\label{eq:smooth-min-entropy}
 \hmin^\tau(X\mid E)_\rho
 =\sup_{\tilde\rho_{XE}:\,P(\tilde\rho_{XE},\rho_{XE})\leq\tau}
      \hmin(X\mid E)_{\tilde\rho},
\end{equation}
where the supremum ranges over subnormalized cq states on the same registers.

For trace-norm smoothing, we write
\begin{equation}\label{eq:trace-smooth-min-entropy}
 \hmin^{s,\|\cdot\|_1}(X\mid E)_\rho
 =\sup_{\tilde\rho_{XE}:\,\|\tilde\rho_{XE}-\rho_{XE}\|_1\leq s}
      \hmin(X\mid E)_{\tilde\rho},
\end{equation}
with the same convention on $\tilde\rho_{XE}$. Equation~\eqref{eq:purified-trace} gives
\begin{equation}\label{eq:smoothing-conversion}
 \hmin^{\sqrt{2s}}(X\mid E)_\rho
 \geq\hmin^{s,\|\cdot\|_1}(X\mid E)_\rho.
\end{equation}

The next two lemmas are proved in Appendix~\ref{app:smooth-proofs}.

\begin{lemma}[Smooth classical leakage; cf.~{\cite[Lemma~B.3]{DPVR}}]\label{lem:smooth-classical-leakage}
Let $\rho_{XLE}$ be a normalized state with classical registers $X,L$. Suppose $L$ takes values in a finite alphabet $\mathcal L$. For every $0\leq\tau<1$,
\[
 \hmin^\tau(X\mid LE)_\rho
 \geq\hmin^\tau(X\mid E)_\rho-\log|\mathcal L|.
\]
\end{lemma}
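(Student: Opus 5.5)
The plan is to take an optimal smoothing of $\rho_{XE}$ and extend it to the register $L$ without increasing the purified distance. The non-smooth leakage bound~\eqref{eq:chain} then applies to the extended state. Fix $\kappa>0$ and a subnormalized cq state $\tilde\rho_{XE}$ with $P(\tilde\rho_{XE},\rho_{XE})\leq\tau$ and $\hmin(X\mid E)_{\tilde\rho}\geq\hmin^\tau(X\mid E)_\rho-\kappa$. The goal is a subnormalized state $\tilde\rho_{XLE}$, classical on $X$ and $L$, satisfying two conditions:
\[
P(\tilde\rho_{XLE},\rho_{XLE})\leq\tau,
\qquad
\hmin(X\mid LE)_{\tilde\rho}\geq\hmin(X\mid E)_{\tilde\rho}-\log|\mathcal L|.
\]
Letting $\kappa\to0$ then gives the lemma.

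For the entropy condition I will use the semidefinite form~\eqref{eq:entropy-domination} directly, which avoids any normalization issue. Suppose $\tilde\rho^x_E\preceq W_E$ for every $x$. Suppose also that the extension satisfies $\sum_\ell\tilde\rho^{x,\ell}_E=\tilde\rho^x_E$, meaning tracing out $L$ returns $\tilde\rho_{XE}$. Then each block obeys $\tilde\rho^{x,\ell}_E\preceq W_E$. The operator $W_{LE}=\sum_\ell\ket{\ell}\bra{\ell}\otimes W_E$ therefore dominates every $\tilde\rho^x_{LE}$, and it has trace $|\mathcal L|\Tr W_E$. This gives the entropy condition. So it suffices to construct an extension that is classical on $X,L$ and has the correct $XE$-marginal.

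The main obstacle is building this extension within purified distance $\tau$. I would first try the explicit pretty-good-type map. For each $x$, define the channel
\[
\Phi_x(\sigma)=\sum_\ell\ket{\ell}\bra{\ell}\otimes(\rho^x_E)^{1/2}\,\rho^{x,\ell\,1/2}_E\ldots
\]
Concretely, set $\Phi_x(\sigma)=\sum_\ell\ket\ell\bra\ell\otimes K_{x,\ell}\sigma K_{x,\ell}^*$ with Kraus operators $K_{x,\ell}=(\rho^{x,\ell}_E)^{1/2}(\rho^x_E)^{-1/2}$ on $\supp\rho^x_E$. These need completing to a trace-preserving map, which is done by adding a projector onto the complement of $\supp\rho^x_E$ into one fixed $\ell$. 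However, $\Phi_x$ does not preserve the $E$-marginal of $\tilde\rho^x_E$, so this map is not the right tool.

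Instead I will use the extension property of purified distance, \cite[Corollary~9]{TCR}, which follows from Uhlmann's theorem. Given $\rho_{XLE}$ and $\tilde\rho_{XE}$, it yields a subnormalized $\tilde\sigma_{XLE}$ with $\Tr_L\tilde\sigma=\tilde\rho_{XE}$ and $P(\tilde\sigma,\rho_{XLE})=P(\tilde\rho_{XE},\rho_{XE})\leq\tau$. This $\tilde\sigma$ need not be classical on $X,L$. I will fix that by applying the pinching channel that dephases $X$ and $L$ in their computational bases. This channel fixes $\rho_{XLE}$. By contractivity of $P$ under channels, it does not increase the distance. Since $\tilde\rho_{XE}$ is already classical on $X$, the $XE$-marginal remains $\tilde\rho_{XE}$. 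The resulting state is classical on $X$ and $L$, has the right marginal, and lies within $\tau$, so it is a valid candidate in~\eqref{eq:smooth-min-entropy} for $\hmin^\tau(X\mid LE)_\rho$. Combining this with the entropy condition finishes the proof.
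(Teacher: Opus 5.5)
Your proof is correct and follows the paper's argument: extend the smoothed state $\tilde\rho_{XE}$ to $L$ using the extension property of purified distance \cite[Corollary~9]{TCR}, dephase $X$ and $L$ (this fixes $\rho_{XLE}$ and preserves the $XE$-marginal), and then apply the non-smooth leakage bound. The differences are small improvements. Your $\kappa$-slack avoids assuming that the supremum in~\eqref{eq:smooth-min-entropy} is attained. Your domination operator $W_{LE}=I_L\otimes W_E$ proves the subnormalized leakage step directly, instead of relying on the paper's rescaling remark. The abandoned pretty-good-map paragraph plays no role and can simply be deleted.
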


\begin{lemma}[Extraction from smooth sources; cf.~{\cite[Lemma~3.5]{DPVR}}]\label{lem:smooth-extraction}
Let $\Ext:\{0,1\}^n\times\{0,1\}^d\to\{0,1\}^m$ be a strong quantum-proof $(k,\eta)$ extractor. Let $\tau\geq0$, and let $\rho_{XE}$ be a normalized cq state satisfying $\hmin^\tau(X\mid E)_\rho\geq k+1$. For every independent uniform seed $Y\sim U_d$,
\[
 \dist\!\left(\rho_{\Ext(X,Y)YE},
       \tau_m\otimes\tau_d\otimes\rho_E\right)
 \leq\eta+4\tau.
\]
\end{lemma}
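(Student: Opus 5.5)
The plan is to replace $\rho_{XE}$ by a nearby \emph{normalized} cq state that has genuine conditional min-entropy at least $k$, apply the extractor to that state, and then return to $\rho$ using the triangle inequality and contractivity of trace distance. The extra bit of entropy in the hypothesis $\hmin^\tau(X\mid E)_\rho\geq k+1$ pays for the renormalization.

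\emph{Step 1 (smoothing witness).} By~\eqref{eq:smooth-min-entropy}, pick a subnormalized cq state $\tilde\rho_{XE}=\sum_x\ket{x}\bra{x}\otimes\tilde\rho_E^x$ with $P(\tilde\rho_{XE},\rho_{XE})\leq\tau$ and $\hmin(X\mid E)_{\tilde\rho}\geq k+1$. (If the supremum is not attained, take $\tau'$ slightly larger than $\tau$ and let $\tau'\to\tau$ at the end.) The first inequality in~\eqref{eq:purified-trace} gives $\|\tilde\rho_{XE}-\rho_{XE}\|_1\leq2\tau$. Writing $p=\Tr\tilde\rho_{XE}$, we also get $1-p\leq\|\tilde\rho_{XE}-\rho_{XE}\|_1\leq2\tau$.

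\emph{Step 2 (renormalize by padding).} Rather than dividing by $p$, which would cost $\log(1/p)$ bits of entropy, add the missing mass as uniform noise:
\[
 \rho'_{XE}=\tilde\rho_{XE}+(1-p)\,\tau_n\otimes\rho_E .
\]
This is a normalized cq state. Let $W_E$ be a feasible operator for $\tilde\rho$ in~\eqref{eq:entropy-domination} with $\Tr W_E\leq2^{-k-1}$. Then $W_E+(1-p)2^{-n}\rho_E$ dominates every block $\rho'^{\,x}_E$, and its trace is at most $2^{-k-1}+2^{-n}\leq2^{-k}$, using $k\leq n-1$. (For $k>n-1$ the hypothesis forces $\tau$ large or the claim is vacuous; I would check this edge case separately.) Hence $\hmin(X\mid E)_{\rho'}\geq k$. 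Moreover,
\[
 \|\rho'_{XE}-\rho_{XE}\|_1\leq2\tau+(1-p)\leq4\tau ,
\]
so $\dist(\rho'_{XE},\rho_{XE})\leq2\tau$. Partial trace is a channel, so also $\dist(\rho'_E,\rho_E)\leq2\tau$.

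\emph{Step 3 (apply the extractor and compare).} The map $\rho_{XE}\mapsto\rho_{\Ext(X,Y)YE}$ is a quantum channel: it appends an independent uniform $Y$ and classically computes $\Ext(X,Y)$. Contractivity therefore gives
\[
 \dist\bigl(\rho_{\Ext(X,Y)YE},\rho'_{\Ext(X,Y)YE}\bigr)\leq2\tau .
\]
The extractor guarantee applied to $\rho'$ gives
\[
 \dist\bigl(\rho'_{\Ext(X,Y)YE},\tau_m\otimes\tau_d\otimes\rho'_E\bigr)\leq\eta .
\]
Finally,
\[
 \dist\bigl(\tau_m\otimes\tau_d\otimes\rho'_E,\tau_m\otimes\tau_d\otimes\rho_E\bigr)\leq2\tau .
\]
The triangle inequality yields the bound $\eta+4\tau$.

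The only delicate point is Step 2: producing a normalized state with min-entropy $k$ without losing more than one bit. The padding trick handles this and sidesteps any case split on whether $p\geq1/2$.
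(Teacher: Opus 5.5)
Your argument is sound and follows the paper's outline. You take a smoothing witness $\tilde\rho_{XE}$ and turn it into a normalized cq state with $\hmin(X\mid E)\geq k$ within trace distance $2\tau$ of $\rho_{XE}$. You then pay $2\tau$ twice, through contractivity and the triangle inequality. The only difference is the renormalization step:
\begin{itemize}
\item The paper first dismisses $\tau>1/4$, where $\eta+4\tau>1$. For $\tau\leq1/4$ it uses $1-p\leq\tau^2\leq1/16$, which follows from $\overline F(\tilde\rho,\rho)\leq\sqrt p$. Dividing by $p$ then lowers the entropy by only $\log(16/15)<1$. This works for every $k$, with no reference to $n$.
\item Your padding with $(1-p)\,\tau_n\otimes\rho_E$ needs no lower bound on $p$ in the main case $k\leq n-1$. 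But it spends the spare bit on the additive term $(1-p)2^{-n}$ in the dominating operator, which ties $k$ to $n$.
\end{itemize}

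This is where your edge-case remark is inaccurate. Since $\hmin(X\mid E)_{\tilde\rho}\leq n+\log(1/p)$, the hypothesis only forces $p\leq2^{n-k-1}$. For non-integer $k$ slightly above $n-1$, this allows $1-p$ to be arbitrarily small. For example, take $\rho$ uniform with trivial $E$ and $\tilde\rho=0.99\,\rho$. This gives $P(\tilde\rho,\rho)=0.1$ and $\hmin(X\mid E)_{\tilde\rho}\geq n+0.01$, so the hypothesis holds with $\tau=0.1$ and $k=n-0.99>n-1$. The claim there is neither vacuous nor trivial, so $\tau$ is not forced to be large.

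The repair is one line, but it reintroduces the paper's case split:
\begin{itemize}
\item If $\tau>1/4$, the bound is trivial.
\item Otherwise $1-p\leq2\tau\leq1/2$. The dominating operator then has trace at most $2^{-k-1}+2^{-n-1}\leq2^{-k}$ for every $k\leq n$.
\item If $k>n$, then $p<1/2$, hence $\tau>1/4$, and the bound is again trivial.
\end{itemize}
With this fix your argument is complete. The padding does not really avoid a case split; it moves the split into the edge case.
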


\paragraph{Two-universal hashing.}
A family $\{h_w:\mathcal Z\to\{0,1\}^m\}_{w\in\mathcal W}$ is \emph{two-universal} if, for a uniform seed $W$ on $\mathcal W$,
\[
 \Pr_W[h_W(z)=h_W(z')]\leq2^{-m}
 \qquad\text{for every distinct }z,z'\in\mathcal Z.
\]

%% file: extractor_sampler_correspondence.tex
\section{The Quantum Extractor--Matrix Sampler Correspondence}\label{sec:bridge}

We now prove Theorem~\ref{thm:correspondence}, restated below.

\begin{correspondencerestatement}[restated]
\input{statements/correspondence}
\end{correspondencerestatement}

\begin{proof}[Proof of Theorem~\ref{thm:correspondence}]
For part~(1), for every $i\in [t]$, fix tests $f_i:\{0,1\}^m\to\mathbb C^{D\times D}$ with $\norm{f_i(z)}\leq1$. For $x\in\{0,1\}^n$, write
\begin{equation}\label{eq:bridge-deviation}
 A_x=\frac1t\sum_{i=1}^t\bigl(f_i(F(x,i))-\E_{Z\sim U_m}f_i(Z)\bigr).
\end{equation}
Suppose $\mathcal B=\{x:\norm{A_x}>\varepsilon\}$ has size greater than $\delta2^n$. For $x\in\mathcal B$, the Hermitian dilation $\mathcal H(A_x)$ has a positive eigenvalue $\norm{A_x}>\varepsilon$. Choose a corresponding unit eigenvector $v_x$. Let $X$ be uniform on $\mathcal B$. Define $\rho_{XE}=|\mathcal B|^{-1}\sum_{x\in\mathcal B}\ket{x}\bra{x}\otimes\ket{v_x}\bra{v_x}$, where $E$ has dimension $2D$. The encoding $x\mapsto\ket{v_x}\bra{v_x}$ is independent of the index $i$. The source $X$ has min-entropy
\[
 H_\infty(X)=\log|\mathcal B|>n-\log(1/\delta).
\]
Let $I$ be uniform on $[t]$ and independent of $XE$. For every $z\in \{0, 1\}^m$, the matrix $\mathcal H(f_i(z))$ is Hermitian and has norm at most one. Define the operators
\begin{equation}\label{eq:bridge-measurement}
 M_{i,z}:=(I_{2D}+\mathcal H(f_i(z)))/2, \quad \forall i\in [t], z \in \{0, 1\}^m.
\end{equation}
This
satisfies $0\preceq M_{i,z}\preceq I_{2D}$ for every $i, z$.

Define the measurement operator
\[
 M:=\sum_{z\in\{0,1\}^m}\sum_{i=1}^t
 |z\rangle\langle z|\otimes|i\rangle\langle i|\otimes M_{i,z}.
\]
Since $0\preceq M_{i,z}\preceq I_E$ for every $i,z$,
$\{M,I-M\}$ is a POVM. 
Let the outcome corresponding to $M$ be acceptance of the POVM.
Then the acceptance probabilities of doing the POVM on
$\rho_{F(X,I)IE}$ and $\tau_m\otimes\rho_{IE}$ differ by
\begin{align*}
 &\Tr\!\left[M\left(\rho_{F(X,I)IE}
                  -\tau_m\otimes\rho_{IE}\right)\right]\\
 &\quad=\E_X\frac1t\sum_{i=1}^t
 \left\langle v_X,
 \left(M_{i,F(X,i)}-\E_{Z\sim U_m}M_{i,Z}\right)v_X
 \right\rangle\\
 &\quad=\frac12\E_X\langle v_X,\mathcal H(A_X)v_X\rangle
 =\frac12\E_X\norm{A_X}
 >\varepsilon/2.
\end{align*}
The second equality follows from~\eqref{eq:bridge-measurement}
and~\eqref{eq:bridge-deviation}, the third from the choice of
$v_x$, and the final inequality from the fact that $X$ is
supported on $\mathcal B$.
By~\eqref{eq:trace-distance-test},
\[
 \Delta({\rho_{F(X,I)IE},\tau_m\otimes\rho_{IE}})
 \geq
 \Tr\!\left[M\left(\rho_{F(X,I)IE}
                   -\tau_m\otimes\rho_{IE}\right)\right]
 >\varepsilon/2,
\]
contradicting the extractor guarantee
$\Delta(\rho_{F(X,I)IE},\tau_m\otimes\rho_{IE})
\leq\varepsilon/2$.
This shows the strong extractor and strong sampler case of part (1).

For the weak extractor and ordinary sampler case of part (1),
notice that if $f_1=\cdots=f_t$, then $M_{i,z}=M_z$ is independent of $i$.
The same acceptance-probability gap is therefore achieved by
the doing POVM $\{M',I-M'\}$ on $\rho_{F(X,I)E}$ and 
$\tau_m\otimes\rho_E$ , where
\[
 M':=\sum_{z\in\{0,1\}^m}|z\rangle\langle z|\otimes M_z.
\]
This measurement distinguishes $\rho_{F(X,I)E}$ from
$\tau_m\otimes\rho_E$ with an acceptance-probability gap
greater than $\varepsilon/2$, contradicting the weak
extractor guarantee
$\Delta(\rho_{F(X,I)E},\tau_m\otimes\rho_E)\leq\varepsilon/2$.

%Now for every $i$ and $z$, consider a procedure which measures $E$ with $\{M_{i,z},I_{2D}-M_{i,z}\}$ and accept if the outcome corresponds to $M_{i,z}$.

%Conditioned on $X=x$ and $I=i$, by our definition of $\rho_{XE}$, the register $E$ is in the state $\ket{v_x}\bra{v_x}$. The probability of measuring $E$ with $\{ M_{i,F(x,i)} , I-M_{i,F(x,i)}\}$ on the conditioned state $\rho_{F(X,I)IE|X=x,I=i}$ and attaining the outcome corresponding to $M_{i,F(x,i)}$, is therefore $\langle v_x,M_{i,F(x,i)}v_x\rangle$. On the other hand, by replacing $F(X,I)$ with an independent uniform $Z\sim U_m$, we obtain  ${ \rho_{ZIE}}_{|I=i,X=x}$. The probability of measuring $E$ with $\{ M_{i,Z} , I-M_{i, Z}\}$ on ${ \rho_{ZIE}}_{|I=i,X=x}$ and attaining the outcome corresponding to $M_{i, Z}$,  is $\E_{Z\sim U_m}\langle v_x,M_{i,Z}v_x\rangle$. For each $x\in\mathcal B$, averaging their difference over $i$ gives
%\begin{align*}
% \frac1t\sum_{i=1}^t\left\langle v_x,\left(M_{i,F(x,i)}-%\E_{Z\sim U_m}M_{i,Z}\right)v_x\right\rangle
% &=\frac12\langle v_x,\mathcal H(A_x)v_x\rangle\\
% &=\frac12\norm{A_x}>\varepsilon/2.
%\end{align*}
%The first equality follows from~\eqref{eq:bridge-measurement} and~\eqref{eq:bridge-deviation}; the second uses the choice of $v_x$. Averaging over $X$ gives the difference in acceptance probabilities on $\rho_{F(X,I)IE}$ and $\tau_m\otimes \rho_{IE}$, which is still greater than $\varepsilon/2$. By~\eqref{eq:trace-distance-test}, this contradicts the bounded-storage extraction guarantee. If $f_1=\cdots=f_t$, then $M_{i,z}$ is independent of $i$. Weak extraction then suffices.

For part~(2), consider a cq state with $H_\infty(X)\geq n-a$ and $\dim E\leq D$. Pad $E$ to dimension $D$. Write the state as $\rho_{XE}=\sum_xp_x\ket{x}\bra{x}\otimes\rho_x$, where  $p_x $ here denotes $ \Pr[X = x]\leq2^{a-n}$.  Consider a POVM $\{M,I_{2^m}\otimes I_t\otimes I_D-M\}$ for measuring $\rho_{F(X,I)IE}$ or
$\tau_m\otimes\rho_{IE}$
which has the form
\[
 M=\sum_{z\in\{0,1\}^m}\sum_{i=1}^t
 |z\rangle\langle z|\otimes|i\rangle\langle i|\otimes M_{i,z},
 \qquad 0\preceq M_{i,z}\preceq I_E.
\]
We define acceptance of the measure to be the case when the outcome corresponds to $M$.

Set $f_i(z)=2M_{i,z}-I$. The matrices $f_i(z)$ are Hermitian and satisfy $\norm{f_i(z)}\leq1$. Apply the strong sampler to $f_1,\ldots,f_t$. For each $x\in\{0,1\}^n$, define the operator
\[
 B_x:=\frac1t\sum_{i=1}^t\bigl(M_{i,F(x,i)}-\E_{Z\sim U_m}M_{i,Z}\bigr).
\]
Define $\mathcal B=\{x\in\{0,1\}^n:\norm{B_x}>\varepsilon/2\}$. The strong sampler guarantee gives $|\mathcal B|\leq\delta2^n$. %The factor $1/2$ arises because the deviation for the tests $f_i$ equals $2B_x$.
Both the two averages in the definition of $B_x$ are in between $0$ and $I$. Thus $-I\preceq B_x\preceq I$ for every $x$. Moreover,
\[
 \Pr[X\in\mathcal B]=\sum_{x\in\mathcal B}p_x\leq2^{a-n}|\mathcal B|\leq2^a\delta.
\]

The difference of acceptance probabilities for measuring $\rho_{F(X,I)IE}$ and $\tau_m\otimes\rho_{IE}$, is
\begin{align*}
 &\Tr\!\left[M\left(\rho_{F(X,I)IE}
                  -\tau_m\otimes\rho_{IE}\right)\right]\\
 &\quad=\sum_xp_x\frac1t\sum_{i=1}^t
 \left(
 \Tr(\rho_xM_{i,F(x,i)})
 -\E_{Z\sim U_m}\Tr(\rho_xM_{i,Z})
 \right)\\
 &\quad=\sum_xp_x\Tr(\rho_xB_x),
\end{align*}
where the first equality follows from the block-diagonal
form of $M$, and the second from the definition of $B_x$.

The absolute difference in acceptance probabilities is therefore
\begin{align*}
 \left|\sum_xp_x\Tr(\rho_xB_x)\right|
 &\leq\sum_{x\notin\mathcal B}p_x\norm{B_x}+\sum_{x\in\mathcal B}p_x\norm{B_x}\\
 &\leq\frac\varepsilon2\Pr[X\notin\mathcal B]+\Pr[X\in\mathcal B]
 \leq\frac\varepsilon2+2^a\delta.
\end{align*}
Since this holds for all measurements $M$, it proves the strong extractor and strong sampler case of part~(2). 

For weak extractors, consider a binary POVM $\{M,I_{2^m}\otimes I_D- M\}$with 
\[
 M=\sum_{z\in\{0,1\}^m}|z\rangle\langle z|\otimes M_z.
\]
Set $f(z)=2M_z-I_E$. Applying the ordinary sampler guarantee
to $f$ gives the same bound on the acceptance-probability
difference between $\rho_{F(X,I)E}$ and
$\tau_m\otimes\rho_E$, proving part~(2) for weak extraction.
\end{proof}

\begin{corollary}\label{cor:conditional-forward}
Let $F:\{0,1\}^n\times\{0,1\}^d\to\{0,1\}^m$ be a
strong (resp.\ weak) quantum-proof $(k,\eta)$ extractor.
Then $F$ is a strong (resp.\ ordinary) $D$-dimensional
$(\delta,\varepsilon)$ matrix sampler whenever
\begin{equation}\label{eq:conditional-forward}
 k\leq n-\log(2D/\delta),\qquad \eta\leq\varepsilon/2.
\end{equation}
The sampler uses $r=n$ random bits and has sample complexity
$t=2^d$.
\end{corollary}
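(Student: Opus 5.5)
The plan is to derive Corollary~\ref{cor:conditional-forward} from part~(1) of Theorem~\ref{thm:correspondence}. Identify $\{0,1\}^d$ with $[t]$ for $t=2^d$, so that $F:\{0,1\}^n\times[t]\to\{0,1\}^m$. A uniform seed on $\{0,1\}^d$ then becomes a uniform index on $[t]$, and the sampler's random input is the $n$-bit source string; this accounts for $r=n$ and $t=2^d$. It therefore suffices to show that $F$ is a strong (resp.\ weak) $\log(2D)$-qubit-proof $(n-\log(1/\delta),\varepsilon/2)$ extractor. Part~(1) then yields the strong (resp.\ ordinary) matrix sampler conclusion directly.

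For this, take any cq state $\rho_{XE}$ with $H_\infty(X)_\rho\geq n-\log(1/\delta)$ and $\dim E\leq 2D$. The second inequality in~\eqref{eq:chain} gives
\[
\hmin(X\mid E)_\rho\geq H_\infty(X)_\rho-\log\dim E\geq n-\log(1/\delta)-\log(2D)=n-\log(2D/\delta)\geq k,
\]
where the last step uses the hypothesis~\eqref{eq:conditional-forward}. The quantum-proof $(k,\eta)$ guarantee of $F$ therefore applies to $\rho_{XE}$. It bounds the relevant trace distance, namely~\eqref{eq:strong-extraction} in the strong case and~\eqref{eq:weak-extraction} in the weak case, by $\eta\leq\varepsilon/2$. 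This is exactly the bounded-storage extractor property with parameters $(n-\log(1/\delta),\varepsilon/2)$ and $b=\log(2D)$. Monotonicity in the entropy threshold and in the error is immediate, since a $(k,\eta)$ extractor is also a $(k',\eta')$ extractor for all $k'\geq k$ and $\eta'\geq\eta$.

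There is essentially no obstacle. The only care needed is bookkeeping: the seed-space identification must preserve uniformity, and the strong and weak cases must be paired with the correct sampler notions as in Theorem~\ref{thm:correspondence}. If $\log(2D)$ is not an integer, interpret $\dim E\leq 2^b$ with $b=\log(2D)$ literally, which is how the proof of part~(1) uses it: there $E$ has dimension exactly $2D$.
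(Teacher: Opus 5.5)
Your proposal is correct and matches the paper's proof: use the second inequality in~\eqref{eq:chain} to get $\hmin(X\mid E)_\rho\geq n-\log(2D/\delta)\geq k$. This shows that $F$ is a $\log(2D)$-qubit-proof $(n-\log(1/\delta),\varepsilon/2)$ extractor, and part~(1) of Theorem~\ref{thm:correspondence} then gives the sampler.
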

\begin{proof}
Let $\rho_{XE}$ be a cq state with
$H_\infty(X)_\rho\geq n-\log(1/\delta)$ and $\dim E\leq2D$.
By the second inequality in~\eqref{eq:chain},
\[
 \hmin(X\mid E)_\rho
 \geq H_\infty(X)-\log\dim E
 \geq n-\log(2D/\delta)
 \geq k.
\]
Thus $F$ is a strong (resp.\ weak) $(\log 2D)$-qubit-proof 
$(n-\log(1/\delta),\varepsilon/2)$ extractor.
Applying Theorem~\ref{thm:correspondence}, part~(1),
gives the claimed sampler.
\end{proof}

%% file: quantum_extractor.tex
\section{A Quantum-Proof Extractor with Linear Output}\label{sec:design}
Fix a constant conditional min-entropy rate $\mu>0$. Put $h=\log(n/\eta)$ and assume that $h$ is at most a sufficiently small constant multiple of $n$. We construct a linear-output quantum-proof extractor with seed length $O_\mu(h\log^2(n/h))$.

\subsection{Collision bounds and hashing}\label{sec:collision}
We combine the collision bound of Tomamichel et al.~\cite[Lemma~5]{LHL} with the quantum leftover hash lemma~\cite[Theorem~5.5.1]{Ren05}.

\begin{lemma}\label{thm:collision}
Let $F:\mathcal X\times\mathcal Y\to\mathcal Z$ satisfy
\[
 \Pr_{Y\sim U_{\mathcal Y}}[F(x,Y)=F(x',Y)]\leq\xi
 \qquad\text{for all distinct }x,x'\in\mathcal X.
\]
Let $k$ be an entropy threshold and let $\eta>0$. Let $\{h_w:\mathcal Z\to\{0,1\}^m\}_{w\in\mathcal W}$ be a two-universal hash family, and suppose
\begin{equation}\label{eq:collision-extraction}
 \frac12\sqrt{2^m(2^{-k}+\xi)}\leq\eta.
\end{equation}
Then $\Ext:\mathcal X\times(\mathcal Y\times\mathcal W)\to\{0,1\}^m$, defined by $\Ext(x;y,w)=h_w(F(x,y))$, is a strong quantum-proof $(k,\eta)$ extractor with a uniform seed on $\mathcal Y\times\mathcal W$.
\end{lemma}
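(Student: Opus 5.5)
We argue by composing two hashing steps and bounding a single collision probability. The function $F$ plays the role of a first-stage almost-universal hash, and $h_w$ a second-stage two-universal hash. The key observation is that the composed family $\{g_{y,w}=h_w\circ F(\cdot,y)\}_{(y,w)\in\mathcal Y\times\mathcal W}$, with $(Y,W)$ uniform on $\mathcal Y\times\mathcal W$, is $(2^{-m}+\xi)$-almost universal. For distinct $x,x'$, condition on $Y$. If $F(x,Y)=F(x',Y)$, which happens with probability at most $\xi$, the outputs collide with probability at most one. Otherwise $F(x,Y)\neq F(x',Y)$, and two-universality of $\{h_w\}$ gives collision probability at most $2^{-m}$ over $W$. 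Hence
\[
 \Pr_{Y,W}\bigl[h_W(F(x,Y))=h_W(F(x',Y))\bigr]\leq\xi+2^{-m}.
\]

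We then invoke the quantum leftover hash lemma for almost-universal (in the terminology of~\cite{LHL}, $\delta$-almost two-universal) families. This is where~\cite[Lemma~5]{LHL} enters, with~\cite[Theorem~5.5.1]{Ren05} supplying the underlying collision-entropy-to-trace-distance bound. For a family whose collision probability is at most $(1+\beta)2^{-m}$, and a cq state with $\hmin(X\mid E)_\rho\geq k$, it gives
\[
 \dist\bigl(\rho_{g_S(X)SE},\tau_m\otimes\tau_S\otimes\rho_E\bigr)\leq\frac12\sqrt{\beta+2^{m-k}},
\]
where $S=(Y,W)$ is the uniform seed.

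Substituting $(1+\beta)2^{-m}=2^{-m}+\xi$ gives $\beta=2^m\xi$. The bound becomes $\frac12\sqrt{2^m(\xi+2^{-k})}$, which is at most $\eta$ by~\eqref{eq:collision-extraction}. This is precisely the strong extraction condition~\eqref{eq:strong-extraction}, with the seed register $\mathcal Y\times\mathcal W$ replacing $\{0,1\}^d$.

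The main obstacle is making sure the cited lemma is applied in the right form. Its proof goes through the collision entropy: one bounds $\Tr\bigl[(\rho_E^{-1/4}\rho_{g_S(X)SE}\rho_E^{-1/4})^2\bigr]$, or uses a reference state $\omega_E$ obtained from~\eqref{eq:entropy-domination}. Averaging over the seed, the diagonal terms $x=x'$ contribute at most $2^{-k}$, because $\rho_E^x\preceq 2^{-k}\omega_E$ gives $\sum_x\Tr(\rho_E^x\omega_E^{-1/2}\rho_E^x\omega_E^{-1/2})\leq 2^{-k}$. The off-diagonal terms contribute the collision probability times $\Tr(\rho_E\omega_E^{-1/2}\rho_E\omega_E^{-1/2})$-type quantities, after subtracting the uniform part $2^{-m}$. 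Finally, Cauchy--Schwarz with the weight $\omega_E$ converts this $\ell_2$ bound into trace distance, yielding the square root with the factor $2^m$. If the cited statement is phrased with min-entropy replaced by collision entropy, we use $H_2\geq\hmin$ to conclude.
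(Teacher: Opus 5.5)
Your argument is correct and reaches exactly the paper's bound, organized slightly differently. You merge the two stages into one family $g_{y,w}=h_w\circ F(\cdot,y)$ and check that it is $(\xi+2^{-m})$-almost universal. That check is right and uses the independence of $Y$ and $W$. You then apply the leftover hash lemma for almost two-universal families once, with $\beta=2^m\xi$.

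The paper keeps the stages separate. It uses \cite[Lemma~5]{LHL} as a collision-propagation inequality, $\cp(F(X,Y)\mid YE)\le\cp(X\mid E)+\xi$, which moves $Y$ into the side information. It then applies only the standard two-universal hash lemma in collision form \cite[Theorem~5.5.1]{Ren05} to $Z=F(X,Y)$ with side information $YE$. The entropy hypothesis enters through the pretty-good-measurement inequality $\cp(X\mid E)\le\pg(X\mid E)\le2^{-k}$, which the paper proves explicitly.

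Unwound, the two computations are the same diagonal/off-diagonal split of $\sum_{x,x'}\Pr_{Y,W}[g_{Y,W}(x)=g_{Y,W}(x')]\,\Tr(\rho_E^x\rho_E^{-1/2}\rho_E^{x'}\rho_E^{-1/2})$. Yours is more compact and makes the almost-universality of the composite family explicit. The paper's needs only the plain two-universal lemma plus an elementary first-stage estimate.

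One caution about your sketch of the underlying computation: the weight must be $\rho_E$ itself, not the reference state $\omega_E$ from~\eqref{eq:entropy-domination}. With weight $\rho_E$, the off-diagonal terms contribute at most $(\xi+2^{-m})\Tr\rho_E$, so after subtracting the uniform part you are left with exactly $\xi$. With weight $\omega_E$, they contribute $\xi\,\Tr(\rho_E\omega_E^{-1/2}\rho_E\omega_E^{-1/2})$. That factor is at least one and, for a valid certificate $\rho_E^x\preceq2^{-k}\omega_E$, can be of order $|\mathcal X|2^{-k}$. This would spoil the bound. So the diagonal term should be controlled by $\cp(X\mid E)\le\pg(X\mid E)$, as in the paper, rather than by the $\omega_E$-domination with an $\omega_E$ weight. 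The correctly stated almost-universal lemma you cite already handles this, so the main line of your proof stands.
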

\begin{proof}
For a normalized cq state $\omega_{AE}=\sum_a\ket a\bra a\otimes\omega_E^a$, define
\[
 \cp(A\mid E)_\omega
 :=\sum_a\Tr\!\left(\omega_E^a\omega_E^{-1/2}\omega_E^a\omega_E^{-1/2}\right),
 \qquad \omega_E=\sum_a\omega_E^a,
\]
where, if $\omega_E=\sum_j\lambda_j\ket{e_j}\bra{e_j}$ is a
spectral decomposition, we define
\[
 \omega_E^{-1/2}
 :=\sum_{j:\lambda_j>0}\lambda_j^{-1/2}\ket{e_j}\bra{e_j}.
\] 
The operators
\[
 \left\{M_a:=\omega_E^{-1/2}\omega_E^a\omega_E^{-1/2}
 \right\}_{a\in\mathcal A}
\]
form a POVM, where $\mathcal A$ is the
alphabet of $A$.
The probability that this measurement outputs the correct value of $A$ is $\sum_a\Tr(M_a\omega_E^a)=\cp(A\mid E)_\omega$. Hence $\cp(A\mid E)_\omega\leq\pg(A\mid E)_\omega$. Recall that $\pg$ is defined by equation \eqref{eq:pgood}.

Fix a source $\rho_{XE}$ with $\hmin(X\mid E)\geq k$, and put $Z=F(X,Y)$ for an independent uniform seed $Y$. Applying~\cite[Lemma~5, Eq.~(17)]{LHL} gives
\[
 \cp(Z\mid YE)_\rho
 \leq\cp(X\mid E)_\rho+\xi
 \leq2^{-k}+\xi.
\]
The quantum leftover hash lemma in collision form~\cite[Theorem~5.5.1]{Ren05}, with side information $YE$ and an independent uniform hash seed $W$, now yields
\[
 \dist\!\left(\rho_{h_W(Z)WYE},
       \tau_m\otimes\rho_W\otimes\rho_{YE}\right)
 \leq\frac12\sqrt{2^m\cp(Z\mid YE)_\rho}
 \leq\frac12\sqrt{2^m(2^{-k}+\xi)}.
\]
\end{proof}

\subsection{Multilinear evaluations and weak designs}
We adapt the extractor construction of Raz, Reingold, and Vadhan~\cite{RRV02} to sources with quantum side information. %Encode the source string as a multilinear polynomial. To produce each output block, evaluate the polynomial at a point and hash the result. We save randomness by taking all evaluation points from coordinate restrictions of one random vector. A weak design controls the overlaps among the coordinate sets.

\begin{definition}[Weak design]
A family $S_1,\ldots,S_M\subseteq[v]$ is an $(\ell,\varrho)$ weak design if $|S_i|=\ell$ for all $i$ and $\sum_{j<i}2^{|S_i\cap S_j|}\leq\varrho M$ for every $i$.
\end{definition}
\begin{theorem}[{\cite[Lemma~15]{RRV02}}]\label{thm:design-size}
For every $\varrho>1$, there is an explicit design with
\begin{equation}\label{eq:design-size}
 v=O\!\left(\ell+\frac{\ell^2}{\log\varrho}\right).
\end{equation}
The design can be constructed in time polynomial in $M$ and $v$.
\end{theorem}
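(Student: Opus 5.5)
The plan is the standard greedy construction of Raz, Reingold, and Vadhan: build the sets one at a time, show by an averaging argument that a good next set always exists, and make the choice deterministic with the method of conditional expectations.

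\textbf{Block structure.} Set $q=\lceil\ell/\ln\varrho\rceil$ and $v=\ell q$. Identify $[v]$ with $[\ell]\times[q]$, and call $\{b\}\times[q]$ the $b$-th block. Every set $S_i$ will contain exactly one element from each block, so $|S_i|=\ell$ automatically. We also get
\[
v\leq\ell(\ell/\ln\varrho+1)=O(\ell+\ell^2/\log\varrho),
\]
which is the bound in~\eqref{eq:design-size}.

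\textbf{Averaging step.} Suppose $S_1,\ldots,S_{i-1}$ are already fixed. Let $S$ be random, choosing its element in each block independently and uniformly from the $q$ options. For a fixed $j<i$, the quantity $|S\cap S_j|$ is a sum of $\ell$ independent indicators. The $b$-th indicator says whether $S$ picks the same element of block $b$ as $S_j$, so it has mean $1/q$. Hence
\[
\E\,2^{|S\cap S_j|}=\prod_{b=1}^{\ell}\Bigl(1+\frac1q\Bigr)=\Bigl(1+\frac1q\Bigr)^{\ell}\leq e^{\ell/q}\leq\varrho .
\]
By linearity, $\E\sum_{j<i}2^{|S\cap S_j|}\leq(i-1)\varrho\leq\varrho M$. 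So some choice of $S_i$ satisfies the weak-design condition at step $i$. The condition at step $i$ only involves earlier sets, so later choices never break it, and greedy selection produces the whole family.

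\textbf{Derandomization (the main obstacle).} Explicitness needs a deterministic choice of $S_i$. I would use conditional expectations, fixing the element of $S_i$ in blocks $1,\ldots,\ell$ in order. At each stage, pick the element that minimizes the conditional expectation of $\Phi=\sum_{j<i}2^{|S\cap S_j|}$, with the remaining blocks still uniform. This keeps the conditional expectation from increasing, so after all $\ell$ blocks are fixed the final value is at most $\varrho M$.

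The key check is that these conditional expectations can be computed exactly and efficiently. The blocks are independent and $2^{|S\cap S_j|}$ factors over blocks, so each term is a product:
\begin{itemize}
\item a factor of $2$ for each fixed block where $S$ agrees with $S_j$, and $1$ for each fixed block where it does not;
\item a factor of $(1+1/q)$ for each unfixed block.
\end{itemize}
Each term therefore costs $O(\ell)$ arithmetic operations on rationals of polynomial bit length. One candidate evaluation costs $O(M\ell)$, and there are $q$ candidates per block, $\ell$ blocks per set, and $M$ sets. The total is $O(M^2\ell^2q)=\poly(M,v)$ time, which gives the claimed explicit construction.
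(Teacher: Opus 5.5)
Your proposal is correct and is essentially the argument of the cited \cite[Lemma~15]{RRV02}, which the paper invokes without reproving: split $[v]$ into $\ell$ blocks of size $q=\lceil\ell/\ln\varrho\rceil$, use $\E\,2^{|S\cap S_j|}=(1+1/q)^\ell\leq\varrho$ for a random one-per-block set $S$, and fix each $S_i$ greedily by conditional expectations. Your observation that each conditional expectation factors exactly over the blocks correctly gives the $\poly(M,v)$ running time.
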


\begin{theorem}\label{thm:design}
Let $n,\lambda,\ell,M,m_0\geq1$ be integers with $n\leq\lambda2^\ell$ and $m_0\leq\lambda$. Let $S_1,\ldots,S_M\subseteq[v]$ be an $(\ell,\varrho)$ weak design. For every $0\leq k\leq n$ and $\eta>0$ satisfying
\begin{equation}\label{eq:design-error}
 \frac M2\sqrt{2^{m_0}\left(2^{\lambda\varrho M-k}+\ell2^{-\lambda}\right)}\leq\eta,
\end{equation}
there is a strong quantum-proof $(k,\eta)$ extractor $\Ext:\{0,1\}^n\times\{0,1\}^d\to\{0,1\}^m$ with $d=\lambda(v+1)$ and $m=M m_0$.
\end{theorem}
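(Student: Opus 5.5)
We follow the outline in the technical overview. First we fix the construction. Identify $\F_{2^\lambda}$ with $\{0,1\}^\lambda$. Pad $x\in\{0,1\}^n$ with zeros to $\lambda2^\ell$ bits and read it as the $2^\ell$ coefficients of a multilinear polynomial $p_x$ in $\ell$ variables over $\F_{2^\lambda}$. Fix a two-universal family $\{h_w:\F_{2^\lambda}\to\{0,1\}^{m_0}\}_{w\in\F_{2^\lambda}}$; since $m_0\leq\lambda$, one can take $h_w(u)$ to be the first $m_0$ bits of $wu$. The seed is $(Y,W)$ with $Y\in\F_{2^\lambda}^v$ and $W\in\F_{2^\lambda}$, so $d=\lambda(v+1)$. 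The output consists of the $M$ blocks $Z_i=h_W(p_X(Y|_{S_i}))$, so $m=Mm_0$.

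The single-block guarantee comes from Schwartz--Zippel. For distinct $x,x'$, the polynomial $p_x-p_{x'}$ is nonzero, multilinear, and has total degree at most $\ell$. Hence $\Pr_U[p_x(U)=p_{x'}(U)]\leq\ell2^{-\lambda}$ for $U$ uniform on $\F_{2^\lambda}^\ell$. Lemma~\ref{thm:collision} with $\xi=\ell2^{-\lambda}$ then says that, for any cq state with $\hmin(X\mid E')\geq k'$, the single block $h_W(p_X(U))$ is within $\frac12\sqrt{2^{m_0}(2^{-k'}+\ell2^{-\lambda})}$ of uniform given $U,W,E'$.

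Next we show that each block is close to uniform given the earlier ones. Fix $i$ and condition on $Y'=Y|_{[v]\setminus S_i}$. For $j<i$, the map $u\mapsto p_X(u|_{S_j})$ with the coordinates outside $S_i$ fixed by $Y'$ is a multilinear polynomial in the $|S_i\cap S_j|$ variables indexed by $S_i\cap S_j$. Let $A_i$ be the list of its coefficients over all $j<i$, a classical function of $(X,Y')$ with at most $\lambda\sum_{j<i}2^{|S_i\cap S_j|}\leq\lambda\varrho M$ bits. Then $(A_i,Y,W)$ determines $Z_1,\dots,Z_{i-1}$.

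For each fixed $Y'=y'$, the leakage inequality~\eqref{eq:chain} gives $\hmin(X\mid A_iE)\geq k-\lambda\varrho M$. We formally treat $Y'$ as an extra classical register independent of $XE$. Since $Y|_{S_i}$ and $W$ are uniform and independent of $(X,A_i,E,Y')$, the single-block bound applies with side information $A_iY'E$ and $k'=k-\lambda\varrho M$. This gives
\[
\dist\bigl(\rho_{Z_iA_iYWE},\tau_{m_0}\otimes\rho_{A_iYWE}\bigr)\leq\tfrac12\sqrt{2^{m_0}(2^{\lambda\varrho M-k}+\ell2^{-\lambda})}=:\eta_1 .
\]
Applying the channel that computes $Z_{<i}$ from $(A_i,Y,W)$ and then traces out $A_i$, contractivity of trace distance gives
\[
\dist\bigl(\rho_{Z_iZ_{<i}YWE},\tau_{m_0}\otimes\rho_{Z_{<i}YWE}\bigr)\leq\eta_1 .
\]

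Finally we run a hybrid argument. Let $H_i$ be the state in which $Z_i,\dots,Z_M$ are replaced by independent uniform blocks while $Z_{<i}$, $Y$, $W$, $E$ keep their true joint state. Moving from $H_{i+1}$ to $H_i$ tensors the blocks $i+1,\dots,M$ with uniform onto each side of the bound above, so it costs at most $\eta_1$. Summing over the $M$ steps, the total error is at most $M\eta_1\leq\eta$ by~\eqref{eq:design-error}, and $H_1=\tau_m\otimes\tau_d\otimes\rho_E$.

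The main obstacle is making the conditioning on $Y'$ rigorous. The advice $A_i$ depends on $Y'$, so we must check that conditioning on $Y'$ keeps $\hmin(X\mid E)\geq k$; it does, because $Y'$ is independent of $XE$. We must also check that the per-$y'$ bounds average correctly over the classical register $Y'$, which holds because the states are block-diagonal in $Y'$ and trace distance is additive over classical blocks.
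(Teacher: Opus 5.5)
Your proposal is correct and follows the paper's proof essentially step for step. It uses the same multilinear encoding with the hash $\pi_{m_0}(W\,\cdot)$, the same advice $A_i$ of restricted-polynomial coefficients bounded via the weak design and~\eqref{eq:chain}, Lemma~\ref{thm:collision} with $\xi=\ell2^{-\lambda}$, the recomputation channel for $Z_{<i}$, and the telescoping hybrid over the $M$ blocks; the only cosmetic difference is that you keep $Y|_{[v]\setminus S_i}$ as a classical register in the side information, whereas the paper conditions on its value and averages afterwards, and the two are equivalent.
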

\begin{proof}
Fix a binary basis of $\F_{2^\lambda}$. On input $x\in\{0,1\}^n$, the extractor proceeds as follows.
\begin{enumerate}
\item \textbf{Polynomial encoding.} Pad $x$ with zeroes to $\lambda2^\ell$ bits. Divide the padded string into $2^\ell$ blocks of $\lambda$ bits. Index the blocks by subsets $I\subseteq[\ell]$ in a fixed order. Interpret each block as a field element $a_I(x)$. Define
\[
 p_x(z_1,\ldots,z_\ell)=\sum_{I\subseteq[\ell]}a_I(x)\prod_{j\in I}z_j \in\mathbb F_{2^\lambda}[z_1,\ldots,z_\ell].
\]
Distinct strings $x$ give distinct formal polynomials $p_x$.
\item \textbf{Generating the seed.} Let $(Y,W)$ be uniform on $\F_{2^\lambda}^v\times\F_{2^\lambda}$. For each $i$, the point $Y|_{S_i}\in\F_{2^\lambda}^\ell$ lists the coordinates indexed by $S_i$ in increasing order. The same pair $(Y,W)$ is used for all output blocks.
\item \textbf{Evaluation and hashing.} Fix a linear projection $\pi_{m_0}:\F_{2^\lambda}\to\{0,1\}^{m_0}$; for example, retain the first $m_0$ basis coordinates. For each $i$, evaluate $p_x$ at $Y|_{S_i}$, multiply the result by $W$, and apply $\pi_{m_0}$. Concatenating the blocks gives
\[
 \Ext(x;Y,W)=\bigl(\pi_{m_0}(Wp_x(Y|_{S_1})),\ldots,\pi_{m_0}(Wp_x(Y|_{S_M}))\bigr).
\]
\end{enumerate}

Fix a normalized cq source $\rho_{XE}$ with $\hmin(X\mid E)_\rho\geq k$. Extend $\rho_{XE}$ by the independent uniform seed $(Y,W)$ and the output $Z=(Z_1,\ldots,Z_M)$. Denote the resulting joint state by $\rho$. The output blocks are
\[
 Z_i=\pi_{m_0}(Wp_X(Y|_{S_i})).
\]
Write $Z_{\leq i}=(Z_1,\ldots,Z_i)$ and $Z_{<i}=(Z_1,\ldots,Z_{i-1})$, with $Z_{\leq0}$ and $Z_{<1}$ empty. Set
\[
 \varepsilon_0=\frac12\sqrt{2^{m_0}\left(2^{\lambda\varrho M-k}+\ell2^{-\lambda}\right)}.
\]

\emph{Hybrid states.} For $0\leq i\leq M$, replace $Z_{i+1},\ldots,Z_M$ by independent uniform blocks to obtain
\[
 \rho^{(i)}_{ZYWE}
 =\rho_{Z_{\leq i}YWE}\otimes\tau_{m_0}^{\otimes(M-i)}.
\]
We abbreviate $\rho^{(i)}_{ZYWE}$ by $\rho^{(i)}$. The endpoints are
\[
 \rho^{(0)}_{ZYWE}=\tau_m\otimes\rho_{YWE},\qquad
 \rho^{(M)}_{ZYWE}=\rho_{ZYWE}.
\]
For $1\leq i\leq M$, the states $\rho^{(i)}$ and $\rho^{(i-1)}$ have the same independent uniform suffix $(Z_{i+1},\ldots,Z_M)$. Discarding $(Z_{i+1},\ldots,Z_M)$ preserves the trace distance between $\rho^{(i)}$ and $\rho^{(i-1)}$. Thus
\[
 \dist(\rho^{(i)},\rho^{(i-1)})
 =\dist\!\left(\rho_{Z_iZ_{<i}YWE},
       \tau_{m_0}\otimes\rho_{Z_{<i}YWE}\right).
\]
We will prove $\dist(\rho^{(i)},\rho^{(i-1)})\leq\varepsilon_0$ for every $i\in[M]$.

\emph{Distance between consecutive hybrids.} Fix $i$, and split $Y$ into $Y_{\mathrm{in}}=Y|_{S_i}$ and $Y_{\mathrm{out}}=Y|_{[v]\setminus S_i}$. For fixed $y_{\mathrm{out}}$, let
\[
 \sigma_{XEYW}:
 =\rho_{XEYW\mid Y_{\mathrm{out}}=y_{\mathrm{out}}}
\]
Conditioning on $Y_{\mathrm{out}}=y_{\mathrm{out}}$ leaves the marginal on $XE$ unchanged, so $\sigma_{XE}=\rho_{XE}$. The seeds $Y_{\mathrm{in}},W$ are independent and uniform of $\sigma_{XE}$

For each $j<i$, fixing $Y_{\mathrm{out}}$ fixes all coordinates
of $Y|_{S_j}$ outside $S_i\cap S_j$. Thus $p_X(Y|_{S_j})$ becomes a multilinear polynomial $q_j$
in the remaining coordinates $Y|_{S_i\cap S_j}$.
Let $A_i$ contain the coefficients of $q_j$ for all $j<i$. There are $2^{|S_i\cap S_j|}$ field coefficients for each $j$. Each field coefficient is represented by $\lambda$ bits. Thus $A_i$ has at most $\lambda\sum_{j<i}2^{|S_i\cap S_j|}\leq\lambda\varrho M$ bits. The classical leakage inequality~\eqref{eq:chain} gives
\[
 \hmin(X\mid A_iE)_\sigma\geq k-\lambda\varrho M.
\]
The advice $A_i$ depends only on $X$ and the fixed $y_{\mathrm{out}}$. Hence $Y_{\mathrm{in}}$ and $W$ remain uniform and independent of $XA_iE$ in $\sigma$.

Define
\[
 F:\{0,1\}^n\times\F_{2^\lambda}^{\ell}\to\F_{2^\lambda},
 \qquad F(x,u)=p_x(u).
\]
For $x\ne x'$, the polynomial $p_x-p_{x'}$ is nonzero and has total degree at most $\ell$. Hence, for uniform $u\in\F_{2^\lambda}^{\ell}$,
\[
 \Pr_u[F(x,u)=F(x',u)]\leq\ell2^{-\lambda}.
\]

Define the hash family $h_w(z):=\pi_{m_0}(wz)$. For $z\ne z'$, $W(z-z')$ is uniform on $\F_{2^\lambda}$. Therefore $\pi_{m_0}(W(z-z'))$ is zero with probability $2^{-m_0}$, and the family $\{h_w\}$ is two-universal.

Apply Lemma~\ref{thm:collision} to $Z_i=h_W(F(X,Y_{\mathrm{in}}))$ with source $X$, seed $Y_{\mathrm{in}}$, hash seed $W$, side information $A_iE$ and condition $\hmin(X\mid A_iE)_\sigma\geq k-\lambda\varrho M$. For output length $m_0$ and $\xi:=\ell2^{-\lambda}$.  Lemma~\ref{thm:collision} gives
\[
 \dist\!\left(\sigma_{Z_iA_iYWE},
       \tau_{m_0}\otimes\sigma_{A_iYWE}\right)\leq\varepsilon_0.
\]

For each $j<i$,  by construction,
\[
 q_j(Y|_{S_i\cap S_j})=p_X(Y|_{S_j}).
\]
Recall that $A_i$ contain the coefficients of $q_j$ for all $j<i$. So $A_i,Y,W$ determine  $Z_j$ for every $j<i$ through
\[
 Z_j=\pi_{m_0}\!\left(Wq_j(Y|_{S_i\cap S_j})\right).
\]
Let $\mathcal R_i$ be the deterministic channel that uses this
equation to compute $Z_{<i}$ from $A_i,Y,W$, then discards $A_i$,
and leaves $Z_i,Y,W,E$ unchanged. Thus
\begin{align*}
 \mathcal R_i(\sigma_{Z_iA_iYWE})
 &=\sigma_{Z_iZ_{<i}YWE},\\
 \mathcal R_i(\tau_{m_0}\otimes\sigma_{A_iYWE})
 &=\tau_{m_0}\otimes\sigma_{Z_{<i}YWE}.
\end{align*}
Contractivity gives
\[
 \dist\!\left(\sigma_{Z_iZ_{<i}YWE},
       \tau_{m_0}\otimes\sigma_{Z_{<i}YWE}\right)\leq\varepsilon_0.
\]

Averaging over $y_{\mathrm{out}}$ gives $\dist(\rho^{(i)},\rho^{(i-1)})\leq\varepsilon_0$.

\emph{Total error.} Summing the adjacent distances yields
\[
 \dist(\rho_{ZYWE},\tau_m\otimes\rho_{YWE})
 \leq\sum_{i=1}^M\dist(\rho^{(i)},\rho^{(i-1)})
 \leq M\varepsilon_0.
\]
By~\eqref{eq:design-error}, $M\varepsilon_0\leq\eta$, proving the strong extraction guarantee.
\end{proof}

\subsection{Linear output at constant entropy rate}\label{sec:long-output}
\begin{lemma}\label{lem:long-output}
Fix a constant $0<\mu\leq1$. There are positive constants $c_\mu,a_\mu,A_\mu$ such that, for all sufficiently large $n$, every $\mu n\leq k\leq n$, and every $0<\eta\leq1/2$ satisfying $h=\log(n/\eta)\leq c_\mu n$, there is an explicit strong quantum-proof $(k,\eta)$ extractor $\Ext:\{0,1\}^n\times\{0,1\}^d\to\{0,1\}^m$ such that
\[
 d\leq A_\mu h\log^2(n/h),\qquad m\geq a_\mu n.
\]
The map $\Ext$ is computable in polynomial time for fixed $\mu$.
\end{lemma}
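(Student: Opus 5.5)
We instantiate Theorem~\ref{thm:design} with $\varrho=2$ and the explicit weak design of Theorem~\ref{thm:design-size}. Put $h=\log(n/\eta)$.

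\textbf{Parameter choices.} Choose
\[
 \lambda=\lceil C_1h\rceil,\qquad
 \ell=\lceil\log(n/\lambda)\rceil+1,\qquad
 m_0=\lfloor\lambda/4\rfloor,\qquad
 M=\lfloor\mu n/(8\lambda)\rfloor,
\]
where $C_1$ is a large absolute constant. Then $n\leq\lambda2^\ell$ and $m_0\leq\lambda$, as Theorem~\ref{thm:design} requires. Since $\varrho=2$, Theorem~\ref{thm:design-size} gives $v=O(\ell^2)$. Because $\lambda\geq h$, we have $\ell=O(\log(n/h))$, so the seed length is
\[
 d=\lambda(v+1)=O(h\log^2(n/h))\leq A_\mu h\log^2(n/h).
\]
The output length is $m=Mm_0\geq(\mu n/(8\lambda)-1)(\lambda/4-1)$. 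The hypothesis $h\leq c_\mu n$ with $c_\mu$ small makes $\mu n/(8\lambda)\geq2$, which gives $m\geq a_\mu n$ with $a_\mu\approx\mu/64$. The design and the field arithmetic over $\F_{2^\lambda}$ are computable in polynomial time; we fix an irreducible polynomial by deterministic search, which is polynomial in $2^{O(\lambda)}$ only if $\lambda=O(\log n)$. The safer route is to use a known explicit irreducible polynomial, such as $x^{2\cdot3^j}+x^{3^j}+1$, after rounding $\lambda$ up to a suitable degree; this changes $\lambda$ by only a constant factor.

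\textbf{Error condition~\eqref{eq:design-error}.} Since $\lambda\varrho M=2\lambda M\leq\mu n/4$ and $k\geq\mu n$, we have $2^{\lambda\varrho M-k}\leq2^{-3\mu n/4}$. The condition is therefore implied by
\[
 \frac M2\sqrt{2^{\lambda/4}\bigl(2^{-3\mu n/4}+\ell2^{-\lambda}\bigr)}\leq\eta .
\]
We bound the two terms separately.
\begin{itemize}
\item The second term is at most $\tfrac M2\sqrt{\ell}\,2^{-3\lambda/8}$. Since $M,\ell\leq n$, this is at most $n^{3/2}2^{-3C_1h/8}$. Using $2^{-h}=\eta/n$, this is at most $\eta/2$ once $C_1$ is large enough, for instance $C_1\geq8$.
\item The first term is at most $n\,2^{\lambda/8-3\mu n/8}$. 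We need $\lambda/8+\log n+1\leq 3\mu n/8-\log(1/\eta)$. The left side plus $\log(1/\eta)$ is $O(h)$, so this holds whenever $h\leq c_\mu n$ with $c_\mu$ small enough relative to $\mu/C_1$.
\end{itemize}
Theorem~\ref{thm:design} then yields a strong quantum-proof $(k,\eta)$ extractor with the stated seed and output lengths.

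\textbf{Main obstacle.} The analytic part reduces entirely to Theorem~\ref{thm:design}. The main care point is the interplay of constants. $\lambda$ must be a large multiple of $h$ to beat the factor $M\sqrt\ell$ and the loss $2^{m_0}$. At the same time, $\lambda M$ must stay below $\mu n/8$ while $M$ remains $\Omega(n/\lambda)$. Both are ensured by choosing $c_\mu\ll\mu/C_1$, which is exactly why $h\leq c_\mu n$ is assumed.

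A secondary point is explicitness of $\F_{2^\lambda}$ with $\lambda=\Theta(h)$ possibly polynomial in $n$. We handle it through the explicit trinomial family above, or through any standard deterministic construction of irreducible polynomials in time $\poly(\lambda)$.
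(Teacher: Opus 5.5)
Your proposal is correct and is essentially the paper's proof: the same instantiation of Theorem~\ref{thm:design} with $\varrho=2$, $\lambda=\Theta(h)$, $m_0=\lfloor\lambda/4\rfloor$ and $M=\lfloor\mu n/(8\lambda)\rfloor$, and the same split of~\eqref{eq:design-error} via $\sqrt{x+y}\leq\sqrt x+\sqrt y$ into a $2^{\lambda/8-3\mu n/8}$ term and a $\sqrt\ell\,2^{-3\lambda/8}$ term. The differences are cosmetic: a generic $C_1$ in place of $16$, an extra $+1$ in $\ell$, and the explicit trinomials $x^{2\cdot3^j}+x^{3^j}+1$ in place of Shoup's algorithm; also, a uniform output bound from your estimate is $a_\mu=\mu/128$ rather than ``$\approx\mu/64$''.
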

\begin{proof}
In Theorem~\ref{thm:design}, take
\[
 \lambda=\lceil16h\rceil,\qquad
 \ell=\lceil\log(n/\lambda)\rceil,\qquad
 m_0=\lfloor\lambda/4\rfloor,\qquad
 \varrho=2,\qquad
 M=\left\lfloor\frac{\mu n}{8\lambda}\right\rfloor.
\]
Choose $c_\mu$ small enough that $\lambda\leq\mu n/16$ for all sufficiently large $n$. Then $\ell\geq1$, $\lambda2^\ell=O(n)$, $M\geq\mu n/(16\lambda)$, and $m_0\geq\lambda/8$. Thus the output length is $m=Mm_0\geq\mu n/128$. The weak design has $v=O(\ell^2)$, so the seed length is
$d=\lambda(v+1)=O_\mu(h\log^2(n/h))$.

Since $k-\lambda\varrho M\geq3\mu n/4$, $m_0\leq\lambda/4$,
and $M\leq n$, the left-hand side of~\eqref{eq:design-error}
is at most
\[
 \frac n2\left(2^{\lambda/8-3\mu n/8}
       +\sqrt{\ell}\,2^{-3\lambda/8}\right).
\]
Using $\lambda\leq\mu n/16$, $h\leq\mu n/256$, $\lambda\geq16h$, $\ell\leq h$, and $\eta=n2^{-h}$,
\begin{align*}
 \frac n2\,2^{\lambda/8-3\mu n/8}
 &\leq\frac n2\,2^{-47\mu n/128}
 =\frac\eta2\,2^{h-47\mu n/128}\leq\frac\eta2,\\
 \frac n2\sqrt\ell\,2^{-3\lambda/8}
 &\leq\frac\eta2\sqrt\ell\,2^{-5h}\leq\frac\eta2.
\end{align*}

Shoup's algorithm~\cite[Theorem~4.1]{Shoup90} deterministically constructs an irreducible polynomial of degree $\lambda$ over $\F_2$. The running time is polynomial in $\lambda$. Use the irreducible polynomial to represent $\F_{2^\lambda}$ as a quotient of $\F_2[z]$. Field arithmetic then takes time polynomial in $\lambda$. The weak-design construction and each polynomial evaluation therefore take time polynomial in $n$.
\end{proof}

%% file: seed_generation.tex
\section{Reducing the Seed Length}\label{sec:seed-generation}
Put $h=\log(n/\eta)$. For sources of constant entropy rate, we reduce the seed length in Lemma~\ref{lem:long-output} from $O_\mu(h\log^2(n/h))$ to $O_\mu(h)$. The output length remains linear. %The construction allows $h\leq c_\mu n$ for a sufficiently small constant $c_\mu>0$. In particular, the error may be exponentially small.

%The construction follows the classical block-sampling and extraction framework~\cite{NZ96,Zuc97}. Chung et al.~\cite[Section~4]{CCVW} studied a related approach with quantum side information in a manuscript that was later withdrawn.\footnote{The authors withdrew the manuscript because an error in its condenser argument invalidates the main result.}

%A public seed $A$ selects a small subset of the coordinates of $X$. Let $B$ contain the selected bits and $C$ the remaining bits. We use an extractor $G$ to generate a seed from $C$, then apply the multilinear extractor $F$ to $B$ with that seed:
%\[
% \Ext(X;A,Y)=F\bigl(B,G(C,Y)\bigr).
%\]
%The seed $Y$ of $G$ is independent of $XAE$. The pair $(A,Y)$ is the external seed. To serve as the seed for $F$, the output $G(C,Y)$ must be close to uniform jointly with $YBAE$.

\subsection{Sampling and extraction tools}
Combining Vadhan's sampler~\cite[Lemma~8.4]{Vad04} with K\"onig and Renner's entropy-sampling theorem~\cite[Corollary~6.19]{KR} gives an explicit entropy sampler for sources with quantum side information.

\begin{lemma}[Entropy sampling]\label{lem:public-entropy-sampling}
For every $0<\varepsilon<1$, there is a constant
$C_\varepsilon>0$ such that the following holds.
Let $N,w\geq1$ be integers, let $0<\delta\leq1/2$, and let
$C_\varepsilon\log(1/\delta)\leq t\leq N$ be an integer.
There is an explicit map
\[
 S:\{0,1\}^{r}\to\binom{[N]}t,
 \qquad
 r=O_\varepsilon\!\left(\log N+\log(1/\delta)\right),
\]
with the following entropy guarantee.

Let $\rho_{XE}$ be a normalized cq state with
$X\in(\{0,1\}^w)^N$ and
$\hmin(X\mid E)_\rho\geq\mu Nw$, where $0<\mu\leq1$.
For a uniform seed $R$ independent of $XE$, set
\[
 B=X_{S(R)},
 \qquad
 \kappa=\frac{N}{tw}.
\]
If $\kappa\leq0.15$, then
\begin{equation}\label{eq:public-entropy-sampling}
 \hmin^\tau(B\mid RE)_\rho
 \geq
 \bigl(\mu-3\varepsilon-2\kappa\log(1/\kappa)\bigr)wt
\end{equation}
for every $0<\tau<1$ satisfying
\[
 4\cdot2^{-\varepsilon Nw}+6\delta^{1/4}\leq\tau^2.
\]
\end{lemma}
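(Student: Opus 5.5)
The proof combines two ingredients. The first is an explicit averaging sampler of Vadhan~\cite[Lemma~8.4]{Vad04} that outputs $t$-subsets of $[N]$. The second is the K\"onig--Renner entropy-sampling theorem~\cite[Corollary~6.19]{KR}, which says that a sampler selecting subsets with good averaging properties preserves smooth conditional min-entropy rate against quantum side information.

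\textbf{Step 1 (the sampler).} Take Vadhan's sampler with accuracy parameter $\varepsilon$ and confidence $\delta$. For every $f:[N]\to[0,1]$, it gives $\Pr_R[|\frac1t\sum_{j\in S(R)}f(j)-\E f|>\varepsilon]\leq\delta$. It uses $r=O_\varepsilon(\log N+\log(1/\delta))$ random bits. It produces distinct samples whenever $t\geq C_\varepsilon\log(1/\delta)$, and we choose $C_\varepsilon$ to be the constant Vadhan's lemma requires. Explicitness is inherited directly.

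\textbf{Step 2 (entropy sampling).} Apply~\cite[Corollary~6.19]{KR} to blocks $X_1,\ldots,X_N$ of $w$ bits each, with side information $E$. That corollary bounds $\hmin^\tau(X_{S(R)}\mid RE)$ from below by roughly $(\hmin(X\mid E)/(Nw)-3\varepsilon)\,tw$ minus a loss term. Its smoothing parameter satisfies $\tau^2\geq 4\cdot2^{-\varepsilon Nw}+6\delta^{1/4}$. The plan is to match notation: identify their entropy rate with $\mu$, their sampler error with $(\varepsilon,\delta)$, and their block length with $w$.

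The loss term is $2\kappa\log(1/\kappa)\,wt$ with $\kappa=N/(tw)$. It comes from the number of possible subsets, or from the $\log$ of the per-block alphabet bookkeeping, depending on how the cited corollary phrases it. The hypothesis $\kappa\leq0.15$ is exactly where the binary-entropy bound $H(\kappa)\leq2\kappa\log(1/\kappa)$ holds. I expect to verify this as an elementary calculus fact.

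\textbf{Step 3 (convention conversion).} Convert the cited smoothing convention to the purified-distance smoothing of~\eqref{eq:smooth-min-entropy}. If~\cite{KR} smooths in trace distance, use~\eqref{eq:smoothing-conversion}, and absorb the resulting square root into the stated $\tau^2$ condition. Also check that conditioning on the classical seed $R$ is handled as in their statement, since $R$ is independent of $XE$.

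\textbf{Main obstacle.} The hard part is not a new argument but faithfully matching the constants in~\cite{KR} to the stated form: the factor $3\varepsilon$, the $2\kappa\log(1/\kappa)$ loss, and the $4\cdot2^{-\varepsilon Nw}+6\delta^{1/4}$ smoothing. My first move is to write their corollary with their sampler hypotheses and instantiate them with Vadhan's guarantee. Any slack in the constants can be absorbed by rescaling $\varepsilon$, which only changes $C_\varepsilon$ and the $O_\varepsilon$ in $r$.
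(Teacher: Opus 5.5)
Your route is the same as the paper's: Vadhan's distinct-sample sampler, then K\"onig--Renner Corollary~6.19, then the conversion~\eqref{eq:smoothing-conversion}, then a switch of the conditioning register. Your Step~3 hedge is right: KR smooths in trace norm with $s=2\cdot2^{-\varepsilon Nw}+3\delta^{1/4}$, and $\sqrt{2s}\leq\tau$ is exactly the stated condition. However, the two steps you treat as notation-matching carry the proof's content, and Step~1 misstates what the cited lemma gives.

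\emph{The sampler hypothesis.} Vadhan's Lemma~8.4 is not a two-sided guarantee for every $f$. It is one-sided with a fixed threshold: for $f$ with mean at least $\beta$, $\Pr_R[\frac1t\sum_{j\in S(R)}f(j)<\beta-\varepsilon]\leq\delta$. KR needs a single map $S$ such that $\Pr_R[\frac1t\sum_{j\in S(R)}f(j)\leq\overline f-\varepsilon]\leq\delta$ for every $f:[N]\to[0,1]$, relative to its own mean $\overline f$. The paper obtains this in three moves:
\begin{enumerate}
\item fix $\beta=1/2$ and run the sampler at accuracy $\varepsilon/4$;
\item recenter each test as $g=(f-\overline f+1)/2$, which lies in $[0,1]$ and has mean exactly $1/2$;
\item note that the bad event for $f$ becomes $\frac1t\sum g\leq\frac12-\frac\varepsilon2$, which is covered by the sampler guarantee for $g$.
\end{enumerate}
This sampler side is the only place rescaling is allowed. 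The $\varepsilon$ fed to KR cannot be rescaled, because it appears in both $3\varepsilon$ and $2^{-\varepsilon Nw}$ in the target bound.

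\emph{Conditioning on $R$.} KR conditions on the subset $J=S(R)$, not on the seed $R$. Since $J$ is a function of $R$, data processing gives $\hmin(B\mid RE)\leq\hmin(B\mid JE)$, which is the wrong direction. So ``$R$ is independent of $XE$'' must be turned into an argument. Because $B=X_J$ depends on $R$ only through $J$, consider the classical channel $\Phi$ that resamples $r$ from the conditional law of $R$ given $J=j$. It satisfies $\Phi_{J\to R}(\rho_{BJE})=\rho_{BRE}$ exactly. Apply $\Phi$ to an optimal smoothed state on the conditioning side: contractivity keeps it within the smoothing distance of $\rho_{BRE}$, and conditional min-entropy cannot decrease. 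This gives $\hmin^{s,\|\cdot\|_1}(B\mid RE)_\rho\geq\hmin^{s,\|\cdot\|_1}(B\mid JE)_\rho$. Finally, the $2\kappa\log(1/\kappa)$ loss and the hypothesis $\kappa\leq0.15$ are inherited verbatim from KR, so no separate binary-entropy estimate is needed.
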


The short-seed hash family of~\cite[Theorem~10]{LHL} gives the following extractor.

\begin{lemma}[Short-seed extraction]\label{lem:short-hash}
There is an absolute constant $C>0$ such that, for every integer $n\geq1$, entropy threshold $0\leq k\leq n$, error $0<\eta\leq1/2$, and integer output length $1\leq m\leq k-C\log(1/\eta)$, there is an explicit strong quantum-proof $(k,\eta)$ extractor
\[
 H:\{0,1\}^n\times\{0,1\}^d\to\{0,1\}^m,
 \qquad d\leq2m+C\log(n/\eta).
\]
The map $H$ is computable in time polynomial in $n+m+\log(1/\eta)$.
\end{lemma}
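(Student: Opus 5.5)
The plan is to instantiate the quantum leftover hash lemma through a short-seed almost-universal hash family. I would use the construction of~\cite[Theorem~10]{LHL}, which composes a small-bias or polynomial-evaluation hash with a two-universal hash. Concretely, I set $\lambda=m+\lceil\log(n/\eta)\rceil+O(1)$ and work over $\F_{2^\lambda}$. I split $x$ into $L=\lceil n/\lambda\rceil$ field elements $x_0,\ldots,x_{L-1}$ and set
\[
 F(x,u)=\sum_j x_ju^j .
\]
For distinct $x,x'$ the difference is a nonzero polynomial of degree less than $L$, so
\[
 \Pr_u[F(x,u)=F(x',u)]\leq L2^{-\lambda}=:\xi.
\]
Next I compose $F$ with the two-universal family $h_w(z)=\pi_m(wz)$, with $w\in\F_{2^{\lambda}}$, exactly as in the proof of Theorem~\ref{thm:design}. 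The total seed is $(u,w)\in\F_{2^\lambda}^2$, i.e.
\[
 d=2\lambda\leq2m+2\log(n/\eta)+O(1).
\]

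Then I verify the error bound by Lemma~\ref{thm:collision}. It suffices that
\[
 \tfrac12\sqrt{2^m(2^{-k}+\xi)}\leq\eta .
\]
This splits into two conditions. First, $2^{m-k}\leq 2\eta^2$, which holds once $m\leq k-2\log(1/\eta)-1$; this is guaranteed by $m\leq k-C\log(1/\eta)$ for $C\geq3$. Second, $2^m L 2^{-\lambda}\leq2\eta^2$, which requires $\lambda\geq m+\log L+2\log(1/\eta)$. So I would actually take $\lambda=m+\lceil\log n+2\log(1/\eta)\rceil+1$, giving $d=2\lambda\leq 2m+C\log(n/\eta)$ for a suitable absolute $C$.

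Explicitness follows as in Lemma~\ref{lem:long-output}. Shoup's algorithm gives an irreducible polynomial of degree $\lambda$, and evaluating a degree-$L$ polynomial costs $\poly(n,\lambda)$ field operations. This gives time polynomial in $n+m+\log(1/\eta)$.

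The only delicate point is bookkeeping the constant in the seed bound so that the seed is $2m$ plus a logarithmic term rather than $2m$ times a constant. The two field elements must each have length $m+O(\log(n/\eta))$, and the collision probability $L2^{-\lambda}$ must absorb the factor $2^m$. Both conditions are handled by the choice of $\lambda$ above. The entropy requirement $m\leq k-C\log(1/\eta)$ absorbs the $2^{-k}$ term. The edge case $m>\lambda$ cannot occur, since $\lambda\geq m$ by construction. Padding $x$ with zeros handles $\lambda\nmid n$.
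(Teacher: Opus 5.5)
Your proof is correct, but it reaches the lemma by a different route from the paper's. The paper does not open up the hash family. It cites \cite[Theorem~10]{LHL} (Theorem~\ref{thm:tssr-short-seed}) with $\eta_0=\eta/16$, whose error bound is stated in terms of $\hmin^{\eta_0}(X\mid E)$ and a minimum over states $\sigma_{WE}$. It then uses the triangle inequality to replace the optimal $\sigma_{WE}$ by the actual marginal $\rho_{WE}=\tau_d\otimes\rho_E$, at a cost of a factor $2$, and bounds $\hmin^{\eta_0}$ below by $\hmin$. This leads to the requirement $k\geq m+4\log(1/\eta)+12$, and the seed bound $2m+O(\log(n/\eta))$ is simply inherited from the citation. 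You instead build the polynomial-evaluation-then-two-universal family explicitly: the first stage has collision probability at most $L2^{-\lambda}\leq n2^{-\lambda}$, which removes any circularity between $\lambda$ and $L$, and the second stage is $h_w(z)=\pi_m(wz)$. You then feed this family into the paper's own Lemma~\ref{thm:collision}, which already compares against $\tau_m\otimes\rho_W\otimes\rho_{YE}$, so no smoothing step and no $\min_{\sigma}$ conversion are needed. Your route is self-contained relative to results proved in the paper and gives explicit constants: entropy loss only about $2\log(1/\eta)$, and $d=2\lambda\leq 2m+2\log n+4\log(1/\eta)+4\leq 2m+8\log(n/\eta)$, so $C=8$ works. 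The paper's citation is shorter, and its bound formally covers smooth sources. That extra generality is unused here, because the paper handles smoothing separately via Lemma~\ref{lem:smooth-extraction}.
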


The original statements and the derivations of Lemmas~\ref{lem:public-entropy-sampling} and~\ref{lem:short-hash} appear in Appendix~\ref{app:sampling-extraction-tools}.

\subsection{Generating the internal seed}\label{sec:internal-seed}\label{sec:internal-seed-analysis}
\begin{theorem}\label{thm:short-seed-generator}
Fix constants $0<\mu\leq1$, $A>0$, and $C_M>0$. There is a constant $c=c(\mu,A,C_M)>0$ such that, for all sufficiently large $n$, every $\mu n\leq k\leq n$, and every $2^{-cn}\leq\eta\leq1/2$, there is an explicit strong quantum-proof $(k,\eta)$ extractor $G:\{0,1\}^n\times\{0,1\}^d\to\{0,1\}^M$ with
\[
 d=O_{\mu,A,C_M}(h),\qquad M=\left\lfloor C_Mh\log^A(n/h)\right\rfloor,
 \qquad h=\log(n/\eta).
\]
The map $G$ is computable in polynomial time for fixed $\mu,A,C_M$.
\end{theorem}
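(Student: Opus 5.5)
We follow the block-sampling and seed-growing scheme from the technical overview. Set $\varepsilon=\mu/20$ in Lemma~\ref{lem:public-entropy-sampling}, and view $X$ as $N=n/w$ blocks of $w=\Theta(1)$ bits; pad if needed, which changes $\mu$ only by a constant. Choose block sizes $t_0<t_1<\cdots<t_{T-1}$ geometrically increasing with ratio a large constant $Q=Q(\mu)$. The smallest has $wt_0=\Theta_\mu(\sqrt{nh})$ and the largest has $wt_{T-1}=\Theta(n/\log(n/h))$, with at most a small constant fraction $\kappa$ of $n$ bits in total. Let $B_i=X_{S_i(R)}$, where $S_i$ is the sampler of Lemma~\ref{lem:public-entropy-sampling} with failure parameter $\delta$ such that $6\delta^{1/4}\le\tau^2/2$ and $\tau=\eta/(8T)$. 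All samplers read prefixes of one seed $R$ of length $O_\mu(\log n+\log(T/\eta))=O_\mu(h)$, since $\log T=O(\log\log n)$. The condition $4\cdot2^{-\varepsilon n}\le\tau^2/2$ holds because $\eta\ge2^{-cn}$. We also need $t_0\ge C_\varepsilon\log(1/\delta)=O(h)$, which follows from $\sqrt{nh}\gg h$ when $h\le cn$, and $\kappa_i=N/(t_iw)\le0.15$. For the latter, note that the lemma's $\kappa$ must be small. We therefore apply the lemma to the source regrouped into $N_i$ super-blocks of size $w_i$ with $N_i/t_i$ small, taking $w_i\approx\sqrt{n/h}$-ish so that $t_iw_i=|B_i|$. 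This gives $\hmin^\tau(B_i\mid RE)\ge(\mu/2)|B_i|$.

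\textbf{Conditional entropy of each block.} Using Lemma~\ref{lem:smooth-classical-leakage} with $L=B_{<i}$, which has at most $\sum_{q<i}|B_q|\le|B_i|/(Q-1)$ bits, we get $\hmin^\tau(B_i\mid B_{<i}RE)\ge(\mu/2-1/(Q-1))|B_i|\ge(\mu/4)|B_i|$ for $Q\ge 8/\mu+2$. This is \eqref{eq:overview-block-entropy}.

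\textbf{Seed growth.} Take $Z_0$ uniform of length $d_0=C'h$ with $C'$ large, and define $Z_{j+1}=Z_j\circ H_j(B_{T-1-j},Z_j)$. Here $H_j$ is from Lemma~\ref{lem:short-hash} with seed length $|Z_j|$, output $\lfloor|Z_j|/4\rfloor$, and error $\eta/(2T)$. Lemma~\ref{lem:short-hash} needs $|Z_j|\ge 2\lfloor|Z_j|/4\rfloor+C\log(|B|T/\eta)$, which holds since $|Z_j|\ge C'h$ and $\log(T/\eta)=O(h)$. It also needs entropy $|Z_j|/4+C\log(2T/\eta)+1\le(\mu/4)|B_{T-1-j}|$. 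Lengths grow by the factor $5/4$, so after $T=O(\log\log(n/h))$ steps $|Z_T|\ge M+d_0$, and we output the first $M$ bits of $W$. The entropy condition is the crux. Because we process blocks from largest to smallest while seeds grow, the seed of length $|Z_j|\le M+d_0=O(h\log^A(n/h))$ must be compared with the $j$-th largest block. The smallest block used is $B_0$, of size $\sqrt{nh}=h\sqrt{n/h}\gg h\log^A(n/h)$ for $n/h\ge 1/c$ large. So every block supports any seed length up to $M+d_0$, and the geometric block growth only needs $T$ blocks, with $Q^T\cdot\sqrt{nh}\le\kappa n$. This holds since $T=O(\log\log(n/h))$ and $Q^T=\mathrm{polylog}(n/h)\ll\sqrt{n/h}$.

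\textbf{Error analysis.} We induct as in the overview: $Z_j$ is $j\eta/T$-close to $U\otimes\rho_{B_{\le T-1-j}RE}$. First replace $Z_j$ by a uniform string $V_j$, which costs $j\eta/T$ by contractivity of the channel computing $Z_{j+1}$ from $(Z_j,B_i)$. Then apply Lemma~\ref{lem:smooth-extraction} with side information $B_{<i}RE$, at cost $\eta/(2T)+4\tau\le\eta/T$. At the end, $Z_0\circ W$ is $\eta$-close to uniform given $RE$, so $G(X;R,Z_0)=W$ is a strong extractor with seed $(R,Z_0)$ of length $O_{\mu,A,C_M}(h)$. Explicitness follows from the explicitness of the samplers and of Lemma~\ref{lem:short-hash}. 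The main obstacle we expect is the parameter bookkeeping in applying Lemma~\ref{lem:public-entropy-sampling}: obtaining $\kappa\le 0.15$ with constant-rate loss for blocks of size only $\sqrt{nh}$, while sharing the seed $R$ of length $O(h)$. The plan is to choose the granularity $w$ per block so that the sample count is $t_i\approx|B_i|/w_i$ with $N_i/t_i\le$ a small constant fraction times $w_i$.
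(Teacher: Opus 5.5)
Your proposal is correct and follows the paper's proof: geometrically growing sampled blocks drawn with one prefix-shared seed $R$, the leakage bound $\sum_{q<i}|B_q|\le|B_i|/(Q-1)$, reverse-order seed growth using Lemma~\ref{lem:short-hash}, and the same hybrid argument with per-step cost $\eta/(2T)+4\tau=\eta/T$. Your super-block fix is exactly the paper's choice of a single symbol width $w=\lceil C_w\sqrt{n/h}\rceil$ for all blocks, and it needs two corrections: with it $t_0=\Theta(h)$, so $t_0\ge C_\varepsilon\log(1/\delta)$ comes from taking $t_0=\lceil C_th\rceil$ with $C_t\ge 48C_\varepsilon$, not from $\sqrt{nh}\gg h$; and the largest block has only $\sqrt{nh}\,\mathrm{polylog}(n/h)$ bits, not $\Theta(n/\log(n/h))$, with no small-total-fraction constraint needed beyond $t_{T-1}\le N$.
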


\paragraph{Construction.}
Set $\nu=\mu/4$ and fix an integer $Q\geq1+2/\nu$. The constants $C_0,C_w,C_t$ and the sampler accuracy $\varepsilon$ depend only on $\mu,A,C_M$. Their values are fixed at the start of the proof of Theorem~\ref{thm:short-seed-generator}.
\begin{enumerate}
\item \textbf{String lengths.} Let
\[
 d_0=\lceil C_0h\rceil,\qquad
 m_j=\lfloor d_j/4\rfloor,\qquad d_{j+1}=d_j+m_j,
\]
and stop at the first $T$ with $d_T-d_0\geq M$. Start with an independent uniform $d_0$-bit string $Z_0$.
\item \textbf{Samples of increasing length.} Pad $X$ with zeroes to $N=\lceil n/w\rceil$ symbols of $w=\lceil C_w\sqrt{n/h}\rceil$ bits. Let $t_0=\lceil C_th\rceil$ and $t_i=Q^it_0$ for $0\leq i<T$. For each $i$, choose a sampler from Lemma~\ref{lem:public-entropy-sampling} with sample size $t_i$, accuracy $\varepsilon$, and failure probability
\[
 \delta=\left(\frac{\eta}{32T}\right)^8.
\]
Let $r$ be the maximum of the sampler seed lengths. Choose a uniform $R\in\{0,1\}^{r}$ independent of $XE,Z_0$. The sampler for sample size $t_i$ reads the prefix of $R$ of the required length. Write $S_i(R)$ for the resulting set of coordinates. Let $B_i=X_{S_i(R)}$ list the sampled symbols in increasing coordinate order. Set $B_{<i}=(B_0,\ldots,B_{i-1})$, with $B_{<0}$ empty.
\item \textbf{Extraction in reverse order.} At step $j=0,\ldots,T-1$, process $B_{T-1-j}$. Put $\eta_0=\eta/(2T)$. Choose a strong quantum-proof extractor $H_j$ from Lemma~\ref{lem:short-hash}. The input length is $wt_{T-1-j}$, the output length is $m_j$, and the error is $\eta_0$. The entropy threshold of $H_j$ is $m_j+C\log(1/\eta_0)$. Extend $H_j$ to accept a $d_j$-bit seed. The map $H_j$ reads the required prefix and ignores the remaining seed bits. Append $H_j(B_{T-1-j},Z_j)$ to $Z_j$:
\begin{equation}\label{eq:reverse-pool-construction}
 Z_{j+1}=Z_j\circ H_j(B_{T-1-j},Z_j).
\end{equation}
The external seed is $(R,Z_0)$. Output any fixed $M$ of the appended bits.
\end{enumerate}
The geometric growth of the sample lengths lets $B_i$ retain entropy after $B_{<i}$ is revealed. We process the sampled blocks in the order $B_{T-1},\ldots,B_0$. After processing $B_i$, the unprocessed samples are exactly $B_{<i}$.

\begin{proof}[Proof of Theorem~\ref{thm:short-seed-generator}]
\emph{Parameter choices.} Set $\nu=\mu/4$, $\varepsilon=\mu/100$, and $Q=\lceil1+8/\mu\rceil$. Let $C_\varepsilon\geq1$ and $C$ be the constants in Lemmas~\ref{lem:public-entropy-sampling} and~\ref{lem:short-hash}, respectively. Choose $C_w\geq4$ and set $C_t,C_0$ so that
\begin{equation}\label{eq:seed-constant-choices}
\begin{gathered}
 3\varepsilon+\frac4{C_w^2}\log(C_w^2/2)\leq\mu/4,\\
 C_t\geq48C_\varepsilon,\qquad
 C_0\geq20,\quad C_0\geq8C.
\end{gathered}
\end{equation}
Choose $0<c\leq\varepsilon/8$ so that, whenever $h\leq2cn$,
\begin{equation}\label{eq:seed-size-requirements}
 t_{T-1}\leq N,\qquad
 \frac{d_T}{4}+4Ch+1\leq\frac\nu2wt_0.
\end{equation}
Put $\eta_0=\eta/(2T)$ and $\tau=\eta/(8T)$.

To see that such a constant $c$ exists, compare the sample and seed lengths. Since $d_j\geq C_0h\geq20$, we have $m_j=\lfloor d_j/4\rfloor\geq d_j/4-1\geq d_j/5$, and hence $d_{j+1}\geq(6/5)d_j$. The stopping rule gives $T=O(\log\log(n/h))$ and $d_T\leq\frac54(d_0+M)$. Thus
\begin{equation}\label{eq:bootstrap-growth}
 t_{T-1}=h(\log(n/h))^{O(1)},\qquad
 d_T=O(h\log^A(n/h)),\qquad
 N,wt_0=\Theta(\sqrt{nh}).
\end{equation}
Both conditions in~\eqref{eq:seed-size-requirements} hold when $n/h$ is sufficiently large, because $\sqrt{n/h}$ grows faster than either power of $\log(n/h)$. This is ensured by taking $c$ small enough. For sufficiently large $n$, the stated error range gives $h\leq2cn$.

\emph{Entropy of the sampled blocks.} Fix a cq source $\rho_{XE}$ with $\hmin(X\mid E)\geq\mu n$. We first prove $\hmin^\tau(B_i\mid RE)_\rho\geq\nu wt_i$ by Lemma~\ref{lem:public-entropy-sampling}. We then account for revealing $B_{<i}$.

Padding preserves the source entropy. Since $n'=Nw<n+w=n+O(\sqrt{n/h})\leq2n$ for sufficiently large $n$, the padded source has conditional entropy rate at least $\mu/2$. Also, $T=O(\log\log n)\leq n$ gives
\begin{equation}\label{eq:seed-error-logarithms}
 \log(1/\delta)=8\log(32T/\eta)\leq48h,\qquad
 \log(n'/\eta_0)\leq\log(4nT/\eta)\leq4h.
\end{equation}
We check the sample-size condition using $C_t\geq48C_\varepsilon$ from~\eqref{eq:seed-constant-choices}, the logarithmic bound~\eqref{eq:seed-error-logarithms}, and $t_{T-1}\leq N$ from~\eqref{eq:seed-size-requirements}:
\begin{equation}\label{eq:seed-sampling-conditions}
 C_\varepsilon\log(1/\delta)\leq48C_\varepsilon h
 \leq C_th\leq t_0\leq t_i\leq t_{T-1}\leq N.
\end{equation}
Next, $t_i\geq t_0\geq h$ and $w^2\geq C_w^2n/h$, together with $n'\leq2n$, give
\begin{equation}\label{eq:seed-kappa-bound}
 \kappa_i:=\frac{N}{t_iw}=\frac{n'}{t_iw^2}
 \leq\frac{2n}{h\cdot C_w^2n/h}
 =\frac{2}{C_w^2}\leq\frac18.
\end{equation}
Since $\kappa\log(1/\kappa)$ is increasing on $(0,1/8]$, equations~\eqref{eq:seed-kappa-bound} and~\eqref{eq:seed-constant-choices} imply
\begin{equation}\label{eq:seed-sampling-entropy-loss}
 3\varepsilon+2\kappa_i\log(1/\kappa_i)
 \leq3\varepsilon+\frac4{C_w^2}\log(C_w^2/2)
 \leq\mu/4.
\end{equation}

For the smoothing condition, $T\leq n$ ensures $512T^2 2^{-3\varepsilon n/4}\leq1$ for sufficiently large $n$. Using $n'\geq n$, $\eta\geq2^{-cn}$, and $c\leq\varepsilon/8$, we obtain
\begin{equation}\label{eq:seed-smoothing-terms}
\begin{aligned}
 2^{-\varepsilon n'}
 &\leq\eta^2 2^{-(\varepsilon-2c)n}
 \leq\eta^2 2^{-3\varepsilon n/4}
 \leq\frac{\eta^2}{512T^2},\\
 6\delta^{1/4}&=6\left(\frac{\eta}{32T}\right)^2
 =\frac{3\eta^2}{512T^2}.
\end{aligned}
\end{equation}
Substituting~\eqref{eq:seed-smoothing-terms} gives
\begin{equation}\label{eq:seed-smoothing-condition}
 4\cdot2^{-\varepsilon n'}+6\delta^{1/4}
 \leq\frac{4\eta^2+3\eta^2}{512T^2}
 \leq\frac{\eta^2}{64T^2}=\tau^2.
\end{equation}
Let $\rho_{B_0\cdots B_{T-1}RE}$ be the joint cq state obtained
from $\rho_{XE}\otimes\tau_r$ by computing
$B_q=X_{S_q(R)}$ for $0\leq q<T$ and discarding $X$.

Equations~\eqref{eq:seed-sampling-conditions},
\eqref{eq:seed-kappa-bound}, and~\eqref{eq:seed-smoothing-condition}
verify the hypotheses of Lemma~\ref{lem:public-entropy-sampling}
with entropy rate $\mu/2$, sample size $t_i$, accuracy $\varepsilon$,
failure probability $\delta$, and smoothing parameter $\tau$.
The lemma bounds the entropy of $B_i$ given $E$ and the seed prefix
read by $S_i$.
Revealing the unused suffix of $R$ does not change the bound,
because the suffix is independent of $B_i$, $E$, and the prefix.
Using~\eqref{eq:seed-sampling-entropy-loss} and $\nu=\mu/4$, we obtain
\begin{equation}\label{eq:sampled-block-before-leakage}
 \hmin^\tau(B_i\mid RE)_\rho
 \geq\left(\frac\mu2-3\varepsilon
          -2\kappa_i\log(1/\kappa_i)\right)wt_i
 \geq\nu wt_i.
\end{equation}

Apply Lemma~\ref{lem:smooth-classical-leakage} with source $B_i$,
leaked register $B_{<i}$, and side information $RE$.
The string $B_{<i}$ has $w\sum_{q<i}t_q$ bits, so
\begin{equation}\label{eq:sampled-prefix-leakage}
 \hmin^\tau(B_i\mid B_{<i}RE)_\rho
 \geq\hmin^\tau(B_i\mid RE)_\rho-w\sum_{q<i}t_q.
\end{equation}
Substituting~\eqref{eq:sampled-block-before-leakage} and
$t_q=Q^qt_0$ into~\eqref{eq:sampled-prefix-leakage}, and using
$Q\geq1+2/\nu$, gives
\begin{align}\label{eq:sampled-block-entropy}
 \hmin^\tau(B_i\mid B_{<i}RE)_\rho
 &\geq\nu wt_i-w\sum_{q<i}t_q\notag\\
 &\geq\left(\nu-\frac1{Q-1}\right)wt_i
 \geq\frac\nu2wt_i.
\end{align}
In particular, every $B_i$ has $\Omega(\sqrt{nh})$ smooth
conditional min-entropy given $B_{<i}RE$.

\emph{Requirements for extraction.} Fix $0\leq j<T$ and put $i=T-1-j$. Apply Lemma~\ref{lem:short-hash} with input length $wt_i$, output length $m_j$, and error $\eta_0$. The resulting extractor $H_j$ needs at most $2m_j+C\log(wt_i/\eta_0)$ seed bits. Since $wt_i\leq Nw=n'$ and $m_j\leq d_j/4$, equation~\eqref{eq:seed-error-logarithms} gives
\[
 2m_j+C\log(wt_i/\eta_0)
 \leq d_j/2+C\log(n'/\eta_0)
 \leq d_j/2+4Ch\leq d_j.
\]
The last inequality uses $4Ch\leq C_0h/2\leq d_j/2$. Thus $Z_j$ supplies enough seed bits for $H_j$. For the smooth entropy requirement, equations~\eqref{eq:seed-error-logarithms} and~\eqref{eq:seed-size-requirements} give
\begin{equation}\label{eq:seed-extraction-requirements}
 m_j+C\log(1/\eta_0)+1
 \leq\frac{d_T}{4}+4Ch+1
 \leq\frac\nu2wt_0\leq\frac\nu2wt_i.
\end{equation}
Together, equations~\eqref{eq:sampled-block-entropy} and~\eqref{eq:seed-extraction-requirements} allow us to apply Lemma~\ref{lem:smooth-extraction} to each $H_j$.

\emph{Hybrid states.} Let $\rho$ be the joint state produced by the construction. For $0\leq j\leq T$, replace $Z_j$ by an independent uniform $d_j$-bit string. Starting with
\[
 \tau_{d_j}\otimes\rho_{B_{<T-j}RE}
\]
apply~\eqref{eq:reverse-pool-construction} for $j,\ldots,T-1$, discarding each sampled block after use. Denote the resulting state by $\rho^{(j)}_{Z_TRE}$. Since $Z_0$ is independent and uniform, the endpoints are
\[
 \rho^{(0)}_{Z_TRE}=\rho_{Z_TRE},\qquad
 \rho^{(T)}_{Z_TRE}=\tau_{d_T}\otimes\rho_{RE}.
\]
We will prove $\dist(\rho^{(j)},\rho^{(j+1)})\leq\eta_0+4\tau$ for every $0\leq j<T$.

\emph{Distance between consecutive hybrids.} Fix $0\leq j<T$ and put $i=T-1-j$. Let $V\sim U_{d_j}$ be independent of $B_iB_{<i}RE$. To obtain $\rho^{(j)}$, first compute the $d_{j+1}$-bit string
$V\circ H_j(B_i,V)$ in step $j$.
To obtain $\rho^{(j+1)}$, replace $V\circ H_j(B_i,V)$
by a uniform $d_{j+1}$-bit string independent of $B_{<i}RE$.
Both constructions then apply~\eqref{eq:reverse-pool-construction}
for steps $j+1,\ldots,T-1$, using the remaining blocks $B_{<i}$,
and retain $Z_TRE$.

The map $H_j$ from Lemma~\ref{lem:short-hash} is a strong quantum-proof $(m_j+C\log(1/\eta_0),\eta_0)$ extractor. Apply Lemma~\ref{lem:smooth-extraction} to $H_j$. Take source $B_i$, seed $V$, side information $B_{<i}RE$, and smoothing parameter $\tau$. Equations~\eqref{eq:sampled-block-entropy} and~\eqref{eq:seed-extraction-requirements} supply the required smooth entropy. The strong guarantee of $H_j$ gives near-uniform output even given the seed prefix of $V$. Therefore
\[
 \dist\!\left(\rho_{(V\circ H_j(B_i,V))B_{<i}RE},
       \tau_{d_{j+1}}\otimes\rho_{B_{<i}RE}\right)
 \leq \eta_0+4\tau.
\]
Let $\Phi_j$ be a channel with input registers $Z_{j+1}B_{<i}RE$.
The channel applies steps $j+1,\ldots,T-1$
of~\eqref{eq:reverse-pool-construction} to extend $Z_{j+1}$ to $Z_T$
using the remaining blocks $B_{<i}$.
It then discards $B_{<i}$ and outputs $Z_TRE$, leaving $RE$ unchanged. Applying $\Phi_j$ to $\rho_{(V\circ H_j(B_i,V))B_{<i}RE}$ gives $\rho^{(j)}$. Applying $\Phi_j$ to $\tau_{d_{j+1}}\otimes\rho_{B_{<i}RE}$ gives $\rho^{(j+1)}$. Contractivity yields
\begin{equation}\label{eq:reverse-pool-hybrid}
 \dist(\rho^{(j)},\rho^{(j+1)})\leq\eta_0+4\tau.
\end{equation}

\emph{Total error.} Summing~\eqref{eq:reverse-pool-hybrid} over $0\leq j<T$ yields
\[
 \dist(\rho_{Z_TRE},\tau_{d_T}\otimes\rho_{RE})
 \leq\sum_{j=0}^{T-1}\dist(\rho^{(j)},\rho^{(j+1)})
 \leq T(\eta_0+4\tau)=\eta.
\]
The string $Z_T$ contains the unchanged initial string $Z_0$. If $O$ is the selected $M$-bit output, discarding the other appended bits gives
\[
 \dist\!\left(\rho_{OZ_0RE},
       \tau_M\otimes\rho_{Z_0RE}\right)\leq\eta,
 \qquad
 \rho_{Z_0RE}=\tau_{d_0}\otimes\rho_{RE}.
\]
Thus $O$ satisfies the strong guarantee for the external seed $(R,Z_0)$.

By Lemma~\ref{lem:public-entropy-sampling} and~\eqref{eq:seed-error-logarithms}, the common seed $R$ has length $r=O(h)$. The external seed therefore has length $r+d_0=O(h)$. All sampling and hashing steps run in polynomial time.
\end{proof}

\subsection{Combining the extractors}
\begin{proof}[Proof of Theorem~\ref{thm:constant-rate}]
\emph{Construction.} Put $h=\log(n/\eta)$, $\tau=\eta/100$, $\varepsilon=\mu/100$, and $\delta=(\tau/100)^8$. Choose a sufficiently large constant integer $w=w(\mu)$. Put $N=\lfloor n/w\rfloor$, $n'=wN$, and $t=\lfloor\mu N/16\rfloor$. Partition the first $n'$ coordinates into $N$ consecutive blocks of $w$ bits. Choose the sampling map $T:\{0,1\}^r\to\binom{[N]}t$ from Lemma~\ref{lem:public-entropy-sampling} with parameters $\varepsilon,\delta$, and define
\[
 S(a)=\bigcup_{i\in T(a)}\{(i-1)w+1,\ldots,iw\}.
\]
For a source string $x$ and a seed value $a$, write
\[
 b=x_{S(a)},\qquad c=x_{[n]\setminus S(a)},
\]
with coordinates in increasing order. Set $n_B=wt$ and $n_C=n-n_B$. Then $n_B,n_C=\Theta_\mu(n)$, $n_B\leq\mu n/16$, and $r=O_\mu(h)$.

Apply Lemma~\ref{lem:long-output} at input length $n_B$, entropy rate $\mu/2$, and error $\eta/4$. We obtain a strong quantum-proof extractor
\[
 F:\{0,1\}^{n_B}\times\{0,1\}^{d_1}\to\{0,1\}^m,\qquad
 d_1=O_\mu\bigl(h\log^2(n/h)\bigr),\qquad m=\Omega_\mu(n).
\]
To generate the $d_1$-bit seed of $F$ from $c$, set $h_C=\log(4n_C/\eta)$. We have $h_C=\Theta_\mu(h)$. Also, $n_C/h_C=\Theta_\mu(n/h)$. Choose a sufficiently large constant $C_M$ depending only on $\mu$. Then $d_1\leq C_Mh_C\log^2(n_C/h_C)$. Apply Theorem~\ref{thm:short-seed-generator} at input length $n_C$, entropy rate $\mu/2$, error $\eta/4$, and exponent $A=2$. Truncating the output to $d_1$ bits gives a strong quantum-proof extractor
\[
 G:\{0,1\}^{n_C}\times\{0,1\}^{d_2}\to\{0,1\}^{d_1},\qquad
 d_2=O_\mu(h).
\]
Take $c_\mu$ sufficiently small in $h\leq c_\mu n$ and take $n$ sufficiently large. The parameter conditions for $S,F,G$ then hold. The public seed is $(a,y)\in\{0,1\}^{r+d_2}$. Define
\[
 \Ext(x;a,y)=F\bigl(b,G(c,y)\bigr).
\]

\emph{Correctness.} Fix a normalized cq state $\rho_{XE}$ with $\hmin(X\mid E)\geq\mu n$. Take independent uniform seeds $A\sim U_r$ and $Y\sim U_{d_2}$, both independent of $XE$. Let $B=X_{S(A)}$ and $C=X_{[n]\setminus S(A)}$.

We first check that $B$ has enough smooth min-entropy for $F$. Omitting the last $n-n'$ bits leaves $\hmin(X_{[n']}\mid E)\geq\mu n-w\geq(7\mu/8)n'$ by~\eqref{eq:classical-pair-chain}. Since $t\geq\mu N/32$, we have $\kappa=N/(tw)\leq32/(\mu w)$. Choose $w=w(\mu)$ large enough that
\[
 \kappa\leq0.15,\qquad
 3\varepsilon+2\kappa\log(1/\kappa)\leq\mu/8.
\]

The remaining conditions of Lemma~\ref{lem:public-entropy-sampling} are
\[
 t\geq C_\varepsilon\log(1/\delta),\qquad
 4\cdot2^{-\varepsilon n'}+6\delta^{1/4}\leq\tau^2.
\]
After fixing $w$, choose
\[
 c_\mu\leq
 \min\left\{\frac{\mu}{512C_\varepsilon w},
             \frac{\varepsilon}{4}\right\}.
\]
For sufficiently large $n$, we have $n\geq100$,
$n'\geq n/2$, and
$t\geq\mu N/32\geq\mu n/(64w)$.
Since $\delta=(\tau/100)^8$ and
$\log(n/\tau)\leq c_\mu n$,
\[
 C_\varepsilon\log(1/\delta)
 =8C_\varepsilon\log(100/\tau)
 \leq8C_\varepsilon c_\mu n
 \leq\frac{\mu n}{64w}
 \leq t.
\]
Moreover,
\[
 \varepsilon n'
 \geq\frac{\varepsilon n}{2}
 \geq2c_\mu n
 \geq2\log(100/\tau).
\]
Consequently,
\[
 4\cdot2^{-\varepsilon n'}+6\delta^{1/4}
 \leq4(\tau/100)^2+6(\tau/100)^2
 =10(\tau/100)^2
 <\tau^2.
\]

Applying the lemma at entropy rate $7\mu/8$ gives
\begin{equation}\label{eq:composition-sampled-entropy}
 \hmin^\tau(B\mid AE)\geq\frac{3\mu}{4}n_B
 \geq\frac\mu2n_B+1.
\end{equation}

Next, we check that $C$ has enough conditional min-entropy for $G$, even given $B$. For each fixed $A$, the pair $(B,C)$ is a permutation of $X$. Independence of $A$ and~\eqref{eq:classical-pair-chain} therefore give
\begin{equation}\label{eq:composition-complement-entropy}
 \hmin(C\mid BAE)\geq\hmin(X\mid E)-n_B
 \geq\mu n-n_B\geq\frac\mu2n_C.
\end{equation}

We now replace the internal seed $G(C,Y)$ by an independent $V\sim U_{d_1}$. Equation~\eqref{eq:composition-complement-entropy} and the strong guarantee of $G$ give
\[
 \dist\!\left(\rho_{G(C,Y)BAYE},
       \tau_{d_1}\otimes\rho_{BAYE}\right)\leq\eta/4.
\]
Applying $F$ to $B$ and the internal seed, and discarding both inputs, yields
\begin{equation}\label{eq:composition-seed-replacement}
 \dist\!\left(\rho_{\Ext(X;A,Y)AYE},\rho_{F(B,V)AYE}\right)\leq\eta/4.
\end{equation}
Finally, $Y$ is independent of $BAE$, so~\eqref{eq:composition-sampled-entropy} also holds with side information $AYE$. Lemma~\ref{lem:smooth-extraction} applied to $F$ with source $B$ and seed $V$ gives
\begin{equation}\label{eq:composition-uniform-seed}
 \dist\!\left(\rho_{F(B,V)AYE},\tau_m\otimes\rho_{AYE}\right)
 \leq\eta/4+4\tau.
\end{equation}
Combining~\eqref{eq:composition-seed-replacement} and~\eqref{eq:composition-uniform-seed}, we obtain
\[
 \dist\!\left(\rho_{\Ext(X;A,Y)AYE},\tau_m\otimes\rho_{AYE}\right)
 \leq\eta/2+4\tau<\eta.
\]
The public seed has length $r+d_2=O_\mu(h)$, and the output has length $m=\Omega_\mu(n)$. The maps $S,F,G$ are computable in polynomial time.
\end{proof}

\begin{corollary}\label{cor:entropy-fraction}
Fix constants $0<\mu,\theta<1$. There is a constant $c_{\mu,\theta}>0$ such that, for all sufficiently large $n$, every $\mu n\leq k\leq n$, and every $0<\eta\leq1/2$ satisfying $\log(n/\eta)\leq c_{\mu,\theta}n$, there is an explicit strong quantum-proof $(k,\eta)$ extractor $\Ext:\{0,1\}^n\times\{0,1\}^d\to\{0,1\}^m$ with
\[
 d=O_{\mu,\theta}(\log(n/\eta)),\qquad m=\lfloor(1-\theta)k\rfloor.
\]
The map $\Ext$ is computable in polynomial time for fixed $\mu,\theta$.
\end{corollary}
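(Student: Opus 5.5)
The plan is the standard ``repeated extraction with independent seeds'' technique, which the intro calls the standard way to boost the output from $\Omega_\mu(n)$ to $(1-\theta)k$. Let $\Ext_1$ denote the extractor of Theorem~\ref{thm:constant-rate}. Iterating it directly would reveal too much; a single output block costs $m_1=\Omega_\mu(n)$ bits of entropy. Instead I will run $L=O_{\mu,\theta}(1)$ copies of it on the same source. Copy $\ell$ gets its own independent seed $Y_\ell$ of length $O_\mu(\log(n/\eta'))$ with $\eta'=\eta/L$, and is tuned to entropy threshold $k_\ell$ and output length $m_\ell$. The final output is the concatenation of $\Ext_1^{(\ell)}(X,Y_\ell)$ for $\ell=1,\dots,L$, truncated to $\lfloor(1-\theta)k\rfloor$ bits. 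The seed is $(Y_1,\dots,Y_L)$, of total length $O_{\mu,\theta}(\log(n/\eta))$, as required.

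For the analysis I will use a hybrid argument like the ones in the proofs of Theorem~\ref{thm:design} and Theorem~\ref{thm:short-seed-generator}. Write $O_\ell=\Ext_1^{(\ell)}(X,Y_\ell)$. Let $\rho^{(j)}$ be the state in which $O_{j+1},\dots,O_L$ are replaced by independent uniform strings. Adjacent hybrids differ only in block $O_{j+1}$. The leakage inequality~\eqref{eq:chain} gives
\[
 \hmin(X\mid O_{\leq j}Y_{\leq j}E)\geq k-\sum_{\ell\leq j}m_\ell .
\]
The seed $Y_{j+1}$ is independent of $X$, $O_{\leq j}$, $Y_{\leq j}$ and $E$. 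So if $k_{j+1}\leq k-\sum_{\ell\leq j}m_\ell$, the strong guarantee of $\Ext_1^{(j+1)}$, with side information $O_{\leq j}Y_{\leq j}E$, shows that adjacent hybrids are $\eta'$-close. Tensoring with the later uniform blocks and seeds preserves this distance. The triangle inequality then gives total error $L\eta'=\eta$, and the extractor is strong for the joint seed.

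The main obstacle is the parameter bookkeeping. Each application needs remaining entropy of rate at least a constant, while the total output must reach $(1-\theta)k$. Let $k'_j=k-\sum_{\ell\leq j}m_\ell$ be the entropy left after $j$ blocks. As long as $k'_j\geq\theta k/2\geq(\theta\mu/2)n$, I can apply Theorem~\ref{thm:constant-rate} with rate $\mu'=\theta\mu/2$, threshold $k'_j$, and output $m_{j+1}=\min\{c\,k'_j,\;k'_j-\theta k/2\}$, where $c=c(\mu')>0$ is its linear-output constant. This output is valid because the theorem holds for every threshold $\mu' n\le k'\le n$, and shortening the output only helps. Then $k'_j-\theta k/2$ shrinks geometrically by a factor $(1-c)$ per step. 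After $L=O(\log(1/\theta)/c)$ steps it falls below $\theta k/2$, and at that point the total output is at least $(1-\theta)k$. Finally, the error condition $\log(nL/\eta)\leq c_{\mu'}n$ holds once $c_{\mu,\theta}$ is a small enough constant, and computability follows from that of each copy.
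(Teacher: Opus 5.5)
Your proposal is correct and follows the paper's proof. Both run the extractor of Theorem~\ref{thm:constant-rate} $O_{\mu,\theta}(1)$ times with independent seeds, use the leakage bound~\eqref{eq:chain} to keep the residual entropy at a constant rate, and combine the calls by a hybrid argument (the paper cites the sequential-composition lemma of~\cite{DPVR}) before truncating. The only difference is bookkeeping: the paper uses fixed blocks of $b=\lfloor\beta n\rfloor$ bits at rate $\theta\mu$ with $J=\lceil m/b\rceil$ calls, so the residual entropy never drops below $k-m\geq\theta\mu n$, which makes your geometrically shrinking block lengths unnecessary.
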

\begin{proof}
Put $m=\lfloor(1-\theta)k\rfloor$. At entropy rate $\theta\mu$, Theorem~\ref{thm:constant-rate} provides $b=\lfloor\beta n\rfloor$ output bits for some constant $\beta>0$ depending only on $\mu,\theta$. Set $J=\lceil m/b\rceil=O_{\mu,\theta}(1)$. Apply the theorem with error $\eta/J$ to obtain an extractor $F$ with output length $b$ and seed length $d_0=O_{\mu,\theta}(\log(n/\eta))$. Taking $c_{\mu,\theta}$ sufficiently small ensures that $\log(nJ/\eta)\leq c_{\theta\mu}n$, as required by the theorem.

For a seed $(y_1,\ldots,y_J)$, concatenate the outputs
\[
 F(x,y_1)\circ\cdots\circ F(x,y_J)
\]
and retain the first $m$ bits. This defines $\Ext$.

Fix a source  $\rho_{XE}$ with $\hmin(X\mid E)\geq k$, and let $Y_1,\ldots,Y_J$ be independent uniform seeds, independent of $XE$. Write $Z_i=F(X,Y_i)$. Before the $j$th call, the preceding outputs contain $(j-1)b<m$ bits. The classical leakage bound therefore gives
\[
 \hmin(X\mid Z_1\cdots Z_{j-1}Y_1\cdots Y_{j-1}E)_\rho
 \geq k-(j-1)b\geq k-m\geq\theta\mu n.
\]
Thus every call meets the entropy requirement of $F$. The sequential-composition lemma~\cite[Lemma~A.4]{DPVR} gives the strong guarantee for $\Ext$ with error at most $J(\eta/J)=\eta$. Its seed length is $Jd_0=O_{\mu,\theta}(\log(n/\eta))$, and its running time is polynomial.
\end{proof}

%% file: high_entropy.tex
\section{Extraction with Small Entropy Deficiency}\label{sec:high-entropy}
Fix $\alpha>0$. We extract from sources of conditional min-entropy $n-a$ with seed length $(1+\alpha)\log a+(2+\alpha)\log(1/\eta)+O_\alpha(1)$. The output loss is $O_\alpha(a+\log(1/\eta))$ bits.

\subsection{A base extractor with small output loss}\label{sec:base-high-entropy}
We start with the classical extractor of Goldreich and Wigderson.

\begin{theorem}[Goldreich--Wigderson {\cite[Theorem~4.2 and Lemma~4.3]{GW97}}]\label{thm:classical-high-entropy}
For every integer $u\geq2$, integer entropy threshold $1\leq k<u$, and error $0<\varepsilon<1/4$, there is an explicit strong classical $(k,\varepsilon)$ extractor $H:\{0,1\}^u\times\{0,1\}^d\to\{0,1\}^m$ such that, writing $b=u-k$,
\[
 d=O(b+\log(1/\varepsilon)),\qquad
 m\geq u-O(b+\log(1/\varepsilon)).
\]
The map $H$ is computable in time polynomial in $u+\log(1/\varepsilon)$.
\end{theorem}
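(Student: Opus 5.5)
This is a classical, seedless-adversary statement quoted from~\cite{GW97}, so the plan is to reconstruct the Goldreich--Wigderson expander-walk argument rather than derive it from the quantum machinery above. I would proceed in two stages: first a weak extractor via random walks on an expander whose vertex set is $\{0,1\}^u$, analysed in the $\ell_2$ norm; then a strong version via a strengthening step, which I take to be the content of~\cite[Lemma~4.3]{GW97}.

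\emph{Weak extractor.} Fix an explicit constant-degree expander $G$ on vertex set $\{0,1\}^u$, with degree $2^c$ and second eigenvalue $\lambda\leq1/2$ in absolute value. Give it a consistent labeling, so that each label $j$ acts as a permutation $\pi_j$ of the vertices; Cayley expanders or the standard explicit families, after replacement of the labels, provide this. The seed $y\in\{0,1\}^{c\ell}$ encodes a walk of length $\ell$, and the output is the endpoint $\pi_{y_\ell}\cdots\pi_{y_1}(x)$.

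Let $P$ be the distribution of $X$ with $H_\infty(X)\geq u-b$. Then $\norm{P-U}_2^2\leq\cp(P)\leq2^{-(u-b)}$. One step of the walk multiplies the $\ell_2$ distance from uniform by at most $\lambda$. Cauchy--Schwarz then gives
\[
 \dist(\Ext(X,Y),U_u)\leq\tfrac12\,2^{u/2}\lambda^\ell\,2^{-(u-b)/2}=\tfrac12\,\lambda^\ell2^{b/2}.
\]
Taking $\ell=O(b+\log(1/\varepsilon))$ gives error $\varepsilon$ with seed $d=c\ell=O(b+\log(1/\varepsilon))$ and full output length $m=u$. Explicitness follows from the explicit neighbor function of $G$.

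\emph{Strong version.} This is the step I expect to be the main obstacle. The walk extractor is not automatically strong: for a fixed walk $y$, the map $x\mapsto\pi_y(x)$ is a permutation and so gives no mixing at all. I would try the following replacement graph. Take the bipartite graph whose left side is $\{0,1\}^u$ and whose right side is $\{0,1\}^d\times\{0,1\}^{u-d'}$. Join $x$ to $(y,z)$, where $z$ is the endpoint of walk $y$ from $x$ with its last $d'=O(d)$ bits dropped. Equivalently, output $(Y,\text{truncated endpoint})$ and bound the collision probability of this pair:
\[
 \cp(Y,\Ext(X,Y))=2^{-d}\Pr\bigl[\Ext(X,Y)=\Ext(X',Y)\bigr].
\]
The key point is that truncation makes the per-seed maps many-to-one. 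The collision event then becomes a walk-mixing statement on an auxiliary expander (the walk graph composed with the truncation), to which the same spectral bound applies with $\lambda^{\ell}$ decaying faster than $2^{-b/2}\varepsilon$.

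I expect the truncation to cost $O(b+\log(1/\varepsilon))$ output bits, which is consistent with $m\geq u-O(b+\log(1/\varepsilon))$. If the direct collision calculation fails, I would fall back on the generic conversion, which I believe is~\cite[Lemma~4.3]{GW97}. It goes as follows. Apply the weak extractor to $X$ with a fresh seed. Hash its output with a pairwise-independent function whose seed is the public seed. Finally, use the leftover hash lemma, which is strong, to absorb the loss. The entropy loss is then the $O(\log(1/\varepsilon))$ from hashing plus the $O(b)$ already paid, and the seed stays $O(b+\log(1/\varepsilon))$.
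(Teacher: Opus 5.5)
Your weak-extractor stage is correct: the bound $\tfrac12\lambda^\ell2^{b/2}$ is the standard $\ell_2$ analysis of the expander walk. The paper itself does not reprove this statement; it imports it from~\cite{GW97}. The gap is strongness. The statement asserts it, and Lemma~\ref{lem:deficiency-base} needs it before lifting with Theorem~\ref{thm:high-entropy-lifting}, but neither of your routes delivers it.

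\emph{The truncation route.} You must bound $\Pr_Y[\mathrm{trunc}(\pi_Yx)=\mathrm{trunc}(\pi_Yx')]$ for $x\neq x'$. This is the operator $\E_y[P_y^{\top}JP_y]$, where $P_y$ is the permutation matrix of walk $y$ and $J$ is the indicator matrix of the truncation cells. Both points are driven by the \emph{same} labels, so this is a walk on pairs $(x,x')\mapsto(\pi_jx,\pi_jx')$, not $(A^\ell)^{\top}JA^\ell$, and the spectral gap of $G$ does not control it. For example, take a Cayley labeling by left multiplication. Then $\pi_y(x)^{-1}\pi_y(x')=x^{-1}x'$ for every $y$. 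If the cells happen to be left cosets of a subgroup, then under every fixed seed the truncated output is a relabeling of the truncated source. A source uniform on a $2^{-b}$ fraction of the cells then stays at distance $1-2^{-b}$ from uniform. You would need a pair-mixing property of the labeling, and you do not supply one.

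\emph{The fallback.} It fails for two concrete reasons. First, a ``fresh'' seed must be part of the public seed, and then the walk gains nothing: for each fixed $y$, $\pi_y(X)$ is a permutation of $X$ and has deficiency exactly $b$. Second, a universal family from $\{0,1\}^u$ to $\{0,1\}^m$ with $m<u$ has at least $2^m$ members; with fewer, universality would force every member to be injective. Pairwise independence needs $2^{2m}$ members. With $m=u-O(b+\log(1/\varepsilon))$, the seed is therefore $\Omega(u)$, not $O(b+\log(1/\varepsilon))$.

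\emph{A repair.} You can use only your walk and the leftover hash lemma if you hash a short block of the \emph{source} instead of the long output.
Take $\ell=\lceil b/2+\log(2/\varepsilon)\rceil$ and split $X=(X_1,X_2)$ with $|X_1|=b+c\ell+\lceil2\log(2/\varepsilon)\rceil$. The public seed $Y\in\{0,1\}^{|X_1|}$ selects a universal hash $h_Y:\{0,1\}^{|X_1|}\to\{0,1\}^{c\ell}$. The output is the endpoint of the walk $S=h_Y(X_1)$ started at $X_2$, on an expander with vertex set $\{0,1\}^{|X_2|}$.
\begin{itemize}
\item By~\eqref{eq:classical-pair-chain}, $\hmin(X_1\mid X_2)\geq|X_1|-b$. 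The leftover hash lemma with side information $X_2$ then lets you replace $S$ by an independent uniform walk at cost $\varepsilon/4$, jointly with $(Y,X_2)$.
\item Since $H_\infty(X_2)\geq|X_2|-b$, your walk bound gives an endpoint that is $\varepsilon/4$-close to uniform and independent of $Y$.
\end{itemize}
Seed length and output loss are both $O(b+\log(1/\varepsilon))$. This is the same block structure the paper uses in Lemma~\ref{lem:deficiency-base}.
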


\begin{theorem}[Berta--Fawzi--Scholz {\cite[Theorem~II.4]{BFS}}]\label{thm:high-entropy-lifting}
Let $u\geq2$, $1\leq k_0\leq u-1$, and $0<\eta_0<1$. For an absolute constant $C_B\geq1$, every strong classical $(k_0,\eta_0)$ extractor $H:\{0,1\}^u\times\{0,1\}^d\to\{0,1\}^m$ is also a strong quantum-proof $(k_0+1,C_B2^{u-k_0}\eta_0)$ extractor.
\end{theorem}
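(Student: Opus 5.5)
The plan follows the operator-space route of Berta, Fawzi and Scholz. We recast both the classical and the quantum-proof guarantee as norm bounds on one linear map. We then show that at entropy deficiency $u-k_0$ the two norms differ by at most a factor $C_B2^{u-k_0}$, with $C_B$ absorbing a Grothendieck-type constant.

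\emph{Step 1: rewrite the quantum advantage.} Fix a binary measurement with operators $0\preceq M_{y,z}\preceq I_E$ on output $z$, seed $y$ and register $E$. Put
\[
B_x=\E_y M_{y,H(x,y)}-\E_{y,z}M_{y,z},
\]
so that the distinguishing advantage for a cq state $\rho_{XE}$ equals $\sum_x\Tr(\rho_E^xB_x)$, exactly as in part~(2) of the proof of Theorem~\ref{thm:correspondence}. Suppose $\hmin(X\mid E)\geq k_0+1$. By~\eqref{eq:entropy-domination} there is a normalized state $\omega_E$ with $\rho_E^x\preceq2^{-k_0-1}\omega_E$ for all $x$. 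Writing $\rho_E^x=\omega^{1/2}T_x\omega^{1/2}$ gives operators $0\preceq T_x\preceq2^{-k_0-1}I$. These operators sum to the projection onto $\supp\omega$.

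\emph{Step 2: what the classical guarantee controls.} For each pure state $\ket v$, the function $g_v(y,z)=\langle v,M_{y,z}v\rangle$ is a classical $[0,1]$-valued test. So the classical strong guarantee gives
\[
\Bigl|\sum_xp_x\langle v,B_xv\rangle\Bigr|\leq\eta_0
\]
for every distribution $p$ with $p_x\leq2^{-k_0}$. The set of such $p$ is the intersection of the simplex with the box $[0,2^{-k_0}]^{2^u}$. Up to a factor $2^{u-k_0}$ relative to the uniform weight $2^{-u}$, this set is comparable to the $\ell_\infty$ unit ball. Hence the linear map $\Phi:(c_x)_x\mapsto\sum_xc_xB_x$ (more precisely, its action against $\ell_\infty$-bounded coefficient vectors, tested on every state) has norm at most $O(2^{u-k_0}\eta_0)$ as a map from $\ell_\infty^{2^u}$ into Hermitian operators. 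The reduction from the box to the $\ell_\infty$ ball subtracts the uniform component. It uses that both endpoints of $[0,2^{-k_0}]$ can be reached as convex combinations of admissible sources. The resulting loss in the entropy threshold is where $k_0+1$ instead of $k_0$ appears.

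\emph{Step 3: the main obstacle.} The quantum advantage $\sum_x\Tr(T_x\,\omega^{1/2}B_x\omega^{1/2})$ has operator-valued coefficients $T_x$ rather than scalars. It is a completely bounded version of the pairing in Step~2. Because the domain $\ell_\infty^{2^u}$ is commutative, a Grothendieck-type inequality bounds the required cb norm by a constant multiple of the bounded norm. Concretely, one shows
\[
\sum_x\Tr(T_xC_x)\leq K_G\cdot 2^{-k_0}\sup_{\|c\|_\infty\leq1}\Bigl\|\sum_x c_xC_x\Bigr\|_{1},
\]
where $C_x=\omega^{1/2}B_x\omega^{1/2}$ and the right-hand supremum is evaluated via vector states. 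I expect this to be the delicate step. I would first try the noncommutative Grothendieck inequality, or factor through $\omega^{1/2}$ and diagonalize each pairing. Since $\Tr\omega=1$, the factors of $\omega^{1/2}$ are absorbed without loss, and combining with Step~2 gives the advantage bound $C_B2^{u-k_0}\eta_0$.
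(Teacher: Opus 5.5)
The paper does not prove this statement; it only cites Berta--Fawzi--Scholz. Your outline has a genuine gap at Steps~2--3, and the displayed inequality in Step~3 is false.

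By fixing the measurement and testing only against vector states $g_v$, Step~2 keeps only the \emph{bounded} norm of $\Phi:c\mapsto\sum_xc_xB_x$ as a map $\ell_\infty^{2^u}\to M_D$. Step~3 then needs this to control $\|\Phi\|_{cb}$. Commutativity of the domain does not suffice. Grothendieck's inequality concerns bilinear forms on two commutative algebras, i.e.\ maps $\ell_\infty\to\ell_1$. For maps $\ell_\infty^N\to M_D$ the ratio $\|\Phi\|_{cb}/\|\Phi\|$ can be of order $\sqrt N$.

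Concretely, let $B_1,\ldots,B_N$ be anticommuting Hermitian unitaries on $\mathbb C^D$. Up to a factor $\tfrac12$ and an $O(N^{-1/2})$ perturbation, such a family arises from $H(x)=x$, $d=0$, $M_z=(I+B_z)/2$. Put $\omega=I/D$ and $\lambda=2/N$, so $\lambda=2^{-k_0-1}$ with $k_0=u-2$. Let $T_x=\lambda P_x$, where $P_x$ projects onto the $+1$ eigenspace of $B_x$.
\begin{itemize}
\item Then $\rho_E^x=\omega^{1/2}T_x\omega^{1/2}$ is a normalized cq state with $\rho_E^x\preceq2^{-k_0-1}\omega$, and $\sum_x\Tr(T_xC_x)=N\lambda/2=1$.
\item On the other hand, $(\sum_xc_xB_x)^2=\sum_xc_x^2I$ for real $c$, so $\|\sum_xc_xB_x\|\leq2\sqrt N$ whenever $\|c\|_\infty\leq1$.
\item Hence $\|\sum_xc_xC_x\|_1\leq2\sqrt N$, and your right-hand side is at most $8K_G/\sqrt N$.
\end{itemize}
So no argument, the noncommutative Grothendieck inequality included, can recover the theorem from the quantity Step~2 controls. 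Separately, $\sum_xT_x=\omega^{-1/2}\rho_E\omega^{-1/2}$, which is the support projection only when $\omega=\rho_E$.

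The missing idea is to keep $(y,z)$ as a second classical index. The classical guarantee should be used for all tests at once, i.e.\ through the $\ell_1$ norm over $(y,z)$, and Grothendieck applied to a bilinear form in $x$ and $(y,z)$.
\begin{itemize}
\item Put $\Delta_{x,(y,z)}=2^{-d}(\mathbf 1[H(x,y)=z]-2^{-m})$, let $\ket\psi$ be the canonical purification of $\omega$, and set $a_{y,z}=(M_{y,z}\otimes I)\ket\psi$ and $b_x=(I\otimes T_x^{\top})\ket\psi$.
\item Then $\Tr(\rho_E^xM_{y,z})=\langle a_{y,z},b_x\rangle$ is real, $\|a_{y,z}\|\leq1$, $\|b_x\|\leq2^{-k_0-1}$, and the advantage equals $\sum_{x,(y,z)}\Delta_{x,(y,z)}\langle a_{y,z},b_x\rangle$.
\item The real Grothendieck inequality bounds this by $K_G2^{-k_0-1}\max_{s\in\{\pm1\}^{2^u}}\sum_{y,z}|\sum_xs_x\Delta_{x,(y,z)}|$.
\end{itemize}
Your shift trick now works on $\Delta$ instead of $\Phi$. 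Write $s=4q-3\cdot\mathbf 1$ with $q\in[1/2,1]^{2^u}$; then $q/\sum_xq_x$ is a $k_0$-source because $k_0\leq u-1$. Since $\sum_{y,z}|\sum_xp_x\Delta_{x,(y,z)}|\leq2\eta_0$ for every $k_0$-source $p$, the maximum is at most $14\cdot2^u\eta_0$. This gives the theorem with $C_B=7K_G$.

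This is the elementary form of the fact that maps $\ell_\infty\to\ell_1$ have completely bounded norm at most $K_G$ times their norm. Note also that the reduction from $\pm1$ coefficients to $k_0$-sources costs only a constant factor, not entropy, so it does not explain the $+1$ in the threshold.
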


Combining Theorems~\ref{thm:classical-high-entropy} and~\ref{thm:high-entropy-lifting} gives a quantum-proof extractor with seed length and output loss $O(a+\log(1/\eta))$. We shorten its seed to $O(\log a+\log(1/\eta))$ by using our constant-rate extractor on a short block of the source.

\begin{lemma}\label{lem:deficiency-base}
For every integer $n\geq1$, entropy threshold $0\leq k\leq n$, and error $0<\eta\leq1/2$, there is an explicit strong quantum-proof $(k,\eta)$ extractor $\Ext:\{0,1\}^n\times\{0,1\}^d\to\{0,1\}^m$ such that, writing $a=n-k$,
\[
 d=O(\log a+\log(1/\eta)),\qquad
 m\geq n-O(a+\log(1/\eta)).
\]
The map $\Ext$ is computable in polynomial time.
\end{lemma}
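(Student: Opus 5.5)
The plan is the one sketched in the technical overview. Split the source into a short block $B$ and a long block $Z$, turn $B$ into a short seed with the constant-rate extractor, and feed that seed to the lifted Goldreich--Wigderson extractor on $Z$.

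\textbf{Trivial regime.} If $a+\log(1/\eta)\geq n/C_1$ for a large constant $C_1$, output nothing (or a single trivial bit). Then $m\geq n-O(a+\log(1/\eta))$ holds vacuously, and any seed length is acceptable. Assume from now on that $n\geq C_1(a+\log(1/\eta))$.

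\textbf{Construction.} Let $L=C_0(a+\log(1/\eta))$, with $C_0$ large but $C_0\ll C_1$. Write $x=(b,z)$ with $|b|=L$ and $|z|=n-L$. By \eqref{eq:classical-pair-chain}, a source with $\hmin(X\mid E)\geq n-a$ satisfies
\[
 \hmin(B\mid ZE)\geq L-a\geq(1-1/C_0)L,\qquad \hmin(Z\mid E)\geq|Z|-a.
\]

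\textbf{The seed-producing extractor $F$.} Apply Theorem~\ref{thm:constant-rate} on the block $B$ with rate $\mu=1/2$ and error $\eta/2$. The side condition $\log(L/\eta)\leq c_\mu L$ holds once $C_0$ is large, because $\log L=O(\log(a+\log(1/\eta)))$ and $\log(1/\eta)\leq L/C_0$. Theorem~\ref{thm:constant-rate} only gives $\Omega(L)$ output bits, but the output must equal the Goldreich--Wigderson seed length $O(a+\log(1/\eta))$. To get this I use Corollary~\ref{cor:entropy-fraction}, which gives output $\lfloor(1-\theta)(L-a)\rfloor\geq L/2$ for a suitable constant $\theta$. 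I then pick $C_0$ larger than the constant hidden in the $O(\cdot)$ of the GW seed length, so that $L/2$ covers that seed. I truncate $F$'s output to the needed length $d_H$, which preserves its guarantee. The seed length of $F$ is $O(\log(L/\eta))=O(\log a+\log(1/\eta))$. For the finitely many small $L$ not covered by ``sufficiently large $n$,'' I enlarge $C_0$ or handle them by the trivial regime.

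\textbf{The main extractor $H$.} Apply Theorem~\ref{thm:classical-high-entropy} on $u=|Z|$ with classical entropy $k_0=|Z|-a-1$ and error $\eta_0=\eta 2^{-(a+1)}/(2C_B)$. Theorem~\ref{thm:high-entropy-lifting} then makes $H$ a strong quantum-proof $(|Z|-a,\eta/2)$ extractor. Its seed length is $d_H=O(a+\log(1/\eta_0))=O(a+\log(1/\eta))$, and its output is at least $|Z|-O(a+\log(1/\eta))$. This fixes $d_H$ before $C_0$ is chosen, so the choice above is not circular. I handle the edge cases $k_0<1$ or $a=0$ by shifting $a$ up by one.

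\textbf{Output and error analysis.} Set $\Ext(x;y)=H(z,F(b,y))$. Since $\hmin(B\mid ZE)\geq(1-1/C_0)L$, the strong guarantee of $F$, with side information $ZE$, gives
\[
 \dist(\rho_{F(B,Y)YZE},\tau_{d_H}\otimes\rho_{YZE})\leq\eta/2.
\]
Apply the channel that computes $H(Z,\cdot)$ and discards $Z$; contractivity replaces the internal seed by a uniform $V$ at cost $\eta/2$. The guarantee of $H$ on the state $\rho_{ZE}$ then bounds the distance of $H(Z,V)$ from uniform. Since $Y$ is independent of $ZEV$, this also holds given $Y$. The triangle inequality gives total error $\eta$. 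The output loss is $L+O(a+\log(1/\eta))=O(a+\log(1/\eta))$, and the seed length is that of $F$.

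\textbf{Main obstacle.} The hard part is the parameter bookkeeping: choosing $C_0$ after the GW constant, and checking that the hypothesis $\log(L/\eta)\leq c\,L$ of Theorem~\ref{thm:constant-rate} and Corollary~\ref{cor:entropy-fraction} holds uniformly. This matters especially when $a$ is small and $\eta$ is large, where $L$ may be too small for ``sufficiently large $n$.'' There I would pad $L$ up to a constant so the theorems apply, with only an $O(1)$ cost in output loss.
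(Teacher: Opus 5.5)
Your proposal is correct and follows essentially the same route as the paper's proof: reserve a block of $C_0(a+\log(1/\eta))$ bits, turn it into the Goldreich--Wigderson seed with the constant-rate extractor, lift GW via Theorem~\ref{thm:high-entropy-lifting} at error roughly $\eta 2^{-a}$, and combine by seed replacement, contractivity, and the triangle inequality. The only difference is cosmetic. The paper applies Theorem~\ref{thm:constant-rate} directly (at rate $3/4$), since its $\Omega(n_B)$ output already exceeds $d_H$ once $C_0$ is chosen after $C_H$, so your detour through Corollary~\ref{cor:entropy-fraction} is unnecessary. Also, in the trivial regime only the empty output, not a fixed bit, meets the error bound.
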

\begin{proof}
Put $\widehat\eta=\min\{\eta,1/8\}$. Choose sufficiently large absolute constants $C_F,C_H,C_0$, in order.

\emph{Construction.} Set $n_B=\lceil C_0(a+\log(1/\widehat\eta))\rceil$. If $n-n_B-C_F(a+\log(1/\widehat\eta))<1$, output the empty string. Otherwise put
\[
 L=n-n_B,\qquad m=\lfloor L-C_F(a+\log(1/\widehat\eta))\rfloor,\qquad
 d_H=\lceil C_H(a+\log(1/\widehat\eta))\rceil.
\]
Split the input $x$ into its first $n_B$ bits $x_B$ and its remaining $L$ bits $x_Z$. We extract from $x_B$ to generate a seed for extraction from $x_Z$.

To construct $H_L$, set
\[
 b=\lceil a\rceil+1,\qquad
 \varepsilon=\frac{\widehat\eta}{C_B2^{b+1}}.
\]
Apply Theorem~\ref{thm:classical-high-entropy} with input length $L$, entropy threshold $L-b$, and error $\varepsilon$ to obtain a strong classical $(L-b,\varepsilon)$ extractor. Since $b+\log(1/\varepsilon)=O(a+\log(1/\widehat\eta))$, sufficiently large $C_F,C_H$ ensure at least $m$ output bits and at most $d_H$ seed bits. The condition $L-b\geq1$ also holds. Truncate the output and pad the seed to obtain
\[
 H_L:\{0,1\}^L\times\{0,1\}^{d_H}\to\{0,1\}^m.
\]

By Theorem~\ref{thm:high-entropy-lifting}, $H_L$ is a strong quantum-proof $(L-b+1,\widehat\eta/2)$ extractor, since $C_B2^b\varepsilon=\widehat\eta/2$. Its entropy threshold is $L-b+1=L-\lceil a\rceil\leq L-a$.

Apply Theorem~\ref{thm:constant-rate} at length $n_B$, entropy rate $3/4$, and error $\widehat\eta/2$ to obtain a strong quantum-proof $(3n_B/4,\widehat\eta/2)$ extractor. For sufficiently large $C_0$, the theorem's hypotheses hold and the output has at least $d_H$ bits. Retain the first $d_H$ bits to obtain
\[
 F:\{0,1\}^{n_B}\times\{0,1\}^{d_B}\to\{0,1\}^{d_H},\qquad
 d_B=O(\log n_B+\log(1/\widehat\eta)).
\]
With public seed $y\in\{0,1\}^{d_B}$, define
\[
 \Ext(x;y)=H_L\bigl(x_Z,F(x_B,y)\bigr).
\]

\emph{Correctness.} Fix a normalized cq source $\rho_{BZE}$ with $\hmin(BZ\mid E)\geq n-a$. The registers $B$ and $Z$ have lengths $n_B$ and $L$, respectively. For sufficiently large $C_0$, the entropy inequalities~\eqref{eq:classical-pair-chain} give
\[
 \hmin(B\mid ZE)\geq n_B-a\geq3n_B/4,\qquad
 \hmin(Z\mid E)\geq L-a.
\]
Take $Y\sim U_{d_B}$ independently of $BZE$. Put $W=F(B,Y)$ and $V=H_L(Z,W)$, so $V$ is the actual output. Let $U\sim U_{d_H}$ be independent of $BZYE$, and put $V'=H_L(Z,U)$. We compare $V$ with $V'$ by replacing the generated seed $W$ with $U$.

The strong guarantee of $F$, applied to source $B$ with side information $ZE$, gives
\[
 \dist(\rho_{WZYE},\rho_{UZYE})\leq\widehat\eta/2,
 \qquad \rho_{UZYE}=\tau_{d_H}\otimes\rho_{ZYE}.
\]
Apply $H_L$ to $Z$ and the seed register in both states, then discard those two input registers. Contractivity gives
\begin{equation}\label{eq:base-seed-replacement}
 \dist(\rho_{VYE},\rho_{V'YE})\leq\widehat\eta/2.
\end{equation}

It remains to bound the distance of $V'$ from uniform. Independence of $Y$ gives $\hmin(Z\mid YE)=\hmin(Z\mid E)\geq L-a$. The seed $U$ is uniform and independent of $ZYE$. Applying $H_L$ to source $Z$ with side information $YE$, and discarding $U$, yields
\begin{equation}\label{eq:base-uniform-seed-output}
 \dist(\rho_{V'YE},\tau_m\otimes\rho_{YE})\leq\widehat\eta/2.
\end{equation}
Combining~\eqref{eq:base-seed-replacement} and~\eqref{eq:base-uniform-seed-output} by the triangle inequality gives
\begin{align*}
 \dist(\rho_{VYE},\tau_m\otimes\rho_{YE})
 &\leq\dist(\rho_{VYE},\rho_{V'YE})
      +\dist(\rho_{V'YE},\tau_m\otimes\rho_{YE})\\
 &\leq\widehat\eta\leq\eta.
\end{align*}
Since $\rho_{YE}=\tau_{d_B}\otimes\rho_E$, this proves the strong extraction guarantee.

The public seed has length $d_B=O(\log n_B+\log(1/\widehat\eta))=O(\log a+\log(1/\widehat\eta))$. The output loss is $n-m=n_B+\lceil C_F(a+\log(1/\widehat\eta))\rceil=O(a+\log(1/\widehat\eta))$. 
%In the empty-output case, $n=O(a+\log(1/\widehat\eta))$. 
%Choosing the absolute constant in the output-loss bound sufficiently large makes the lower bound on $m$ in Lemma~\ref{lem:deficiency-base} nonpositive.
Both $F$ and $H_L$ are computable in polynomial time.
\end{proof}

\subsection{Seed-length refinement by limited independence}\label{sec:seed-compression}
We next reduce the seed length of the extractor in Lemma~\ref{lem:deficiency-base}, at the cost of a larger output loss.

\begin{theorem}[Seed length versus output loss]\label{thm:moment-tradeoff}
For every integer $N\geq1$, entropy threshold $0\leq k\leq N$, error $0<\eta<1/4$, and integer $\ell\geq1$, there is an explicit strong quantum-proof $(k,\eta)$ extractor $\Ext:\{0,1\}^N\times\{0,1\}^d\to\{0,1\}^m$ such that, writing $a=N-k$,
\begin{align}
 d&\leq2\log(1/\eta)+\log\ell+\frac a\ell+5,\label{eq:moment-seed}\\
 m&\geq N-O\!\left(a+\log(1/\eta)+\ell[\log a+\log(1/\eta)+\log\ell]\right).
 \label{eq:moment-output}
\end{align}
The map $\Ext$ is computable in time polynomial in $N+\log(1/\eta)+\ell$.
\end{theorem}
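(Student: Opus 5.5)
The plan is to follow the construction sketched in the technical overview. We split the $N$-bit source into a block $X$ of $n=N-s$ bits and a coefficient block $Z$ of $s$ bits. We apply the base extractor $\Ext_0$ of Lemma~\ref{lem:deficiency-base} to $X$, using a seed taken from a short list of $2\ell$-wise independent candidates generated from $Z$; the public seed is the index $J$ into that list.

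\textbf{Parameters and construction.} Let $\Ext_0:\{0,1\}^n\times\{0,1\}^{d_0}\to\{0,1\}^m$ be the strong quantum-proof $(n-a,\eta/2)$ extractor of Lemma~\ref{lem:deficiency-base}. It has $d_0=O(\log a+\log(1/\eta))$ and $m\geq n-O(a+\log(1/\eta))$. Let $t=2^d$, where
\[
 d=\left\lceil\log\bigl(12\ell\,2^{a/\ell}\eta^{-2}\bigr)\right\rceil,
\]
so that \eqref{eq:moment-seed} holds with room to spare. Let $q=\max\{d_0,d\}$. Interpret $z\in\{0,1\}^s$, with $s=2\ell q$, as the coefficients of a polynomial $p_z$ of degree at most $2\ell-1$ over $\F_{2^q}$, built with Shoup's algorithm as in Lemma~\ref{lem:long-output}. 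Fix distinct points $u_1,\ldots,u_t\in\F_{2^q}$, and let $G(z,j)$ be the first $d_0$ bits of $p_z(u_j)$. Define $\Ext((x,z);j)=\Ext_0(x;G(z,j))$. For uniform $z$, the candidates $G(z,1),\ldots,G(z,t)$ are $2\ell$-wise independent and each is uniform on $\{0,1\}^{d_0}$, because evaluation at $2\ell$ distinct points is a bijection on coefficients and the projection is linear and onto. The output loss is $O(a+\log(1/\eta))+s$, and
\[
 s=O\bigl(\ell(\log a+\log(1/\eta)+\log\ell)\bigr),
\]
which gives \eqref{eq:moment-output}. If $s$ plus the loss of $\Ext_0$ exceeds $N$, we output the empty string, and the bound holds trivially.

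\textbf{Reference output.} Fix a cq source with $\hmin(XZ\mid E)\geq N-a$. By \eqref{eq:classical-pair-chain}, $\hmin(X\mid E)\geq n-a$. Let $U$ be a uniform $d_0$-bit seed independent of $XZEJ$, and let $\sigma$ denote the state with output $\Ext_0(X;U)$. The guarantee of $\Ext_0$ gives $\dist(\sigma_{VJE},\tau_m\otimes\rho_{JE})\leq\eta/2$. By the triangle inequality, it remains to show \eqref{eq:overview-seed-replacement}, namely $\dist(\rho_{VJE},\sigma_{VJE})\leq 2^{a/(2\ell)}\sqrt{3\ell/t}\leq\eta/2$.

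\textbf{Moment argument (main obstacle).} By \eqref{eq:trace-distance-test} it suffices to consider a block-diagonal test built from operators $0\preceq M_{j,v}\preceq I_E$. Write $\rho_{XZE}=\sum_{x,z}\ket{xz}\bra{xz}\otimes\rho^{xz}$. The distinguishing gap is then $\sum_{x,z}\Tr(\rho^{xz}Q_x(z))$, where
\[
 Q_x(z)=\frac1t\sum_{j=1}^t\Bigl(M_{j,\Ext_0(x,G(z,j))}-\E_U M_{j,\Ext_0(x,U)}\Bigr)
\]
is Hermitian with $\norm{Q_x(z)}\leq1$. By \eqref{eq:entropy-domination}, there is a state $\omega$ with $\rho^{xz}\preceq 2^{-(N-a)}\omega$. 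For each $(x,z)$, Hölder's inequality applied to the spectral measure gives
\[
 \Tr(\rho^{xz}Q)\leq(\Tr\rho^{xz})^{1-1/(2\ell)}\bigl(\Tr\rho^{xz}Q^{2\ell}\bigr)^{1/(2\ell)}.
\]
Summing over $(x,z)$ and applying Hölder's inequality once more bounds the gap by
\[
 \Bigl(\sum_{x,z}\Tr\bigl(\rho^{xz}Q_x(z)^{2\ell}\bigr)\Bigr)^{1/(2\ell)}.
\]
Since $Q^{2\ell}\succeq0$, each term satisfies $\Tr(\rho^{xz}Q_x(z)^{2\ell})\leq 2^{a-N}\Tr(\omega Q_x(z)^{2\ell})$. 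Summing over $z$ turns this into $2^a\,2^{-n}\sum_x\E_{z}\Tr(\omega Q_x(z)^{2\ell})$ with $z$ uniform. For fixed $x$, the summands $Y_j=M_{j,\Ext_0(x,G(z,j))}-\E_U M_{j,\Ext_0(x,U)}$ are $2\ell$-wise independent and have mean zero. Expanding $Q^{2\ell}=t^{-2\ell}\sum_{j_1,\ldots,j_{2\ell}}Y_{j_1}\cdots Y_{j_{2\ell}}$, every word in which some index appears exactly once has zero expectation. This holds even though the $Y_j$ do not commute: the word involves at most $2\ell$ distinct indices, so its expectation factors through the lone index by independence. The remaining words have each index repeated, so there are at most $(3\ell t)^\ell$ of them, and each has norm at most one. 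Hence $\norm{\E_z Q_x(z)^{2\ell}}\leq(3\ell/t)^\ell$, so the gap is at most $\bigl(2^a(3\ell/t)^\ell\bigr)^{1/(2\ell)}$. Getting this combinatorial count and the operator Hölder steps right is the delicate part.

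\textbf{Conclusion.} With the chosen $t$, the gap is at most $\eta/2$. This yields total error $\eta$, and the seed length \eqref{eq:moment-seed} is $d=\log t\leq2\log(1/\eta)+\log\ell+a/\ell+5$. Explicitness follows from that of $\Ext_0$ and of finite-field arithmetic.
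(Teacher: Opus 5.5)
Your proposal is correct and follows the paper's proof essentially step for step: the same split into $(x,z)$, the same $2\ell$-wise independent polynomial candidate seeds, the same uniform-seed reference state, and the same count $\sum_{q\le\ell}\binom{t}{q}q^{2\ell}\le(3\ell t)^\ell$, with your two-step H\"older argument spelling out what the paper calls Jensen's inequality. Two small fixes, neither of which affects~\eqref{eq:moment-output}: since $d$ contains the term $a/\ell$, the coefficient block has $s=2\ell\max\{d_0,d\}=O(a+\ell[\log a+\log(1/\eta)+\log\ell])$, not your bound without the $a$; and $d_0$ should be a seed bound from Lemma~\ref{lem:deficiency-base} that is uniform in the source length (padding the seed), to break the circular dependence among $d_0$, $s$, and $n=N-s$.
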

\begin{proof}
\emph{Construction.} Let $d_0=O(\log a+\log(1/\eta))$ be an integer bounding the seed length in Lemma~\ref{lem:deficiency-base} at deficiency $a$ and error $\eta/2$, uniformly over the source length. We generate $t$ candidate $d_0$-bit seeds and use $d=\log t$ public bits to select one. Choose $t=2^d$ to be the least power of two at least $12\ell\eta^{-2}2^{a/\ell}$.

We generate the candidate seeds by evaluating a polynomial over $\F_{2^w}$. The field must contain $t$ distinct evaluation points, and each field element must supply $d_0$ bits. Thus we take $w$ to be the least power of two at least $\max\{d_0,d\}$. A polynomial of degree at most $2\ell-1$ has $2\ell$ coefficients, so we reserve $s=2\ell w$ source bits for its coefficients.

Put $n=N-s$. If $n<\max\{1,a\}$, output the empty string. Otherwise, write the input as $(x,z)\in\{0,1\}^n\times\{0,1\}^s$. Apply Lemma~\ref{lem:deficiency-base} to the $n$-bit block and pad the seed to length $d_0$ to obtain a strong quantum-proof $(n-a,\eta/2)$ extractor
\[
 \Ext_0:\{0,1\}^n\times\{0,1\}^{d_0}\to\{0,1\}^m.
\]

Following Xun and Zuckerman~\cite[Section~1.6.1]{XZ}, use $z$ as the coefficients of a polynomial. Fix a binary basis of $\F_{2^w}$ and write $z$ as $2\ell$ field elements $z_0,\ldots,z_{2\ell-1}$. Define
\[
 p_z(u)=\sum_{q=0}^{2\ell-1}z_q u^q.
\]
For each $j\in[t]$, let $u_j\in\F_{2^w}$ be the element whose binary coordinates encode $j-1$. Let $\pi_{d_0}:\F_{2^w}\to\{0,1\}^{d_0}$ be the binary linear projection onto the first $d_0$ basis coordinates. Set
\[
 G:\{0,1\}^s\times[t]\to\{0,1\}^{d_0},\qquad
 G(z,j)=\pi_{d_0}(p_z(u_j)).
\]
The public seed is a uniform $j\in[t]$. Define the extractor by
\[
 \Ext((x,z);j)=\Ext_0(x;G(z,j)).
\]

\emph{Correctness.} Since $2^w\geq t$, the evaluation points are distinct. For uniform $z\sim U_s$, the Vandermonde matrices show that any $2\ell$ evaluations of $p_z$ are independent and uniform. After applying $\pi_{d_0}$, the values $G(U_s,1),\ldots,G(U_s,t)$ remain $2\ell$-wise independent and uniform.

We prove a general error bound for $\Ext((x,z);j)=\Ext_0(x;G(z,j))$.

\begin{claim}
Let $\Ext_0:\{0,1\}^n\times\{0,1\}^{d_0}\to\{0,1\}^m$ be a weak quantum-proof $(n-a,\eta_0)$ extractor. Let $G:\{0,1\}^s\times[t]\to\{0,1\}^{d_0}$ be a map such that $G(U_s,1),\ldots,G(U_s,t)$ are $2\ell$-wise independent and uniform. Suppose $t\geq\ell$. Then the function $\Ext:\{0,1\}^{n+s}\times[t]\to\{0,1\}^m$, defined by $\Ext((x,z);j)=\Ext_0(x;G(z,j))$, is a strong quantum-proof $(n+s-a,\eta)$ extractor, where
\begin{equation}\label{eq:quantum-compression-error}
 \eta=\eta_0+2^{a/(2\ell)}\sqrt{\frac{3\ell}{t}}.
\end{equation}
\end{claim}
\begin{proof}[Proof of the claim]
Fix a normalized cq state $\rho_{XZE}$ with $\hmin(XZ\mid E)\geq n+s-a$. Write $\rho_E^{xz}$ for the subnormalized blocks of $\rho_{XZE}$. By~\eqref{eq:entropy-domination}, there is a normalized state $\omega_E$ such that
\begin{equation}\label{eq:compression-domination}
 \rho_E^{xz}\preceq2^{a-n-s}\omega_E
 \quad\text{for every }x,z.
\end{equation}
Summing over $z$ gives $\rho_E^x:=\sum_z\rho_E^{xz}\preceq2^{a-n}\omega_E$. Hence $\hmin(X\mid E)\geq n-a$.

Let $J$ be uniform on $[t]$ and independent of $XZE$. Let $\rho_{VJE}$ be the joint state of $V=\Ext_0(X;G(Z,J))$, $J$, and $E$. Define $\sigma_{VJE}$ by replacing $G(Z,J)$ with a uniform seed $R\sim U_{d_0}$ independent of $XZEJ$. Thus the output register $V$ contains $\Ext_0(X;R)$ in $\sigma_{VJE}$, and $J$ is independent of $VE$. The weak guarantee for $\Ext_0$ gives
\[
 \dist(\sigma_{VJE},\tau_m\otimes\tau_{[t]}\otimes\rho_E)\leq\eta_0.
\]

We next compare $\rho_{VJE}$ and $\sigma_{VJE}$. For each public seed $j$ and output $y$, fix a binary measurement $\{M_{j,y},I-M_{j,y}\}$ on $E$, with $0\preceq M_{j,y}\preceq I$. Accept on outcome $M_{j,y}$. For each fixed $x$, put
\begin{align}
 A_{x,j}(z)&=M_{j,\Ext_0(x;G(z,j))}
       -\E_{r\sim U_{d_0}}M_{j,\Ext_0(x;r)},\notag\\
 B_x(z)&=\frac1t\sum_{j=1}^t A_{x,j}(z).\label{eq:compression-test}
\end{align}
By~\eqref{eq:compression-test}, the difference between the acceptance probabilities of the measurements $\{M_{j,y},I-M_{j,y}\}$ on $\rho_{VJE}$ and $\sigma_{VJE}$ is
\[
 \sum_{x,z}\Tr\bigl(\rho_E^{xz}B_x(z)\bigr).
\]

Fix $x$. We first bound the $2\ell$th moment of $B_x(z)$ for uniform $z\sim U_s$. Under $z\sim U_s$, the matrices $A_{x,j}(z)$ are mean-zero, Hermitian, and $2\ell$-wise independent. Each matrix $A_{x,j}(z)$ has norm at most one. Expand $t^{2\ell}\E_z B_x(z)^{2\ell}$ into ordered products $A_{x,j_1}\cdots A_{x,j_{2\ell}}$. Suppose an index $j$ occurs exactly once in $A_{x,j_1}\cdots A_{x,j_{2\ell}}$. Condition on the other distinct matrices in the product. The conditional mean of $A_{x,j}$ is zero, so the product has expectation zero. Thus only tuples with at most $\ell$ distinct indices contribute. Each product has norm at most one. Consequently,
\begin{align}
 t^{2\ell}\norm{\E_z B_x(z)^{2\ell}}
 &\leq\sum_{q=1}^{\ell}\binom tq q^{2\ell}
 \leq\sum_{q=1}^{\ell}\frac{t^q\ell^{2\ell}}{q!}\notag\\
 &\leq t^\ell\ell^\ell\sum_{q=1}^{\ell}\frac{\ell^q}{q!}
 \leq(3\ell t)^\ell.\label{eq:moment-count}
\end{align}
The third inequality in~\eqref{eq:moment-count} uses $t\geq\ell$. Since $B_x(z)^{2\ell}\succeq0$, equation~\eqref{eq:moment-count} gives
\begin{equation}\label{eq:operator-moment}
 0\preceq\E_{z\sim U_s}B_x(z)^{2\ell}
 \preceq\left(\frac{3\ell}{t}\right)^\ell I.
\end{equation}

Since $\sum_{x,z}\Tr\rho_E^{xz}=1$, Jensen's inequality, followed by~\eqref{eq:compression-domination} and~\eqref{eq:operator-moment}, gives
\begin{align*}
 \left|\sum_{x,z}\Tr\bigl(\rho_E^{xz}B_x(z)\bigr)\right|
 &\leq\left(\sum_{x,z}\Tr\bigl(\rho_E^{xz}B_x(z)^{2\ell}\bigr)\right)^{1/(2\ell)}\\
 &\leq\left(2^a\E_{x\sim U_n,z\sim U_s}\Tr\bigl(\omega_E B_x(z)^{2\ell}\bigr)\right)^{1/(2\ell)}\\
 &\leq2^{a/(2\ell)}\sqrt{\frac{3\ell}{t}}.
\end{align*}
Maximizing over the matrices $0\preceq M_{j,y}\preceq I$ gives $\dist(\rho_{VJE},\sigma_{VJE})\leq2^{a/(2\ell)}\sqrt{3\ell/t}$. The triangle inequality then gives~\eqref{eq:quantum-compression-error}.
\end{proof}

For the constructed $\Ext_0$ and $G$, we have $\eta_0=\eta/2$ and $t\geq12\ell\eta^{-2}2^{a/\ell}$. Equation~\eqref{eq:quantum-compression-error} therefore bounds the total error by $\eta$.

\emph{Parameters.} Rounding $t$ up to a power of two gives
\[
 d<2\log(1/\eta)+\log\ell+\frac a\ell+\log12+1
 <2\log(1/\eta)+\log\ell+\frac a\ell+5.
\]
The block $Z$ uses $s=2\ell w$ source bits for the polynomial coefficients. Since $w<2\max\{d_0,d\}$,
\[
 s=2\ell w<4\ell\max\{d_0,d\}
 =O\!\left(a+\ell[\log a+\log(1/\eta)+\log\ell]\right).
\]
The total output loss is $N-m=s+(n-m)$, where $n-m=O(a+\log(1/\eta))$ is the loss of $\Ext_0$. This proves~\eqref{eq:moment-output}. The same bound holds when $N-s<\max\{1,a\}$: then $N<s+\max\{1,a\}$, so choosing the absolute constant in~\eqref{eq:moment-output} sufficiently large makes its right-hand side nonpositive.

Computing $\Ext((x,z);j)$ requires one polynomial evaluation and one call to $\Ext_0$. Including the deterministic construction of the finite field~\cite{Shoup90}, the running time is polynomial in $N+\log(1/\eta)+\ell$.
\end{proof}

\begin{proof}[Proof of Theorem~\ref{thm:deficiency}]
Put $a=n-k$. For $0<\eta<1/4$, apply Theorem~\ref{thm:moment-tradeoff} with $N=n$, entropy threshold $k$, $h=\log a+\log(1/\eta)$, and
\[
 \ell=\max\left\{1,\left\lceil\frac{a}{\alpha h}\right\rceil\right\}.
\]
We have $a/\ell\leq\alpha h$, $\ell h\leq a/\alpha+h$, and $\log\ell\leq\log a+O_\alpha(1)$. Substituting in~\eqref{eq:moment-seed} and~\eqref{eq:moment-output} gives
\[
 \begin{aligned}
 d&\leq(1+\alpha)\log a+(2+\alpha)\log(1/\eta)+O_\alpha(1),\\
 n-m&=O_\alpha\!\left(a+\log(1/\eta)+\ell h\right)
 =O_\alpha(a+\log(1/\eta)).
 \end{aligned}
\]
Since $\ell=O_\alpha(n)$, the extractor is computable in polynomial time. For $1/4\leq\eta\leq1/2$, use the construction at error $1/8$; the seed bound changes by only $O_\alpha(1)$.
\end{proof}

%% file: matrix_samplers.tex
\section{Strong Matrix Samplers}\label{sec:matrix-samplers}

\subsection{From extractors to samplers}\label{sec:sample-tradeoff}
Fix $m,D\geq1$ and $0<\delta\leq\varepsilon\leq1/2$.
Put $\Lambda=\log(2D/\delta)$ and $a=\lceil\Lambda\rceil$.
For any integer $1\leq\ell\leq a$, apply
Theorem~\ref{thm:moment-tradeoff} at deficiency $a$,
error $\varepsilon/4$, and source length
\[
 N=m+\left\lceil
 C_0\bigl(\Lambda+\ell\log(\Lambda/\varepsilon)\bigr)
 \right\rceil,
\]
where $C_0$ is a sufficiently large absolute constant.
This gives a strong quantum-proof
$(N-a,\varepsilon/4)$ extractor
$F:\{0,1\}^N\times\{0,1\}^d\to\{0,1\}^m$.
By the correspondence~\eqref{eq:conditional-forward},
the map $\Samp(x,y)=F(x,y)$ is a strong $D$-dimensional
$(\delta,\varepsilon)$ matrix sampler, with
\begin{align}
 r&=N=m+O\bigl(\Lambda+\ell\log(\Lambda/\varepsilon)\bigr),
 \label{eq:sampler-tradeoff-r}\\
 t&=2^d=O\bigl(\varepsilon^{-2}\ell2^{\Lambda/\ell}\bigr).
 \label{eq:sampler-tradeoff-t}
\end{align}
Each sample is computable by one evaluation of $F$,
in time polynomial in $m+\Lambda+\log(1/\varepsilon)$.

\begin{proof}[Proof of Theorem~\ref{thm:extractor-sampling}]
Put $h=\log a+\log(1/\varepsilon)$ and take
\[
 \ell=\min\left\{
 a,\left\lceil\frac{a}{\alpha h}\right\rceil
 \right\}.
\]
Since $h\leq2a$, we have $\ell=O_\alpha(a/h)$ and
$\Lambda/\ell\leq\max\{1,\alpha h\}$. Thus
\[
 \ell=O_\alpha\!\left(
 \frac{\Lambda}{\log(\Lambda/\varepsilon)}
 \right),
 \qquad
 2^{\Lambda/\ell}
 =O_\alpha(\Lambda^\alpha\varepsilon^{-\alpha}).
\]
Substituting into~\eqref{eq:sampler-tradeoff-r}--\eqref{eq:sampler-tradeoff-t}
gives~\eqref{eq:main-sampler}.
\end{proof}

\begin{corollary}\label{cor:quadratic-accuracy}
Fix $\alpha>0$.
For every $m,D\geq1$ and $0<\delta\leq\varepsilon\leq1/2$,
put $\Lambda=\log(2D/\delta)$.
There is an explicit strong $D$-dimensional $(\delta,\varepsilon)$
matrix sampler
$\Samp:\{0,1\}^r\times[t]\to\{0,1\}^m$ such that
\[
 r=m+O_\alpha\left(
 \Lambda\left[1+\frac{\log(1/\varepsilon)}{\log\Lambda}\right]
 \right),
 \qquad
 t=O_\alpha\left(
 \frac{\varepsilon^{-2}\Lambda^{1+\alpha}}{\log\Lambda}
 \right).
\]
For every fixed $C>0$, the condition $\varepsilon\geq\Lambda^{-C}$
gives $r=m+O_{\alpha,C}(\Lambda)$.
For fixed $\alpha$, $\Samp(x,i)$ is computable from
$x\in\{0,1\}^r$ and $i\in[t]$ in time polynomial in
$m+\Lambda+\log(1/\varepsilon)$.
\end{corollary}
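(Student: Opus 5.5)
We obtain Corollary~\ref{cor:quadratic-accuracy} from the general tradeoff \eqref{eq:sampler-tradeoff-r}--\eqref{eq:sampler-tradeoff-t}. We choose $\ell$ so that $2^{\Lambda/\ell}$ becomes only $\Lambda^{\alpha}$, rather than $(\Lambda/\varepsilon)^{\alpha}$ as in the proof of Theorem~\ref{thm:extractor-sampling}. The idea is to set the exponent using $\log\Lambda$ alone, not $h=\log a+\log(1/\varepsilon)$. This keeps the sample count at $\varepsilon^{-2}$ times a near-linear function of $\Lambda$. The price is a randomness overhead $\ell\log(\Lambda/\varepsilon)$, which now carries the factor $\log(1/\varepsilon)/\log\Lambda$.

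Concretely, put $a=\lceil\Lambda\rceil$ and $g=\log a$ (recall $\log a\geq1$ by convention). Take
\[
 \ell=\min\left\{a,\left\lceil\frac{a}{\alpha g}\right\rceil\right\}.
\]
Then $1\leq\ell\leq a$, so the tradeoff of Section~\ref{sec:sample-tradeoff} applies. We check the two cases for the ratio $\Lambda/\ell$.
\begin{itemize}
\item If $\ell=\lceil a/(\alpha g)\rceil$, then $\Lambda/\ell\leq\alpha g$, which gives $2^{\Lambda/\ell}\leq a^{\alpha}=O(\Lambda^{\alpha})$.
\item If $\ell=a$, then $\Lambda/\ell\leq1$.
\end{itemize}
In both cases $\ell=O_\alpha(\Lambda/\log\Lambda)$, using $a\leq\Lambda+1$ and $\Lambda\geq1$.

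Substituting into \eqref{eq:sampler-tradeoff-t} gives
\[
 t=O\bigl(\varepsilon^{-2}\ell2^{\Lambda/\ell}\bigr)
 =O_\alpha\bigl(\varepsilon^{-2}\Lambda^{1+\alpha}/\log\Lambda\bigr).
\]
Substituting into \eqref{eq:sampler-tradeoff-r} gives
\[
 r-m=O\bigl(\Lambda+\ell(\log\Lambda+\log(1/\varepsilon))\bigr)
 =O_\alpha\bigl(\Lambda+\Lambda\log(1/\varepsilon)/\log\Lambda\bigr).
\]
This is the claimed bound. Note that $\log(\Lambda/\varepsilon)$ splits as $\log\Lambda+\log(1/\varepsilon)$, and the $\ell\log\Lambda$ term is $O_\alpha(\Lambda)$.

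For the special case $\varepsilon\geq\Lambda^{-C}$, we have $\log(1/\varepsilon)\leq C\log\Lambda$, so the bracket is $O(C)$ and $r=m+O_{\alpha,C}(\Lambda)$. Explicitness and the running time carry over directly from the statement after \eqref{eq:sampler-tradeoff-t}, since $\ell\leq a$ is polynomially bounded.

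The only delicate point is small $\Lambda$, where $\log\Lambda$ could be $0$ or the ceiling could make $\ell$ exceed $a$. The $\log\geq1$ convention and the $\min$ with $a$ handle this; there, $\Lambda/\ell\leq1$ and every bound degenerates to a constant factor. We also check that $\ell\geq1$ always holds, so Theorem~\ref{thm:moment-tradeoff} is applicable.
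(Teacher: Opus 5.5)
Your proposal is correct and follows the paper's proof: the paper takes $\ell=\min\{\lceil\Lambda\rceil,\lceil\Lambda/(\alpha\log\Lambda)\rceil\}$, notes $\ell=O_\alpha(\Lambda/\log\Lambda)$ and $2^{\Lambda/\ell}\leq 2\Lambda^\alpha$, and substitutes into \eqref{eq:sampler-tradeoff-r}--\eqref{eq:sampler-tradeoff-t}. Your use of $a=\lceil\Lambda\rceil$ inside the ceiling changes only constants, and the small-$\Lambda$ concern does not arise, since $\delta\leq1/2$ forces $\Lambda\geq2$.
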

\begin{proof}
In~\eqref{eq:sampler-tradeoff-r}--\eqref{eq:sampler-tradeoff-t},
take
\[
 \ell=\min\left\{
 \lceil\Lambda\rceil,
 \left\lceil\frac{\Lambda}{\alpha\log\Lambda}\right\rceil
 \right\}.
\]
Then $\ell=O_\alpha(\Lambda/\log\Lambda)$ and
$2^{\Lambda/\ell}\leq2\Lambda^\alpha$, giving the stated bounds.
If $\varepsilon\geq\Lambda^{-C}$, then
$\log(1/\varepsilon)\leq C\log\Lambda$.
\end{proof}

% The choice $\ell=\lceil\Lambda\rceil$ in
% \eqref{eq:sampler-tradeoff-r}--\eqref{eq:sampler-tradeoff-t}
% gives $t=O(\Lambda/\varepsilon^2)$ and
% $r=m+O(\Lambda\log(\Lambda/\varepsilon))$.
% These parameters also follow from the sampler composition and
% subsampling bounds of Xun and Zuckerman~\cite[Theorem~3,
% Lemmas~23--24, and Remark~9]{XZ}.
\subsection{Lower bounds}
\begin{lemma}\label{prop:strong-randomness-lower}
Let $m,D\geq1$ and $0<\delta,\varepsilon<1$. Every strong $D$-dimensional $(\delta,\varepsilon)$ matrix sampler $\Samp:\{0,1\}^r\times[t]\to\{0,1\}^m$ satisfies
\[
 r\geq m+\log D+\log(1/\delta)-1.
\]
If $0<\delta\leq\varepsilon\leq1/2$, then
\[
 t=\Omega\!\left(\varepsilon^{-2}\log(D/\delta)\right).
\]
\end{lemma}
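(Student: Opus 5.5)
We argue both bounds by constructing explicit bad tests. Diagonal matrix-valued functions let us run $D$ scalar tests at once, since the operator norm of a diagonal matrix is its largest diagonal entry in absolute value. For the randomness bound we use the strong (position-dependent) tests to force every seed in a large set to fail. For the sample bound we use random diagonal tests and an anticoncentration estimate.

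\emph{Randomness bound.} Suppose for contradiction that $2^r\delta<D2^{m-1}$. Choose disjoint sets $X_1,\ldots,X_D\subseteq\{0,1\}^r$ with $|X_c|\leq2^{m-1}$ for every $c$ and $|\bigcup_cX_c|>\delta2^r$. This is possible: if $2^r\leq D2^{m-1}$, simply partition all seeds, giving $|\bigcup_cX_c|=2^r>\delta2^r$. For each $c$ and $i$ let $P_{c,i}=\{\Samp(x,i):x\in X_c\}$, so $|P_{c,i}|\leq2^{m-1}$. Define the diagonal test $f_i(z)$ whose $c$th diagonal entry is $+1$ if $z\in P_{c,i}$ and $-1$ otherwise; then $\norm{f_i(z)}\leq1$. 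For $x\in X_c$, the $c$th entry of $\frac1t\sum_if_i(\Samp(x,i))$ equals $1$. The $c$th entry of $\frac1t\sum_i\E_Zf_i(Z)$ equals $\frac1t\sum_i(2|P_{c,i}|2^{-m}-1)\leq0$. Hence every seed in $\bigcup_cX_c$ has deviation at least $1>\varepsilon$ in operator norm. This contradicts the sampler guarantee, so $2^r\geq D2^{m-1}/\delta$, which is the claimed $r\geq m+\log D+\log(1/\delta)-1$. Here the strong property is what allows $P_{c,i}$ to depend on $i$.

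\emph{Sample bound.} Now let $0<\delta\leq\varepsilon\leq1/2$. Use ordinary tests $f=\operatorname{diag}(g_1,\ldots,g_D)$, where $g_1,\ldots,g_D:\{0,1\}^m\to\{\pm1\}$ are independent uniformly random functions; we may first pass to $m$ large by ignoring extra bits. Fix a seed $x$. The empirical average $\frac1t\sum_ig_c(\Samp(x,i))$ is $\sum_zw_zg_c(z)$ with weights $w_z\geq0$ summing to $1$ and $\sum_zw_z^2\geq1/t$. The true mean $\E_Zg_c(Z)$ concentrates at $0$ with fluctuations $2^{-m/2}$, which are negligible. The main technical step is a lower bound $\Pr[|\sum_zw_zg_c(z)|>2\varepsilon]\geq p:=\exp(-C\varepsilon^2t)$ for a weighted Rademacher sum. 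We would prove it by the standard reverse-Chernoff argument: group the weights into blocks of comparable mass and apply a Paley--Zygmund or Berry--Esseen estimate to the dominant block. This is where $\varepsilon\leq1/2$ is used, and it is the step we expect to be the main obstacle in making the constants clean.

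Given this estimate, the coordinates are independent, so seed $x$ is good for all $D$ coordinates with probability at most $(1-p)^D$. In expectation, the fraction of bad seeds is therefore at least $1-(1-p)^D\geq\min\{1/2,Dp/2\}$, and we fix functions $g_c$ attaining at least this fraction. The sampler guarantee, together with $\delta\leq1/2$ (after shrinking constants), forces $Dp\leq2\delta$. This gives $C\varepsilon^2t\geq\log(D/(2\delta))$, that is, $t=\Omega(\varepsilon^{-2}\log(D/\delta))$.
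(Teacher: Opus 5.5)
Your randomness bound is correct and is essentially the paper's argument: $D$ disjoint seed blocks of size at most $2^{m-1}$, with position-dependent diagonal $\pm1$ indicators of their images. You leave the case $2^r>D2^{m-1}$ implicit (take $D$ blocks of size exactly $2^{m-1}$), but it is immediate.

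The sample bound has a genuine gap: it never uses the strong property, and for small $m$ it cannot work without it.

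\emph{Why ordinary tests fail.} You use only ordinary tests $f=\operatorname{diag}(g_1,\ldots,g_D)$. If the argument were valid, it would also bound ordinary matrix samplers, and there the bound is false for small $m$. The seedless map sending $(x,i)$ to the $i$th element of $\{0,1\}^m$, with $t=2^m$, reproduces $\E_Zf(Z)$ exactly for every ordinary test, for all $D,\delta,\varepsilon$.

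\emph{Where your argument breaks.} The failure is the claim that $\E_Zg_c(Z)$ is negligible. The actual deviation is $\sum_z(w_z-2^{-m})g_c(z)$, whose weights can vanish. Decoupling the empirical average from the true mean requires $2^m$ to be large compared with $t$.

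\emph{The reduction goes the wrong way.} Your remedy ``pass to $m$ large by ignoring extra bits'' does not help. Ignoring extra bits turns a sampler on $\{0,1\}^{m'}$ into one on $\{0,1\}^m$, so it transfers lower bounds from small $m$ to large $m$, not conversely. Moreover, no map from $\{0,1\}^m$ into a larger cube pushes the uniform measure to the uniform measure. Your route can be completed when $2^m$ is large compared with $\varepsilon^{-2}\log(D/\delta)$, but not in general.

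\emph{How the paper handles it.} The paper uses position-dependent random tests $g_i(z)=(-1)^{z_1}\operatorname{diag}(\xi_{i1},\ldots,\xi_{iD})$ with independent uniform signs $\xi_{ic}$. Then $\E_Zg_i(Z)=0$ exactly for every $m\geq1$. For each fixed seed, the diagonal entries of the empirical average are independent copies of $S_t/t$, where $S_t$ is an unweighted sum of $t$ uniform signs, whatever collisions occur among the $\Samp(x,i)$. This also removes your weighted anticoncentration step. What remains is the implication $p_t\leq\varepsilon\Rightarrow\log(1/p_t)=O(\varepsilon^2t)$. The paper proves it by first excluding $\varepsilon^2t<1/64$ via Paley--Zygmund, then requiring each of $O(\varepsilon^2t)$ blocks of size $\lfloor1/(4\varepsilon^2)\rfloor$ to exceed its share.

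\emph{A smaller issue.} Some such preliminary step is needed in any case, because your estimate $\Pr[\cdot]\geq\exp(-C\varepsilon^2t)$ is false when $\varepsilon^2t$ is small. For example, $t=2$ with distinct samples gives probability exactly $1/2$.
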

\begin{proof}Partition a set of $\min\{M,Dq\}$ seeds into $D$ sets
$B_1,\ldots,B_D$, each of size
$\lfloor\min\{M/D,q\}\rfloor$ or $\lceil\min\{M/D,q\}\rceil$. For each  $i\in[t]$, define a diagonal test $f_i$ by
\[
 A_{ij}=\{\Samp(x,i):x\in B_j\},\qquad
 (f_i(z))_{jj}=2\mathbf1_{A_{ij}}(z)-1\quad(j\in[D]).
\]
Each $f_i(z)$ has norm one. Since $|A_{ij}|\leq|B_j|\leq2^{m-1}$, every seed $x\in B_j$ satisfies
\[
 \bigl(f_i(\Samp(x,i))-\E_{Z\sim U_m}f_i(Z)\bigr)_{jj}
 =2\left(1-\frac{|A_{ij}|}{2^m}\right)\geq1.
\]
Thus, for every $x\in\bigcup_{j=1}^D B_j$,
\[
 \left\|
 \frac1t\sum_{i=1}^t
 \left(f_i(\Samp(x,i))-\mathbb E_{Z\sim U_m}f_i(Z)\right)
 \right\|
 \geq 1.
\]
Since $\varepsilon<1$, every seed in $\bigcup_j B_j$ causes
the sampling error to exceed $\varepsilon$. Hence
\[
\begin{aligned}
 \min\left\{1,\frac{Dq}{M}\right\}
 &=\frac{|\bigcup_j B_j|}{M}\\
 &\leq
 \Pr_{X\sim U_r}\left[
  \left\|
   \frac1t\sum_{i=1}^t
   \left(f_i(\Samp(X,i))-\mathbb E_{Z\sim U_m}f_i(Z)\right)
  \right\|>\varepsilon
 \right]\\
 &\leq\delta,
\end{aligned}
\]
where the last inequality follows from the sampler guarantee.
Since $\delta<1$, we obtain $Dq/M\leq\delta$.
Thus $M\geq Dq/\delta$. Taking logarithms gives $r\geq m+\log D+\log(1/\delta)-1$.

\emph{Sample complexity.}
Assume $0<\delta\leq\varepsilon\leq1/2$.
Let $S_s$ be a sum of $s$ independent uniform signs, and put
$p_t=\Pr[|S_t|>\varepsilon t]$.
We first show that, for an absolute constant $C$,
\begin{equation}\label{eq:sign-tail-lower}
 p_t\leq\varepsilon
 \quad\Longrightarrow\quad
 \log(1/p_t)\leq C\varepsilon^2t.
\end{equation}

Suppose $p_t\leq\varepsilon$.
The moment bounds $\E S_s^2=s$ and $\E S_s^4\leq3s^2$
give, by Paley--Zygmund and symmetry,
\begin{equation}\label{eq:sign-block-tail}
 \Pr[S_s>\varepsilon s]
 =\frac12\Pr[|S_s|>\varepsilon s]
 \geq\frac3{32}
 \qquad\text{if }\varepsilon^2s\leq\frac14.
\end{equation}
If $\varepsilon^2t<1/64$, then $\varepsilon<1/8$ and
$p_t\geq3/16$, contradicting $p_t\leq\varepsilon$.
Hence $\varepsilon^2t\geq1/64$.

Put $\ell=\lfloor1/(4\varepsilon^2)\rfloor$ and partition
the $t$ signs into
$v=\lceil t/\ell\rceil\leq8\varepsilon^2t+1=O(\varepsilon^2t)$
blocks of size at most $\ell$.
Requiring every block sum to exceed $\varepsilon$ times its
size gives $p_t\geq(3/32)^v$, by independence
and~\eqref{eq:sign-block-tail}.
Therefore
\[
 \log(1/p_t)\leq v\log(32/3)=O(\varepsilon^2t).
\]

We now apply~\eqref{eq:sign-tail-lower} to the sampler.
Choose independent uniform signs $\xi_{ij}\in\{-1,1\}$
for $i\in[t]$ and $j\in[D]$.
For $z=(z_1,\ldots,z_m)$, define
\[
 g_i(z)=(-1)^{z_1}
 \operatorname{diag}(\xi_{i1},\ldots,\xi_{iD}).
\]
Every $g_i$ satisfies $\norm{g_i(z)}=1$ and
$\E_{Z\sim U_m}g_i(Z)=0$.
For each fixed sampler seed $x$, the signs
$(-1)^{\Samp(x,i)_1}\xi_{ij}$ remain independent and uniform.
Thus the $D$ diagonal entries of
$t^{-1}\sum_i g_i(\Samp(x,i))$ are independent copies of $S_t/t$.

For each fixed $x$, the probability over $\xi$ that at least one
diagonal entry exceeds $\varepsilon$ in absolute value is
$1-(1-p_t)^D$.
For each fixed $\xi$, the sampler guarantee bounds the failure
probability over $X\sim U_r$ by $\delta$.
Averaging over $X$ and $\xi$ gives
\begin{equation}\label{eq:strong-sampler-signs}
 1-(1-p_t)^D\leq\delta,
 \qquad
 p_t\leq\frac{-\ln(1-\delta)}{D}
 \leq\frac{2\delta}{D}.
\end{equation}
The first inequality also gives
$p_t\leq\delta\leq\varepsilon\leq1/2$.
Combining~\eqref{eq:strong-sampler-signs}
and~\eqref{eq:sign-tail-lower}, we obtain
\[
 \log(D/\delta)
 \leq1+\log(1/p_t)
 \leq2\log(1/p_t)
 \leq2C\varepsilon^2t.
\]
Therefore $t=\Omega(\varepsilon^{-2}\log(D/\delta))$.
\end{proof}

%For ordinary samplers, enumerating the entire domain gives zero error without using any randomness.

%% file: ack.tex
\section{Acknowledgement}
\paragraph{AI use}
We initiated the topics and questions of this work without using AI. The question came to us when we survey the two interesting lines of work, i.e. quantum-proof extractors and matrix samplers. We would like to find the corresponding sampler for quantum-proof extractors and the corresponding extractor for samplers. We also had some general ideas of improving those existing constructions before using AI. Then we discussed with AI and used it substantially to develop the solutions and proofs as well as to do proof checking. The write-up is done by us humans, though we also use AI to polish sentences. We are responsible for the correctness. 

%% file: refer.tex
\par
\bibliographystyle{alpha}
\bibliography{refer}

%% file: appendices/smooth_min_entropy.tex
\section{Proofs for Smooth Min-Entropy}\label{app:smooth-proofs}

\begin{proof}[Proof of Lemma~\ref{lem:smooth-classical-leakage}]
Put $k=\hmin^\tau(X\mid E)_\rho$. By~\eqref{eq:smooth-min-entropy}, choose a subnormalized cq state $\widetilde\rho_{XE}$ such that
\[
 P(\widetilde\rho_{XE},\rho_{XE})\leq\tau,
 \qquad \hmin(X\mid E)_{\widetilde\rho}=k.
\]
By the extension property of purified distance~\cite[Corollary~9]{TCR},
there is a subnormalized extension $\widetilde\rho_{XLE}$ of
$\widetilde\rho_{XE}$ satisfying
\[
 \Tr_L\widetilde\rho_{XLE}=\widetilde\rho_{XE},
 \qquad
 P(\widetilde\rho_{XLE},\rho_{XLE})
 =P(\widetilde\rho_{XE},\rho_{XE})\leq\tau.
\]
Delete the off-diagonal blocks in the classical bases of $X,L$. Dephasing preserves the marginal $\widetilde\rho_{XE}$ and leaves $\rho_{XLE}$ unchanged. Contractivity gives $P(\widetilde\rho_{XLE},\rho_{XLE})\leq\tau$ after dephasing.

The classical leakage inequality~\eqref{eq:chain}, stated in~\cite[Lemma~B.3]{DPVR}, also holds for subnormalized states. Normalizing a state of trace $p$ adds $\log p$ to both conditional min-entropies. Applying~\eqref{eq:chain} to $\widetilde\rho_{XLE}$ gives
\begin{align*}
 \hmin^\tau(X\mid LE)_\rho
 &\geq\hmin(X\mid LE)_{\widetilde\rho}\\
 &\geq\hmin(X\mid E)_{\widetilde\rho}-\log|\mathcal L|\\
 &=k-\log|\mathcal L|.
\end{align*}
\end{proof}

\begin{proof}[Proof of Lemma~\ref{lem:smooth-extraction}]
For $\tau>1/4$, the error bound in Lemma~\ref{lem:smooth-extraction} follows from the upper bound of one on trace distance. Assume $0\leq\tau\leq1/4$. Choose a subnormalized cq state $\widetilde\rho_{XE}$ with $P(\widetilde\rho_{XE},\rho_{XE})\leq\tau$ and $\hmin(X\mid E)_{\widetilde\rho}\geq k+1$. Let $p=\Tr\widetilde\rho_{XE}$. Then $p\geq1-\tau^2\geq15/16$. The normalized state $\widehat\rho_{XE}=\widetilde\rho_{XE}/p$ therefore satisfies
\[
 \hmin(X\mid E)_{\widehat\rho}
 =\hmin(X\mid E)_{\widetilde\rho}+\log p\geq k,
 \qquad
 \dist(\widehat\rho_{XE},\rho_{XE})
 \leq\tau+\frac{1-p}{2}\leq2\tau.
\]
Apply $\Ext$ to $\widehat\rho_{XE}$ with the independent seed $Y$. As in the smoothing argument of~\cite[Lemma~3.5]{DPVR}, contractivity and the triangle inequality give
\begin{align*}
 &\dist\!\left(\rho_{\Ext(X,Y)YE},
       \tau_m\otimes\tau_d\otimes\rho_E\right)\\
 &\qquad\leq
 \dist\!\left(\widehat\rho_{\Ext(X,Y)YE},
       \tau_m\otimes\tau_d\otimes\widehat\rho_E\right)
 +2\dist(\widehat\rho_{XE},\rho_{XE})
 \leq\eta+4\tau.
\end{align*}
\end{proof}

%% file: appendices/sampling_extraction_tools.tex
\section{Derivations of the Sampling and Extraction Tools}\label{app:sampling-extraction-tools}

\subsection{Entropy sampling with a public seed}
\begin{theorem}[Vadhan's sampler; {\cite[Lemma~8.4]{Vad04}}]\label{thm:vadhan-distinct-sampler}
There is an absolute constant $C>0$ with the following property. Fix an integer $N\geq1$ and parameters $0<\varepsilon<\beta<1$ and $0<\delta\leq1/2$. For every integer $C\varepsilon^{-2}\log(1/\delta)\leq t\leq N$, there is an explicit map
\[
 S:\{0,1\}^{r}\to\binom{[N]}t,
 \qquad r\leq\log(N/t)+\poly(1/\varepsilon)\log(1/\delta).
\]
For every $f:[N]\to[0,1]$ with $N^{-1}\sum_{j=1}^N f(j)\geq\beta$, a uniform seed $R\sim U_r$ satisfies
\[
 \Pr_R\!\left[\frac1t\sum_{j\in S(R)}f(j)<\beta-\varepsilon\right]\leq\delta.
\]
\end{theorem}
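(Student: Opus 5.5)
The statement is quoted from \cite[Lemma~8.4]{Vad04}, so within the paper we would simply invoke it. If we wanted a self-contained argument, the plan is to build a sampler whose $t$ outputs are \emph{distinct by construction}. Then the only real work is the concentration bound, and the seed budget splits into a $\log(N/t)$ part and a $\poly(1/\varepsilon)\log(1/\delta)$ part.

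\emph{Step 1: force distinctness.} Pad $[N]$ to a multiple of $t$; padding changes the average of $f$ by only $O(t/N)$, which we absorb into $\varepsilon$ when $t\leq\varepsilon N/2$, and handle the case $t>\varepsilon N/2$ separately. Identify $[N]$ with $[t]\times[N/t]$ (rows $j$, columns $c$). Output the set $\{(j,c_j):j\in[t]\}$ for a column sequence $c_1,\ldots,c_t\in[N/t]$. These points are automatically distinct because their rows differ. Writing $f_j(c)=f(j,c)$, the task becomes a \emph{strong} scalar sampling problem: estimate $\frac1t\sum_j\E_c f_j(c)$ by $\frac1t\sum_j f_j(c_j)$, with one-sided error $\varepsilon$.

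\emph{Step 2: generate the columns cheaply.} Let $t_1=O(\varepsilon^{-2}\log(1/\delta))$.
\begin{itemize}
\item Group the $t$ rows into $t_1$ consecutive groups of $t/t_1$ rows.
\item Choose the group-level randomness by a walk of length $t_1$ on an explicit constant-degree expander on $[N/t]$. This costs $\log(N/t)+O(t_1)$ bits.
\item Within group $g$, set $c_j=v_g\oplus\pi(j)$, where $v_g$ is the $g$th walk vertex and $\pi$ is a pairwise-independent shift drawn with $O(\log(1/\varepsilon))$ fresh bits per group. Alternatively, use a single hash reused across groups to keep the budget at $\poly(1/\varepsilon)\log(1/\delta)$.
\end{itemize}
For each group, Chebyshev with pairwise independence shows that the group average is within $\varepsilon/2$ of the mean of a bounded function $\phi_g(v_g)$, except with constant probability. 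The expander Chernoff bound for time-dependent functions (Gillman/Healy style \cite{Gill98,Healy08}) then controls $\frac1{t_1}\sum_g\phi_g(v_g)$, with failure probability $\delta$.

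\emph{Main obstacle.} The hard step is the interface between the two layers. The pairwise-independent inner samplers give only constant-confidence guarantees per group, and these "bad group" events are correlated through the shared walk. Our first attempt is to treat each group's deviation as a bounded $[0,1]$-valued function of the combined state (walk vertex together with inner seed), and to run the expander Chernoff bound on the product graph. This needs the inner seeds to be drawn from a walk as well, which makes the inner-seed cost the source of the $\poly(1/\varepsilon)$ factor in $r$.

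If that fails, the fallback is to replace the inner layer by an extractor-based averaging sampler in the sense of Zuckerman~\cite{Zuc97}, applied to groups of size $\poly(1/\varepsilon)$. This again yields the $\poly(1/\varepsilon)\log(1/\delta)$ seed term, and leaves the one-sided tail bound at level $\beta-\varepsilon$ as the final calculation.
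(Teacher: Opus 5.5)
Citing \cite[Lemma~8.4]{Vad04} is exactly what the paper does. The lemma is quoted without proof and used only to derive Corollary~\ref{cor:distinct-samples}, so for the paper nothing more is needed.

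Your self-contained sketch, however, does not close. It breaks at the interface you flag. Pairwise-independent shifts on the $s=t/t_1$ rows of a group need a sample space of size at least $s+1$; this is the standard lower bound for pairwise-independent sample spaces. If the shifts are uniform on $[N/t]$, the sample space has size at least $N/t$. So the shifts cannot cost $O(\log(1/\varepsilon))$ bits once $s$ is large. This rules out each variant:
\begin{itemize}
\item Fresh shifts in every group cost at least $t_1\log(s+1)$ bits.
\item A single reused shift leaves, conditional on the walk, only pairwise independence across the $t$ rows. Your analysis then yields only a Chebyshev tail $O(1/(\varepsilon^2t))$, which is far above $\delta$ when $t$ is close to $C\varepsilon^{-2}\log(1/\delta)$.
\item Running the walk on the product with the inner-seed space needs a (near-)stationary starting inner seed. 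That is at least $\log(s+1)$ further bits, for a total of at least $\log(N/t_1)$. For $t=\sqrt N$ and fixed $\varepsilon,\delta$, this exceeds $\log(N/t)+O_\varepsilon(\log(1/\delta))$ by about $\frac12\log N$.
\end{itemize}
So none of these versions meets $r\le\log(N/t)+\poly(1/\varepsilon)\log(1/\delta)$, and the extractor-based fallback is not specified enough to check.

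The inner layer is in fact unnecessary, because the rows already make the samples distinct. Here is the fix:
\begin{itemize}
\item Work on the grid $[t]\times[L]\subseteq[N]$ with $L=\lfloor N/t\rfloor$.
\item Take a stationary walk $v_1,\ldots,v_{t_1}$ on an explicit constant-degree expander on $[L]$.
\item Give every row $j$ of group $g$ the same column $c_j=v_g$.
\end{itemize}
Group $g$ then contributes exactly $\phi_g(v_g)$, where $\phi_g(c)=|g|^{-1}\sum_{j\in g}f(j,c)\in[0,1]$, so there is no inner deviation to control. Since each $v_g$ is uniform on $[L]$, the mean of $\sum_g(|g|/t)\phi_g(v_g)$ is the average of $f$ over the grid, which is at least $\beta-1/L$.

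With group sizes $\lfloor t/t_1\rfloor$ or $\lceil t/t_1\rceil$, the weights satisfy $|g|t_1/t\in[1/2,2]$. The time-dependent expander Chernoff bound with $t_1=O(\varepsilon^{-2}\log(1/\delta))\leq t$ steps then gives failure probability at most $\delta$. It uses $\log L+O(\varepsilon^{-2}\log(1/\delta))$ random bits, which is the claimed bound whenever $L\geq2/\varepsilon$.

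What remains is the regime $L<2/\varepsilon$ (roughly $t>\varepsilon N/2$) that your Step~1 defers. There the grid can miss a constant fraction of $[N]$, so that case needs its own argument. One option is to use rows that partition all of $[N]$ and pick about $1/\varepsilon$ points per row as a randomly rotated, evenly spaced pattern. This cuts the rounding loss to $O(\varepsilon)$ at a cost of $O(\log(1/\varepsilon))$ bits in this regime.
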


\begin{corollary}[Distinct samples]\label{cor:distinct-samples}
Fix $0<\varepsilon<1$. There is a constant $C_\varepsilon>0$ with the following property. Let $N\geq1$ be an integer and $0<\delta\leq1/2$. For every integer sample size $C_\varepsilon\log(1/\delta)\leq t\leq N$, there is an explicit map $S:\{0,1\}^{r}\to\binom{[N]}t$. The seed length is $r=O_\varepsilon(\log N+\log(1/\delta))$. For a uniform seed $R\sim U_r$ and every $f:[N]\to[0,1]$,
\begin{equation}\label{eq:distinct-sampling-guarantee}
 \Pr_R\!\left[\frac1t\sum_{j\in S(R)}f(j)
       \leq\frac1N\sum_{j=1}^N f(j)-\varepsilon\right]\leq\delta.
\end{equation}
\end{corollary}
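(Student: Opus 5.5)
We derive Corollary~\ref{cor:distinct-samples} from Theorem~\ref{thm:vadhan-distinct-sampler}. The only work is in choosing $\beta$ and in handling functions whose mean is too small for that theorem to apply. Given $f:[N]\to[0,1]$, write $\mu_f=N^{-1}\sum_j f(j)$. The event in~\eqref{eq:distinct-sampling-guarantee} is $\frac1t\sum_{j\in S(R)}f(j)\leq\mu_f-\varepsilon$. If $\mu_f<\varepsilon$, the right-hand side is negative while the empirical average is nonnegative, so the event is empty and there is nothing to prove. If $\mu_f\geq\varepsilon$, we would like to apply Theorem~\ref{thm:vadhan-distinct-sampler} with $\beta=\mu_f$. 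That theorem needs $\varepsilon'<\beta$ and a strict inequality in its conclusion, so we invoke it with accuracy $\varepsilon'=\varepsilon/2$. The event $\text{avg}\leq\mu_f-\varepsilon$ is contained in $\text{avg}<\mu_f-\varepsilon/2$, and $\varepsilon/2<\varepsilon\leq\mu_f$ gives $\varepsilon'<\beta$. We also need $\beta<1$. If $\mu_f=1$, then $f\equiv1$, the empirical average equals $1>1-\varepsilon$, and the event is empty. Otherwise $\beta<1$ holds.

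The map $S$ in Theorem~\ref{thm:vadhan-distinct-sampler} depends only on $N,t,\varepsilon',\delta$, not on $\beta$. We would confirm this from~\cite[Lemma~8.4]{Vad04}; if the construction did depend on $\beta$, we would discretize $\beta$ to multiples of $\varepsilon/4$ and use accuracy $\varepsilon/4$ instead. Since $S$ is independent of $\beta$, one fixed explicit map $S$, built with accuracy $\varepsilon/2$ and failure probability $\delta$, works simultaneously for every $f$.

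For the parameters, set $C_\varepsilon=4C\varepsilon^{-2}$, so that $t\geq C_\varepsilon\log(1/\delta)$ meets the sample-size requirement $t\geq C(\varepsilon/2)^{-2}\log(1/\delta)$. The seed length is then $r\leq\log(N/t)+\poly(2/\varepsilon)\log(1/\delta)=O_\varepsilon(\log N+\log(1/\delta))$. If an integral seed length or padding is needed, extra seed bits can be ignored without changing the distribution of $S(R)$. Explicitness is inherited directly from Theorem~\ref{thm:vadhan-distinct-sampler}.

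The main obstacle is the uniformity issue just described: whether Vadhan's map is independent of $\beta$. The edge cases are otherwise routine.
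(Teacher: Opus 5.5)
Your reduction has a genuine gap at exactly the point you flag as the main obstacle. As stated in Theorem~\ref{thm:vadhan-distinct-sampler}, the threshold $\beta$ is fixed before the map $S$ is produced. Applying it ``with $\beta=\mu_f$'' therefore gives a map $S_{\mu_f}$ that may depend on $f$, whereas the corollary needs one map $S$ that works for every $f$ simultaneously. Your sentence ``Since $S$ is independent of $\beta$'' is an unverified assumption about the internals of~\cite[Lemma~8.4]{Vad04}. It is not a consequence of the theorem you are invoking.

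The proposed fallback does not repair this. Discretizing $\beta$ to multiples of $\varepsilon/4$ still leaves you with about $4/\varepsilon$ different maps $S_{\beta'}$, one per grid point. Nothing in your argument merges them into a single sampler whose output is one $t$-element subset.

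The missing idea is to normalize the test rather than vary the threshold. With $\overline f=N^{-1}\sum_j f(j)$, set $g(j)=(f(j)-\overline f+1)/2$. Then $0\leq g\leq1$, and $g$ has mean exactly $1/2$. The bad event $\frac1t\sum_{j\in S(R)}f(j)\leq\overline f-\varepsilon$ is identical to $\frac1t\sum_{j\in S(R)}g(j)\leq\frac12-\frac\varepsilon2$. This event is contained in $\frac1t\sum_{j\in S(R)}g(j)<\frac12-\frac\varepsilon4$.

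So a single instance of Theorem~\ref{thm:vadhan-distinct-sampler}, with $\beta=1/2$ and accuracy $\varepsilon/4$, handles every $f$. This instance is valid since $\varepsilon/4<1/2$. It gives $C_\varepsilon=16C\varepsilon^{-2}$ and the same seed bound. This is the paper's proof, and it also makes your case analysis for $\mu_f<\varepsilon$ and $\mu_f=1$ unnecessary.
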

\begin{proof}
Choose the sampler in Theorem~\ref{thm:vadhan-distinct-sampler} with threshold $\beta=1/2$ and accuracy $\varepsilon/4$. For any test $f$, put $\overline f=N^{-1}\sum_{j=1}^N f(j)$ and define $g(j)=(f(j)-\overline f+1)/2$. Then $0\leq g\leq1$, $N^{-1}\sum_jg(j)=1/2$, and
\begin{align*}
 \Pr_R\!\left[\frac1t\sum_{j\in S(R)}f(j)\leq\overline f-\varepsilon\right]
 &=\Pr_R\!\left[\frac1t\sum_{j\in S(R)}g(j)\leq\frac12-\frac\varepsilon2\right]\\
 &\leq\Pr_R\!\left[\frac1t\sum_{j\in S(R)}g(j)<\frac12-\frac\varepsilon4\right]
 \leq\delta.
\end{align*}
The sample-size and seed-length bounds in Theorem~\ref{thm:vadhan-distinct-sampler} give the stated parameters.
\end{proof}

The entropy-sampling theorem below uses trace-norm smoothing as defined in~\eqref{eq:trace-smooth-min-entropy}.

\begin{theorem}[K\"onig--Renner entropy sampling; {\cite[Corollary~6.19]{KR}}]\label{thm:quantum-entropy-sampling}
Let $N,w,t\geq1$ be integers with $t\leq N$. Fix $0<\varepsilon<1$ and $0<\delta\leq1/2$. Let $\rho_{XE}$ be a normalized cq state with $X=(X_1,\ldots,X_N)\in(\{0,1\}^w)^N$. Let $J$ be a random $t$-element subset of $[N]$, independent of $XE$. Suppose
\[
 \Pr_J\!\left[\frac1t\sum_{j\in J}f(j)
       \leq\frac1N\sum_{j=1}^N f(j)-\varepsilon\right]\leq\delta
 \qquad\text{for every }f:[N]\to[0,1].
\]
Put $B=X_J$, with the sampled symbols in increasing coordinate order. Set
\[
 \kappa=\frac{N}{tw},\qquad s=2\cdot2^{-\varepsilon Nw}+3\delta^{1/4}.
\]
If $\kappa\leq0.15$, then
\begin{equation}\label{eq:kr-entropy-sampling}
 \frac{\hmin^{s,\|\cdot\|_1}(B\mid JE)_\rho}{wt}
 \geq\frac{\hmin(X\mid E)_\rho}{Nw}
             -3\varepsilon-2\kappa\log(1/\kappa).
\end{equation}
\end{theorem}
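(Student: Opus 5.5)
The plan is to obtain this statement by translating \cite[Corollary~6.19]{KR} into the present notation, and to check that the translation is faithful by retracing the structure of K\"onig and Renner's argument. The translation has three parts. K\"onig--Renner state their result for a sampler $J$ that satisfies a one-sided averaging guarantee for every $f:[N]\to[0,1]$; this is exactly the hypothesis displayed above, which Corollary~\ref{cor:distinct-samples} supplies. Their smoothing is in trace norm, which matches $\hmin^{s,\|\cdot\|_1}$ from~\eqref{eq:trace-smooth-min-entropy}. Their block-length penalty is the function of $N/(tw)$ written here as $2\kappa\log(1/\kappa)$. I would check each of these identifications against their normalization: their entropy rate per bit against $\hmin(X\mid E)/(Nw)$, and their error term against $s=2\cdot2^{-\varepsilon Nw}+3\delta^{1/4}$.

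To make sure the constants are right, I would reconstruct the proof in four steps.

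\begin{enumerate}
\item \emph{Normalize the source.} Use~\eqref{eq:entropy-domination} to fix $\omega_E$ with $\rho_E^x\preceq2^{-k}\omega_E$, where $k=\hmin(X\mid E)$. This reduces the problem to an essentially classical weight function on strings $x=(x_1,\ldots,x_N)$.
\item \emph{Split the entropy across symbols.} Use the quantum chain rule symbol by symbol, as in K\"onig--Renner's min-entropy splitting. For all but a $2^{-\varepsilon Nw}$-weight set of strings, this gives per-symbol conditional entropies $f(j)\in[0,w]$ with $\sum_jf(j)\gtrsim k-\varepsilon Nw$. Discarding the exceptional strings is a trace-norm perturbation and produces the $2\cdot2^{-\varepsilon Nw}$ term.
\item \emph{Apply the sampler.} Run the averaging guarantee on $f/w$. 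Off an event of probability $\delta$ over $J$, this gives $\sum_{j\in J}f(j)\geq t\bigl(k/N-\varepsilon w\bigr)$. Moving from a statement for most $J$ and most $x$ to a single smoothed state requires a Markov-type step, which costs the $3\delta^{1/4}$ term.
\item \emph{Reassemble the sampled block.} Recombine the per-symbol entropies into the min-entropy of $B=X_J$ given $JE$. Discretizing the entropy levels in each $w$-bit symbol needs about $\log\binom{\cdot}{\cdot}$ bits of bookkeeping, which normalizes to $2\kappa\log(1/\kappa)$ per sampled bit. The extra $2\varepsilon$ in the final $3\varepsilon$ comes from this rounding and from step~2.
\end{enumerate}

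The main obstacle is the splitting in step~2. With quantum $E$, the per-symbol entropies must be defined conditionally on $E$ and on earlier symbols, and the sum $\sum_{j\in J}$ over a sampled subset must still dominate the min-entropy of $X_J$, even though conditioning on unsampled symbols is not available. This is exactly the technical core of \cite{KR}. I would not redo it; I would invoke their Corollary~6.19 directly and confine my own work to verifying the hypotheses ($t\leq N$, $\kappa\leq0.15$, the sampler condition) and the parameter identification above.
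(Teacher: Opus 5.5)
This matches the paper: the paper gives no derivation of this theorem and simply restates \cite[Corollary~6.19]{KR} in its own notation (one-sided averaging-sampler hypothesis, trace-norm smoothing $\hmin^{s,\|\cdot\|_1}$, $\kappa=N/(tw)$), which is exactly what you propose to do. Your four-step reconstruction is therefore not needed, and its assignment of the individual error terms to specific steps is heuristic rather than part of any argument in the paper.
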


\begin{proof}[Proof of Lemma~\ref{lem:public-entropy-sampling}]
Choose $S$ from Corollary~\ref{cor:distinct-samples} and put $J=S(R)$ and $s=2\cdot2^{-\varepsilon Nw}+3\delta^{1/4}$. To replace $J$ by $R$ in the conditioning register, define the classical channel
\[
 \Phi(\ket j\bra j)
 =\sum_{r:S(r)=j}\Pr[R=r\mid J=j]\ket r\bra r
 \qquad(j\in\supp(J)).
\]
Since $R$ is independent of $XE$ and $B=X_J$, we have
\[
 \rho_{BE\mid R=r}=\rho_{BE\mid J=S(r)},
 \qquad \Phi_{J\to R}(\rho_{BJE})=\rho_{BRE}.
\]
Applying $\Phi$ to the conditioning register cannot decrease the smooth min-entropy of $B$: trace distance is contractive, and conditional min-entropy satisfies data processing. Since $2s\leq\tau^2$, equation~\eqref{eq:smoothing-conversion} and Theorem~\ref{thm:quantum-entropy-sampling} give
\begin{align*}
 \hmin^\tau(B\mid RE)_\rho
 &\geq\hmin^{s,\|\cdot\|_1}(B\mid RE)_\rho\\
 &\geq\hmin^{s,\|\cdot\|_1}(B\mid JE)_\rho\\
 &\geq\bigl(\mu-3\varepsilon-2\kappa\log(1/\kappa)\bigr)wt.
\end{align*}
\end{proof}

\subsection{Short-seed extraction}
\begin{theorem}[Short-seed hashing; {\cite[Theorem~10]{LHL}}]\label{thm:tssr-short-seed}
For every pair of integers $1\leq m\leq n$ and every $0<\eta_0\leq1/2$, there is an explicit map
\[
 H:\{0,1\}^n\times\{0,1\}^d\to\{0,1\}^m,
 \qquad d\leq2m+O(\log(n/\eta_0)).
\]
For every normalized cq state $\rho_{XE}$ with $X\in\{0,1\}^n$ and an independent uniform seed $W\sim U_d$, put $Z=H(X,W)$. Then
\begin{equation}\label{eq:tssr-short-seed}
 \min_{\sigma_{WE}}
 \dist\!\left(\rho_{ZWE},\tau_m\otimes\sigma_{WE}\right)
 \leq3\eta_0+\frac12
 \sqrt{2^{m-\hmin^{\eta_0}(X\mid E)_\rho}(2/\eta_0^2+1)},
\end{equation}
where the minimum ranges over normalized states on $WE$.
\end{theorem}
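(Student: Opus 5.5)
The statement is a citation of \cite[Theorem~10]{LHL}. Here is how I would reprove it from two ingredients: a composed hash family that is \emph{almost} two-universal, and the leftover hash lemma for such families together with smoothing.

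\emph{Construction.} Put $m'=m+\lceil 2\log(1/\eta_0)\rceil+\lceil\log n\rceil+2$, and split the seed as $W=(W_1,W_2)$.
\begin{itemize}
\item The seed $W_1\in\F_{2^{m'}}$ drives a polynomial-evaluation hash $f_{W_1}:\{0,1\}^n\to\F_{2^{m'}}$. Cut $x$ into $\lceil n/m'\rceil$ field elements, read them as the coefficients of a polynomial $q_x$, and output $q_x(W_1)$.
\item Distinct $x,x'$ give a nonzero difference polynomial of degree below $n/m'$. Hence $\Pr[f_{W_1}(x)=f_{W_1}(x')]\leq n2^{-m'}$.
\item The seed $W_2\in\F_{2^{m'}}$ drives the two-universal hash $g_{W_2}(u)=\pi_m(W_2u)$ used in the proof of Theorem~\ref{thm:design}. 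This map is two-universal onto $\{0,1\}^m$.
\item Set $H(x;W)=g_{W_2}(f_{W_1}(x))$. The seed length is $2m'=2m+O(\log(n/\eta_0))$. The field is built with Shoup's algorithm~\cite{Shoup90}, so $H$ is explicit.
\end{itemize}

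\emph{Collision bound.} For $x\neq x'$, a collision of $H$ requires either a collision of $f$, or two distinct $f$-values that then collide under $g$. Therefore
\[
\Pr_W[H(x;W)=H(x';W)]\leq 2^{-m}+n2^{-m'}\leq 2^{-m}\bigl(1+\eta_0^2/4\bigr).
\]
So the family is $\delta$-almost two-universal with $2^m\delta-1\leq\eta_0^2/4$.

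\emph{Hashing against a smoothed source.} Choose a subnormalized $\widetilde\rho_{XE}$ with $P(\widetilde\rho,\rho)\leq\eta_0$ that attains $\hmin^{\eta_0}(X\mid E)_\rho$. Apply the collision argument of Lemma~\ref{thm:collision} with $\xi=n2^{-m'}$, in its subnormalized form, with the reference operator coming from~\eqref{eq:entropy-domination}. That is, run the collision-form quantum leftover hash lemma~\cite[Theorem~5.5.1]{Ren05} and~\cite[Lemma~5]{LHL} relative to a normalized state $\sigma_{WE}$ rather than the true marginal. This gives
\[
\dist\bigl(\widetilde\rho_{ZWE},\tau_m\otimes\sigma_{WE}\bigr)\leq\tfrac12\sqrt{2^{m-\hmin(X\mid E)_{\widetilde\rho}}+ (2^m\delta-1)\cdot 2^{m}\cdots}.
\]
After the Cauchy--Schwarz step, the $\delta$-excess term becomes the factor $2/\eta_0^2+1$ multiplying $2^{m-\hmin^{\eta_0}}$. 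This happens after bounding the cross term by $\eta_0^{-2}$ times the collision mass; the bookkeeping of that constant is what fixes $m'$ above. Finally, return to $\rho$. The purified-to-trace bound~\eqref{eq:purified-trace} costs $2\eta_0$, and renormalization contributes at most $\eta_0$ more, since $\Tr\widetilde\rho\geq1-\eta_0^2$. Contractivity under applying $H$ then gives the additive $3\eta_0$.

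\emph{Main obstacle.} I expect the main difficulty to be this leftover hash step for an almost-universal family in the subnormalized, smoothed setting. One has to track the reference state $\sigma_{WE}$, which is why the statement minimizes over $\sigma_{WE}$, and obtain exactly the factor $2/\eta_0^2+1$ rather than a looser one. I would follow the proof of~\cite[Theorem~10]{LHL} there. The collision bound and the seed-length count are routine.
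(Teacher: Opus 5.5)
\paragraph{Verdict: genuine gap in the hashing step.}
The paper gives no proof of this statement. It imports it from \cite[Theorem~10]{LHL}, remarking only that the family is polynomial evaluation followed by two-universal hashing. Your construction matches that remark, and your collision bound $2^m\delta-1\leq\eta_0^2/4$ and seed count are correct. The gap is the hashing step, which is the entire content of the theorem. Your displayed inequality is left incomplete, and the mechanism you describe would not produce the bound. The almost-universality excess $2^m\delta-1$ is fixed by the construction, independently of the source, and is of order $\eta_0^2$. It therefore cannot be bounded by $2^{m-\hmin}\cdot2/\eta_0^2$ once $\hmin$ is large, so no bookkeeping turns it into a factor multiplying $2^{m-\hmin}$. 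Weighting Cauchy--Schwarz by a reference $\sigma_E$ other than the marginal $\widetilde\rho_E$ makes things worse. The excess is then multiplied by $\Tr(\widetilde\rho_E\sigma_E^{-1/2}\widetilde\rho_E\sigma_E^{-1/2})$, and the domination $\widetilde\rho^x_E\preceq2^{-\hmin}\sigma_E$ bounds this only by $2^{n-\hmin}$. You also do not need to track $\sigma_{WE}$: the minimum in~\eqref{eq:tssr-short-seed} only weakens the claim, and $\sigma_{WE}=\rho_{WE}$ suffices.

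\paragraph{How to repair it.}
Keep the marginal as the weight, exactly as in Lemma~\ref{thm:collision}. Take $\widetilde\rho_{XE}$ (nearly) attaining $\hmin^{\eta_0}(X\mid E)_\rho$, and put $p=\Tr\widetilde\rho_E\leq1$ and $A=\widetilde\rho_{ZWE}-\tau_m\otimes\tau_d\otimes\widetilde\rho_E$. Apply Cauchy--Schwarz with weight $I_Z\otimes\tau_d\otimes\widetilde\rho_E$. Expand the collision term over pairs $x,x'$:
\begin{itemize}
\item the terms $\Tr(\widetilde\rho^x_E\widetilde\rho_E^{-1/2}\widetilde\rho^{x'}_E\widetilde\rho_E^{-1/2})$ are nonnegative and sum to $p$;
\item the diagonal terms form a pretty-good-measurement success probability, so by~\eqref{eq:entropy-domination} they sum to at most $2^{-\hmin(X\mid E)_{\widetilde\rho}}$.
\end{itemize}
This gives
\[
 \tfrac12\norm{A}_1
 \leq\tfrac12\sqrt{2^mp\bigl(2^{-\hmin(X\mid E)_{\widetilde\rho}}+(\delta-2^{-m})p\bigr)}
 \leq\tfrac12\sqrt{2^{m-\hmin(X\mid E)_{\widetilde\rho}}}+\frac{\eta_0}{4}.
\]
So the excess enters additively, and that additive term is precisely what your choice of $m'$ controls. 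No renormalization is needed. The maps $\rho_{XE}\mapsto\rho_{ZWE}$ and $\rho_{XE}\mapsto\tau_m\otimes\tau_d\otimes\rho_E$ are channels, so contractivity of the trace norm together with~\eqref{eq:purified-trace} costs $\norm{\widetilde\rho_{XE}-\rho_{XE}}_1\leq2\eta_0$ in total. Hence
\[
 \dist(\rho_{ZWE},\tau_m\otimes\rho_{WE})\leq\tfrac94\eta_0+\tfrac12\sqrt{2^{m-\hmin^{\eta_0}(X\mid E)_\rho}},
\]
which implies~\eqref{eq:tssr-short-seed} because $2/\eta_0^2+1\geq1$.
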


\begin{proof}[Proof of Lemma~\ref{lem:short-hash}]
Apply Theorem~\ref{thm:tssr-short-seed} with $\eta_0=\eta/16$. Fix a normalized cq source $\rho_{XE}$ and write $k=\hmin(X\mid E)_\rho$. Take an independent uniform seed $W\sim U_d$ and put $Z=H(X,W)$. Let $\sigma_{WE}$ attain the minimum in~\eqref{eq:tssr-short-seed}. Tracing out $Z$ gives
\[
 \dist(\rho_{WE},\sigma_{WE})
 \leq\dist(\rho_{ZWE},\tau_m\otimes\sigma_{WE}).
\]
The triangle inequality gives
\begin{align*}
 \dist(\rho_{ZWE},\tau_m\otimes\rho_{WE})
 &\leq\dist(\rho_{ZWE},\tau_m\otimes\sigma_{WE})
       +\dist(\tau_m\otimes\sigma_{WE},\tau_m\otimes\rho_{WE})\\
 &=\dist(\rho_{ZWE},\tau_m\otimes\sigma_{WE})
       +\dist(\sigma_{WE},\rho_{WE})\\
 &\leq2\dist(\rho_{ZWE},\tau_m\otimes\sigma_{WE}).
\end{align*}
Since $\hmin^{\eta_0}(X\mid E)_\rho\geq k$, equation~\eqref{eq:tssr-short-seed} now yields
\[
 \dist(\rho_{ZWE},\tau_m\otimes\rho_{WE})
 \leq6\eta_0+\sqrt{2^{m-k}(2/\eta_0^2+1)}\leq\eta
\]
whenever $k\geq m+4\log(1/\eta)+12$. The seed $W$ is uniform and independent of $XE$, so $\rho_{WE}=\tau_d\otimes\rho_E$, as required for strong extraction. The seed bound is $d\leq2m+O(\log(n/\eta))$.

The hash family in Theorem~\ref{thm:tssr-short-seed} consists of polynomial evaluation followed by two-universal hashing. Polynomial evaluation and two-universal hashing run in polynomial time. The required finite field can be constructed in fixed characteristic using~\cite{Shoup90}.
\end{proof}